%% For double-blind review submission, w/o CCS and ACM Reference (max submission space)
\documentclass[acmsmall]{acmart}
\settopmatter{printfolios=true,printccs=false,printacmref=false}
%% For double-blind review submission, w/ CCS and ACM Reference
%\documentclass[acmsmall,review,anonymous]{acmart}\settopmatter{printfolios=true}
%% For single-blind review submission, w/o CCS and ACM Reference (max submission space)
%\documentclass[acmsmall,review]{acmart}\settopmatter{printfolios=true,printccs=false,printacmref=false}
%% For single-blind review submission, w/ CCS and ACM Reference
%\documentclass[acmsmall,review]{acmart}\settopmatter{printfolios=true}
%% For final camera-ready submission, w/ required CCS and ACM Reference
%\documentclass[acmsmall]{acmart}\settopmatter{}

%% Journal information
%% Supplied to authors by publisher for camera-ready submission;
%% use defaults for review submission.
\acmJournal{PACMPL}
\acmVolume{1}
\acmNumber{CONF} % CONF = POPL or ICFP or OOPSLA
\acmArticle{1}
\acmYear{2018}
\acmMonth{1}
\acmDOI{} % \acmDOI{10.1145/nnnnnnn.nnnnnnn}
\startPage{1}

%% Copyright information
%% Supplied to authors (based on authors' rights management selection;
%% see authors.acm.org) by publisher for camera-ready submission;
%% use 'none' for review submission.
\setcopyright{none}
%\setcopyright{acmcopyright}
%\setcopyright{acmlicensed}
%\setcopyright{rightsretained}
%\copyrightyear{2018}           %% If different from \acmYear

%% Bibliography style
\bibliographystyle{ACM-Reference-Format}
%% Citation style
%% Note: author/year citations are required for papers published as an
%% issue of PACMPL.
\citestyle{acmauthoryear}   %% For author/year citations

%%%%%%%%%%%%%%%%%%%%%%%%%%%%%%%%%%%%%%%%%%%%%%%%%%%%%%%%%%%%%%%%%%%%%%
%% Note: Authors migrating a paper from PACMPL format to traditional
%% SIGPLAN proceedings format must update the '\documentclass' and
%% topmatter commands above; see 'acmart-sigplanproc-template.tex'.
%%%%%%%%%%%%%%%%%%%%%%%%%%%%%%%%%%%%%%%%%%%%%%%%%%%%%%%%%%%%%%%%%%%%%%

%% Some recommended packages.
\usepackage{booktabs}   %% For formal tables:
                        %% http://ctan.org/pkg/booktabs
\usepackage{subcaption} %% For complex figures with subfigures/subcaptions
                        %% http://ctan.org/pkg/subcaption

\usepackage{stmaryrd} 
\usepackage{proof}
\usepackage{amssymb}
\usepackage{amsmath}
\usepackage{mathtools}
\usepackage{amsthm}
\usepackage{mathpartir}
\usepackage{color}
\usepackage{xstring}
\usepackage{ntabbing}
\usepackage{listings}
\usepackage{varwidth}
\usepackage{enumitem}

%Potential annotations
\newlength{\rWidth}

\newcommand{\funtype}[1]{%
    {\settowidth{\rWidth}{\ensuremath{#1}}%
        \;\ensuremath{{\xrightarrow{\hspace{\rWidth}}\hspace{-0.84\rWidth}}\!\!\!^%
         {#1}%{\BehindSubString{,}{#1} / \BeforeSubString{,}{#1}}%
         \hspace{0.2\rWidth}\;\;}}}

%% Notation
\newcommand{\m}[1]{\mathsf{#1}}
\newcommand{\mb}[1]{\mathbf{#1}}

%% Configuration

\newcommand{\esync}{\; \m{esync}}

%% Modes
\newcommand{\s}{\m{S}}
\renewcommand{\l}{\m{L}}
\renewcommand{\c}{\m{T}}
\newcommand{\p}{\m{R}}

\newcommand{\lang}[1]{\mathbf{L}(#1)}

%% Contexts and Typing Judgment
\newcommand{\W}{\Omega}
\newcommand{\Sg}{\Sigma}
\newcommand{\xvdash}[1]{%
  \vdash^{\mkern-8mu\scriptstyle\rule[-.9ex]{0pt}{0pt}#1}%
}
\newcommand{\xVdash}[1]{%
  \Vdash^{\mkern-8mu\scriptstyle\rule[-.9ex]{0pt}{0pt}#1}%
}

\newcommand{\potconf}[1]{\overset{#1}{\vDash}}

\newcommand{\D}{\Delta}
\renewcommand{\G}{\Gamma}

\newcommand{\plin}{\; \m{purelin}}

%% Operational Semantics Predicates
\newcommand{\proc}[2]{\m{proc}(#1, #2)}
\newcommand{\msg}[2]{\m{msg}(#1, #2)}

%% Semantics
\newcommand{\step}{\mapsto}

\newcommand{\fresh}[1]{(#1 \text{ fresh})}

%% Expressions Semantics
\newcommand{\val}{\; \m{val}}

%% Expressions
\newcommand{\lam}[3]{\lambda #1 : #2 . M_x}
\newcommand{\inl}[1]{l \cdot #1}
\newcommand{\inr}[1]{r \cdot #1}
\newcommand{\case}[3]{\m{case} \; #1 \; (l \hookrightarrow #2, r \hookrightarrow #3)}
\newcommand{\pair}[2]{\left\langle #1, #2 \right\rangle}
\newcommand{\projl}[1]{#1 \cdot l}
\newcommand{\projr}[1]{#1 \cdot r}
\newcommand{\match}[4]{\m{match} \; #1 \; ([] \rightarrow #2, #3 \rightarrow #4)}
\newcommand{\eproc}[3]{\{#1 \leftarrow #2 \leftarrow #3\}}

%% Proof Terms
\newcommand{\ecase}[3]{\m{case} \; #1 \; (#2 \Rightarrow #3)}

\newcommand{\erecvch}[2]{#2 \leftarrow \m{recv} \; #1}

\newcommand{\esendch}[2]{\m{send} \; #1 \; #2}

\newcommand{\ewait}[1]{\m{wait} \; #1}

\newcommand{\eclose}[1]{\m{close} \; #1}

\newcommand{\fwd}[2]{#1 \leftarrow #2}
\newcommand{\fwdp}[2]{#1 \overset{+}{\leftarrow} #2}
\newcommand{\fwdn}[2]{#1 \overset{-}{\leftarrow} #2}
\newcommand{\esendl}[2]{#1.#2}

\newcommand{\ecut}[4]{#1 \leftarrow #2 \leftarrow #3 \semi #4}

\newcommand{\etick}[1]{\m{tick} \; (#1)}
\newcommand{\ework}[1]{\m{work} \; \{#1\}}
\newcommand{\eget}[2]{\m{get} \; #1 \; \{#2\}}
\newcommand{\epay}[2]{\m{pay} \; #1 \; \{#2\}}
\newcommand{\procdef}[3]{#3 \leftarrow #1 \leftarrow #2}
\newcommand{\procdefna}[2]{#2 \leftarrow #1}
\newcommand{\casedef}[1]{\m{case} \; #1}
\newcommand{\labdef}[1]{#1 \Rightarrow}

\newcommand{\eif}[1]{\m{if} \; (#1)}
\newcommand{\ethen}{\; \m{then} \; }
\newcommand{\eelse}{\m{else} \; }

%% Type Constructors
\newcommand{\lolli}{\multimap}
\newcommand{\tensor}{\otimes}
\newcommand{\with}{\mathbin{\binampersand}}

\newcommand{\one}{\mathbf{1}}

\newcommand{\semi}{\; ; \;}
\newcommand{\ichoiceop}{\oplus}
\newcommand{\echoiceop}{\with}
\newcommand{\ichoice}[1]{\ichoiceop \{ #1 \}}
\newcommand{\echoice}[1]{\echoiceop \{ #1 \}}

\newcommand{\mi}[1]{\mbox{\it #1}}

\newcommand{\arrow}{\to}
\newcommand{\product}{\times}

%% Functional Types
\newcommand{\tproc}[2]{\{#1 \leftarrow #2\}}

%% Types with Potential
\newcommand{\pot}[2]{#1^{#2}}

\newcommand{\tensorpot}[1]{\overset{#1}{\tensor}}

\newcommand{\entailpot}[1]{\xvdash{#1}}
\newcommand{\exppot}[1]{\xVdash{#1}}
\newcommand{\paypot}{\triangleright}
\newcommand{\getpot}{\triangleleft}
\newcommand{\tgetpot}[2]{\getpot^{#2} #1}
\newcommand{\tpaypot}[2]{\paypot^{#2} #1}
\newcommand{\bigeval}[3]{#1 \Downarrow #2 \mid #3}
\newcommand{\share}{\curlyveedownarrow}

%% Temporal Types

\newcommand{\tlist}[1]{\m{list}_{#1}}
\newcommand{\plist}[2]{L^{#2}(#1)}

\newcommand{\Next}{\raisebox{0.3ex}{$\scriptstyle\bigcirc$}}
\renewcommand{\next}[1]{\Next #1}
\newcommand{\tdelay}[2]{
    \IfEqCase{#2}{%
        {1}{\next{#1}}%
        % you can add more cases here as desired
    }[{\Next^{#2} (#1)}]%
}%

%% Indices

%% Syntactic Sugar

\newcommand{\dc}{\mathcal{D}}

%% Smart Contracts

\newcommand{\tint}{\m{int}}

\newcommand{\tbool}{\m{bool}}
\newcommand{\lc}{\tlist{\m{coin}}}
\newcommand{\auction}{\m{auction}}

\newcommand{\plcoin}{\m{plcoin}}
\newcommand{\plcoins}{\m{plcoins}}
\newcommand{\plcoinsn}{pl coins}

%% Typing Judgments for Servers and Clients

\newcommand{\B}[1]{\textcolor{blue}{#1}}
\newcommand{\R}[1]{\textcolor{red}{#1}}

%% Sharing
\newcommand{\down}{\downarrow^{\m{S}}_{\m{L}}}
\newcommand{\up}{\uparrow^{\m{S}}_{\m{L}}}
\newcommand{\eacquire}[2]{#1 \leftarrow \m{acquire} \; #2}
\newcommand{\eaccept}[2]{#1 \leftarrow \m{accept} \; #2}
\newcommand{\erelease}[2]{#1 \leftarrow \m{release} \; #2}
\newcommand{\edetach}[2]{#1 \leftarrow \m{detach} \; #2}

%% Subtyping

%% Latex
\newtheorem{theorem}{Theorem}
\newtheorem{definition}{Definition}
\newtheorem{lemma}{Lemma}

%%Global Semantics

%% Channel typing

%% Types to Processes

%% Solidity
\definecolor{verylightgray}{rgb}{.97,.97,.97}

\lstdefinelanguage{Solidity}{
	keywords=[1]{anonymous, assembly, assert, balance, break, call, callcode, case, catch, class, constant, continue, constructor, contract, debugger, default, delegatecall, delete, do, else, emit, event, experimental, export, external, false, finally, for, function, gas, if, implements, import, in, indexed, instanceof, interface, internal, is, length, library, log0, log1, log2, log3, log4, memory, modifier, new, payable, pragma, private, protected, public, pure, push, require, return, returns, revert, selfdestruct, send, solidity, storage, struct, suicide, super, switch, then, this, throw, transfer, true, try, typeof, using, value, view, while, with, addmod, ecrecover, keccak256, mulmod, ripemd160, sha256, sha3}, % generic keywords including crypto operations
	keywordstyle=[1]\color{blue}\bfseries,
	keywords=[2]{address, bool, byte, bytes, bytes1, bytes2, bytes3, bytes4, bytes5, bytes6, bytes7, bytes8, bytes9, bytes10, bytes11, bytes12, bytes13, bytes14, bytes15, bytes16, bytes17, bytes18, bytes19, bytes20, bytes21, bytes22, bytes23, bytes24, bytes25, bytes26, bytes27, bytes28, bytes29, bytes30, bytes31, bytes32, enum, int, int8, int16, int24, int32, int40, int48, int56, int64, int72, int80, int88, int96, int104, int112, int120, int128, int136, int144, int152, int160, int168, int176, int184, int192, int200, int208, int216, int224, int232, int240, int248, int256, mapping, string, uint, uint8, uint16, uint24, uint32, uint40, uint48, uint56, uint64, uint72, uint80, uint88, uint96, uint104, uint112, uint120, uint128, uint136, uint144, uint152, uint160, uint168, uint176, uint184, uint192, uint200, uint208, uint216, uint224, uint232, uint240, uint248, uint256, var, void, ether, finney, szabo, wei, days, hours, minutes, seconds, weeks, years},	% types; money and time units
	keywordstyle=[2]\color{teal}\bfseries,
	keywords=[3]{block, blockhash, coinbase, difficulty, gaslimit, number, timestamp, msg, data, gas, sender, sig, value, now, tx, gasprice, origin},	% environment variables
	keywordstyle=[3]\color{violet}\bfseries,
	identifierstyle=\color{black},
	sensitive=false,
	comment=[l]{//},
	morecomment=[s]{/*}{*/},
	commentstyle=\color{gray}\ttfamily,
	stringstyle=\color{red}\ttfamily,
	morestring=[b]',
	morestring=[b]"
}

%%% Local Variables:
%%% mode: plain-tex
%%% TeX-master: "pldi19"
%%% End:

\newcommand{\secref}[1]{Section~\ref{#1}}

\begin{document}

%% Title information
\title[Resource-Aware Session Types]{Resource-Aware
Session Types for Digital Contracts}         %% [Short Title] is optional;
                                        %% when present, will be used in
                                        %% header instead of Full Title.
%\titlenote{with title note}             %% \titlenote is optional;
                                        %% can be repeated if necessary;
                                        %% contents suppressed with 'anonymous'
\subtitle{Technical Report}                     %% \subtitle is optional
%\subtitlenote{with subtitle note}       %% \subtitlenote is optional;
                                        %% can be repeated if necessary;
                                        %% contents suppressed with 'anonymous'

%% Author information
\author{Ankush Das}
\affiliation{
  \institution{Carnegie Mellon University}            %% \institution is required
}
\email{ankushd@cs.cmu.edu}          %% \email is recommended

%% Author with single affiliation.
\author{Stephanie Balzer}
\affiliation{
  \institution{Carnegie Mellon University}            %% \institution is required
}
\email{balzers@cs.cmu.edu}          %% \email is recommended

%% Author with single affiliation.
\author{Jan Hoffmann}
\affiliation{
  \institution{Carnegie Mellon University}            %% \institution is required
}
\email{jhoffmann@cmu.edu}          %% \email is recommended

%% Author with single affiliation.
\author{Frank Pfenning}
\affiliation{
  \institution{Carnegie Mellon University}            %% \institution is required
}
\email{fp@cs.cmu.edu}

\author{Ishani Santurkar}
\affiliation{
  \institution{Carnegie Mellon University}            %% \institution is required
}
\email{ivs@andrew.cmu.edu} 

%% Abstract
%% Note: \begin{abstract}...\end{abstract} environment must come
%% before \maketitle command
\begin{abstract}

  Programming digital contracts comes with unique challenges, which include \textit{(i)}
  expressing and enforcing protocols of interaction, \textit{(ii)} controlling resource usage,
  and \textit{(iii)} preventing the duplication or deletion of a contract's assets.
  This article presents the design and type-theoretic foundation of
  \emph{Nomos}, a programming language for digital contracts
  that addresses these challenges.
  To express and enforce protocols, Nomos is based on \emph{shared binary session types}.
  To control resource usage, Nomos employs \emph{automatic amortized resource analysis}.
  To prevent the duplication or deletion of assets, Nomos uses a \emph{linear type system}.
  A monad integrates the effectful session-typed language
  with a general-purpose functional language.
  Nomos' \emph{prototype implementation} features
  \emph{linear-time type checking} and efficient type reconstruction that
  includes automatic \emph{inference of resource bounds} via off-the-shelf linear optimization.
  The effectiveness of the language is evaluated with case studies
  about implementing common smart contracts such as auctions,
  elections, and currencies.
  Nomos is completely formalized, including the type system, a cost
  semantics, and a transactional semantics to instantiate Nomos contracts
  on a blockchain.
  The type soundness proof ensures that protocols are followed at
  run-time and that types establish sound upper bounds on the resource
  consumption, ruling out re-entrancy and out-of-gas vulnerabilities,
  respectively.
\end{abstract}

\lstset{basicstyle=\ttfamily, columns=fullflexible}

%% 2012 ACM Computing Classification System (CSS) concepts
%% Generate at 'http://dl.acm.org/ccs/ccs.cfm'.
\begin{CCSXML}
<ccs2012>
<concept>
<concept_id>10011007.10011006.10011008</concept_id>
<concept_desc>Software and its engineering~General programming languages</concept_desc>
<concept_significance>500</concept_significance>
</concept>
<concept>
<concept_id>10003456.10003457.10003521.10003525</concept_id>
<concept_desc>Social and professional topics~History of programming languages</concept_desc>
<concept_significance>300</concept_significance>
</concept>
</ccs2012>
\end{CCSXML}

\ccsdesc[500]{Software and its engineering~General programming languages}
\ccsdesc[300]{Social and professional topics~History of programming languages}
%% End of generated code

%% Keywords
%% comma separated list
%\keywords{session types, resource analysis, smart contracts} 
%% \keywords are mandatory in final camera-ready submission

%% \maketitle
%% Note: \maketitle command must come after title commands, author
%% commands, abstract environment, Computing Classification System
%% environment and commands, and keywords command.
\maketitle

\section{Introduction}
\label{sec:intro}

Digital contracts are programs that implement the execution of a contract.
With the rise of blockchains and
cryptocurrencies such as Bitcoin~\cite{satoshi-bitcoin}, 
Ethereum~\cite{ethereum}, and Tezos~\cite{tezos-white}, digital
contracts have become popular in the form of smart contracts, which
provide potentially distrusting parties with programmable money and a
distributed consensus
mechanism.  Smart contracts are used to implement auctions~\cite{Auction-Solidity},
investment instruments~\cite{TheDAO2016}, insurance agreements \cite{initiative2008b3i},
supply chain management~\cite{law2017smart}, and mortgage loans~\cite{morabito2017smart}.
They hold the promise to lower cost, increase fairness, and expand access to the financial
infrastructure.  However, like all software, smart contracts can contain bugs and security
vulnerabilities~\cite{AtzeiBC17}, which have direct financial consequences. A well-known
example is the attack on The DAO~\cite{TheDAO2016}, resulting in a \$60 million dollar
theft by exploiting a contract vulnerability.

Today's contract languages are typically derived from existing general-purpose languages like
JavaScript (Ethereum's Solidity~\cite{Auction-Solidity}), Go (Hyperledger
project~\cite{cachin2016architecture}), or OCaml (Tezos' Liquidity~\cite{Liquidity}), which
fail to accommodate the domain-specific requirements of digital contracts.  These requirements
are: \textit{(i)} expressing and enforcing protocols of interaction, \textit{(ii)} controlling
resource (or gas) usage, and \textit{(iii)} preventing duplication or discard of a contract's
assets.

In this article, we present the type-theoretic foundation of Nomos, a programming
language for digital contracts that accommodates the aforementioned requirements by
construction.

To express and enforce the protocols underlying a contract, Nomos is based on \emph{session
types} \cite{HondaCONCUR1993, HondaESOP1998, HondaPOPL2008}, in particular on the works
rooted in the Curry-Howard correspondence between linear logic and the session-typed
process calculus~\cite{CairesCONCUR2010, ToninhoESOP2013, PfenningFOSSACS2015,
WadlerICFP2012}. Session types capture protocols of interactions in the type, and
type-checking statically guarantees adherence to those protocols at run-time.
Session types make the core functionality of a contract and its intended interactions
with various parties explicit, rather than buried in implementation code. Delimiting the
sequences of actions that must be executed atomically, session types moreover prevent
interception of a contract in an inconsistent state, as is possible through re-entrancy
in some contract languages.

To control resource usage, Nomos employs and further develops \emph{automatic amortized
resource analysis (AARA)}, a type-based technique for automatically inferring symbolic
resource bounds~\cite{Jost03, Jost10, HoffmannAH10, HoffmannW15, CarbonneauxHRS17}.
AARA is parametric in the cost model, making it directly applicable to track gas cost of
Nomos contracts.  A unique feature of Nomos' resource-aware type system is that it allows
contracts to store gas in internal data structures to amortize the cost of resource
intensive transactions. Failure to support estimation of gas usage bares the risk of high
losses in case transactions fail due to dynamic out-of-gas exceptions and
makes contracts vulnerable to denial-of-service attacks.

To prevent duplication or deletion of assets, Nomos uses \emph{linearity}
\cite{GirardTCS1987},
which naturally arises from the Curry-Howard correspondence established between linear logic
and the session-typed process calculus~\cite{CairesCONCUR2010, PfenningFOSSACS2015,
  WadlerICFP2012}.  Accidental or malicious duplication and deletion is a source of major
concern in today's contract languages~\cite{meredith2015linear}.  To support the writing of
general-purpose programs, Nomos moreover complements the session-typed language with a
\emph{functional language}, using a contextual monad to shield expressions from effectful
processes.

We formalize Nomos by giving its type system and operational semantics and by proving type safety.  Integrating the seemingly disparate
approaches (session types, resource analysis, linearity, and functional programming) and
combining them with the different roles that arise in a digital contract (contract, asset,
transaction) in a way that the result remains consistent, presents unique challenges.  For
one, both the functional as well as session-typed language use potential annotations to
predict the resource consumption, which requires care when functional values are exchanged as
messages between processes. For another, prior work on integrating shared and
linear session types~\cite{BalzerICFP2017} preclude contracts from persisting their linear
assets across transactions, a feature essential to digital contract development; a
restriction that we lift in this work. Fundamental is the use of different forms of
\emph{typing judgments} for expressions and processes along with \emph{judgmental modes}
to distinguish the different roles in a digital contract.  The modes are essential in
ensuring type safety, as they allow the expression of mode-indexed invariants on the
typing contexts and their enforcement by the typing rules.

A challenge in Nomos' design was the sound integration of session types,
resource analysis, linearity, and functional programming so that type
checking is linear in the size of the program and resource bounds can
efficiently inferred with an off-the-shelf LP
solver. Efficient type checking is particularly important if
type-checking is part of contract validation and can be used for denial-of-service attacks.

To evaluate Nomos, we implemented a publicly available
open-source prototype~\cite{NomosGithub} and conducted 8 case studies
implementing common smart contracts such as auctions, elections, and
currencies. Our experiments show that type-checking overhead
is less than $0.7$
ms for each contract and bound inference (needed once at
deployment) takes less than $10$ ms. Moreover, gas bounds are
tight for most contracts.

Our contributions are:

\begin{itemize}

  \item design of Nomos, a language that addresses the domain-specific requirements of digital contracts by
  construction;
  
  \item a fine-tuned system of typing judgments (\secref{sec:sharing}) that uses \emph{modes}
    to orchestrate the sound integration of session types (\secref{sec:session}), functions
    (\secref{sec:fun}), and resource analysis (\secref{sec:resource});
  
  \item extension of shared session types to support linear assets;
  
  \item resource cost amortization by allowing gas storage in internal data structures (\secref{sec:resource});
    
  \item type safety proof of Nomos using a novel asynchronous cost semantics (\secref{sec:sound});
  
  \item a prototype implementation and case study of prominent blockchain applications (\secref{sec:study});
  
  \item a transactional semantics to instantiate Nomos contracts and transactions on a blockchain (\secref{sec:model}).
  
  \end{itemize}

In addition, \secref{sec:overview} provides an overview of Nomos' main features based
on an example. \secref{sec:model} discusses known limitations. \secref{sec:related}
reviews related work, and \secref{sec:conclusion} concludes this article with future
directions. The appendix formalizes the complete language with
typing rules, cost semantics, and the type safety theorem and proof. It also shows
the implementation of all the smart contract applications used in the main paper.

%%% Local Variables:
%%% mode: latex
%%% mode: flyspell
%%% TeX-master: "icfp19"
%%% End:

\section{Nomos by Example}
\label{sec:overview}

%\jan{1.5 pages\\
%1) explicit protocols\\
%2) no re-entrance\\
%3) predicting cost\\
%4) linear resources (money)\\
%\\
%- auction example(type and code)\\
%- limitations (deadlocks, no contract-contract communication)
%}
% Nomos is a programming language for writing digital contracts based on
% resource-aware \cite{DasHP17} and shared~\cite{BalzerICFP2017} session types.
This section introduces the main features of Nomos using a simple auction
contract as an example.  The subsequent sections explain each feature in
technical detail.

% This section uses a simple auction contract to showcase the most significant features of the
% language.  The subsequent sections explain each feature in technical detail.

\paragraph{\textbf{\textit{Explicit Protocols of Interaction}}}

Digital contracts, like traditional contracts, follow a \emph{predefined protocol}. For
instance, an auction contract follows the protocol of a bidding phase where bidders submit
their bids to the auctioneer (possibly multiple times), followed by a collection phase
where the highest bidder receives
the lot while all other bidders receive their bids back. In existing smart contract languages,
like Solidity \cite{Auction-Solidity}, the bidding part of the auction is typically implemented
using the $\m{bid}$ function below. This function receives a bid ($\m{msg.value}$) from a
bidder ($\m{msg.sender}$) and adds it to their total previous bids, stored in the
$\m{bidValue}$ hash map.
\begin{lstlisting}[basicstyle=\small\ttfamily, language=Solidity,
    columns=fullflexible]
    function  bid() public payable {
      bidder = msg.sender; bid = msg.value;
      bidValue[bidder] = bidValue[bidder] + bid; }
\end{lstlisting}
The above code does not guarantee that a bid can only be placed in the bidding phase.  To
enforce this constraint, we can introduce a state variable, such as $\m{status}$, to track the
different phases of a contract.  Using this variable we can guard the above code block with the
precondition
\begin{lstlisting}[basicstyle=\small\ttfamily, language=Solidity,
    columns=fullflexible]
    require (status == running)
\end{lstlisting}
checking whether the auction is still running and thus accepts bids.  The precondition is
checked at run-time, aborting the execution if the condition is not met.  It is the
responsibility of the programmer to define state variables, update them,
and corresponding guard functions.

Rather than burying the contract's interaction protocol in implementation code using state
variables and run-time checks, in Nomos, protocols can be expressed explicitly using a
\emph{session type}.  Type-checking then makes sure that the program implements the protocol
defined by the session type.  The auction's protocol amounts to the below session
type:
\begin{tabbing}
$\auction = $
$\B{\up} \textcolor{red}{\getpot^{22}} \ichoice{$\=$\mb{running} : \echoice{$\=$
\mb{bid} : \m{id} \arrow \m{money} \lolli \B{\down} \auction,$
\hspace{2.2em}\= \% recv bid from client\\
\>\>$\mb{cancel} : \textcolor{red}{\paypot^{21}} \B{\down} \auction},$
\> \% client canceled \\
\>$\mb{ended} : \echoice{$\=$\mb{collect} : \m{id} \arrow$
\=$ \ichoice{$\=$\mb{won} :
\m{lot} \tensor \B{\down} \auction,$
\hspace{3em}\= \% client won\\
\>\>\>\>$\mb{lost} : \m{money} \tensor \R{\paypot^{7}} \B{\down} \auction},$
\> \% client lost \\
\>\>$\mb{cancel} : \textcolor{red}{\paypot^{21}} \B{\down} \auction}}$
\hspace{7.6em} \% client canceled
\end{tabbing}
We first focus on how the session type defines the main interactions of a contract with a
bidder, ignoring the operators $\up$, $\down$, $\getpot$, and $\paypot$ for now.  To
distinguish the main two states an auction can be in, the session type uses the internal choice
type constructor ($\oplus$), leading the contract to either send the label $\m{running}$ or
$\m{ended}$, depending on whether the auction still accepts bids or not.  Dual to an internal
choice is an external choice ($\with$), which leaves the choice to the client (i.e., bidder)
rather than the provider (i.e., contract).  For example, in case the auction is still running,
the client can choose between placing a bid (label $\m{bid}$) or backing out ($\m{cancel}$). If
the client chooses to place a bid, they have to indicate their identifier (type $\m{id}$),
followed by a payment (type $\m{money}$). Nomos session types allow transfer of both non-linear
values that can be duplicated or discarded (e.g. $\m{id}$), using the arrow ($\arrow$)
constructor, and linear
assets, using the linear implication ($\lolli$) constructor.  Using a linear type to
represent digital money
($\m{money}$) makes sure that such a value can neither be duplicated nor lost.  Should the
auction have ended, the client can choose to check their outcome (label $\m{collect}$) or back
out ($\m{cancel}$). In the case of $\m{collect}$, the auction will answer with either $\m{won}$
or $\m{lost}$.  In the former case, the auction will send the $\m{lot}$ (commodity being
auctioned, represented as a linear type), in the latter case, it will return the client's bid.
The linear product ($\tensor$) constructor is dual to $\lolli$ and denotes the transfer
of a linear
value from the contract to the client.  The $\auction$ type guarantees that a
client cannot collect during the $\m{running}$ phase, while they cannot bid during the
$\m{ended}$ phase.

Our discussion so far describes the interaction of one client with the auction, prescribing the
sequences of steps to be taken according to the protocol defined by the session type.  In
reality, however, an auction will have several clients.  Nomos uses a \emph{shared} session
type~\cite{BalzerICFP2017} to guarantee that bidders interact with the auction in mutual
exclusion from each other and that the sequences of actions are executed atomically.  To
demarcate the parts of the protocol that become a \emph{critical section}, the above session
type uses the $\up$ and $\down$ type modalities.  The $\up$ modality denotes the beginning of a
critical section, the $\down$ modality denotes its end.  Programmatically, $\up$ translates
into an \emph{acquire} of the auction session and $\down$ into the \emph{release} of the
session.  As indicated by the $\auction$ session type, acquire and release tend to be the begin
and end points of a session, framing the critical section that is described by a \emph{linear}
session type.

In Nomos, contracts are implemented by \emph{processes}, revealing
the concurrent, message-passing nature of session-typed languages. The
implementation below shows the process $\mi{run}$ representing the running
phase of the auction. It internally stores a linear hash map of bids $b :
\m{hashmap}_{\m{bid}}$ and a linear lot $l$ and offers on a shared channel
$sa : \auction$. The $\m{bid}$ session type (line~\ref{bid-tp}) can be queried
for the stored identifier and bid value, and is offered by a process (not
shown) that internally stores this identifier and money.
Line~\ref{bid-tp} shows the syntax for session type definitions.
\begin{ntabbing}
$\m{stype \; bid} = \echoice{\mb{addr} : \m{id} \product \m{bid},
\mb{val} : \m{money}}, \quad \m{stype \; bids} = \m{hashmap}_{\m{bid}}$
\label{bid-tp}\\
$(b : \m{bids}), (l : \m{lot}) \vdash
\mi{run} :: (sa : \auction)$ \label{run-decl}
\hspace{5em}\=\=\=\=\=\% \quad syntax for process declaration\\
\quad \= $\procdef{\mi{run}}{b \; l}{sa} =$ \label{run-def}
\>\% \quad syntax for process definition\\
\> \quad \= $\eaccept{la}{sa} \semi$ \label{run-accept}
\>\% \quad accept a client acquire request\\
\>\> $\esendl{la}{\m{running}} \semi$ \label{run-run}
\>\% \quad auction is running\\
\>\> $\casedef{la}$ \= $(\; \labdef{\mb{bid}}$
\= $\erecvch{la}{r} \semi$ \label{run-recv-id}
\>\% \quad receive identifier $r : \m{id}$\\
\>\>\>\> $\erecvch{la}{m} \semi$ \label{run-recv-money}
\>\% \quad receive bid $m : \m{money}$\\
\>\>\>\> $\edetach{sa}{la} \semi$ \label{run-detach}
\>\% \quad detach from client\\
\>\>\>\> $\procdef{\mi{addbid} \; r}{b \; m}{b'} \semi$ \label{run-addbid}
\>\% \quad store bid internally\\
\>\>\>\> $\procdef{\mi{check}}{b' \; l}{sa}$ \label{run-call}
\>\% \quad check if threshold reached\\
\>\>\> $\mid \labdef{\mb{cancel}}$
\= $\edetach{sa}{la} \semi$ \label{cancel-detach}
\>\% \quad detach from client\\
\>\>\>\> $\procdef{\mi{run}}{b \; l}{sa})$ \label{cancel-call}
\>\% \quad recurse
% \\
%\; $(b : \m{bids}),(w : \m{id}) \semi (M : \m{money}), (l : \m{lot}) \vdash$
%\label{end-decl}\\
%\hspace{12em}$\mi{end} :: (sa : \auction)$\\
%\quad \= $\procdef{\mi{end} \; b \; w}{M \; l}{sa} =$ \label{end-def}\\
%\> \quad \= $\eaccept{la}{sa} \semi$ \label{end-accept}\\
%\>\> $\esendl{la}{\m{ended}} \semi$ \label{end-end}\\
%\>\> $\casedef{la}$ \label{end-case}\\
%\>\> \quad \= $(\labdef{\mb{collect}}$
%\= $\erecvch{la}{r} \semi$ \label{end-recv-id}\\
%\>\>\>\> $\eif{w = r} \ethen ($ \label{end-comp}\\
%\>\>\>\>\quad \= $\esendl{la}{\m{won}} \semi$ \label{end-won}\\
%\>\>\>\>\> $\esendch{la}{l} \semi$ \label{end-lot}\\
%\>\>\>\>\> $\edetach{sa}{la} \semi$ \label{end-detach-if}\\
%\>\>\>\>\> $\procdef{\mi{end'} \; b \; w}{M}{sa})$ \label{end'-call}\\
%\>\>\>\> $\eelse ($ \label{end-else}\\
%\>\>\>\>\> $\esendl{la}{\m{lost}} \semi$ \label{end-lost}\\
%\>\>\>\>\> $(v, b') = \mi{findbid} \; b \; r \semi$ \label{end-find}\\
%\>\>\>\>\> $\esendl{M}{\m{subtract}} \semi$ \label{end-sub}\\
%\>\>\>\>\> $\esendch{M}{v} \semi$ \label{end-send-val}\\
%\>\>\>\>\> $\erecvch{M}{m} \semi$ \label{end-recv-money}\\
%\>\>\>\>\> $\esendch{la}{m} \semi$ \label{end-send-money}\\
%\>\>\>\>\> $\edetach{sa}{la} \semi$ \label{end-detach-else}\\
%\>\>\>\>\> $\procdef{\mi{end} \; b' \; w}{M \; l}{sa})$ \label{end-call}
\end{ntabbing}
The contract process first \emph{accepts} an acquire
request by a bidder (line~\ref{run-accept}) and then sends the message $\m{running}$
(line~\ref{run-run}) indicating the auction status.  It then waits for
the bidder's choice.  Should the bidder choose to make a bid, the process waits to
receive the
bidder's identifier (line~\ref{run-recv-id}) followed by money equivalent to the
bidder's bid (line \ref{run-recv-money}).  After this linear exchange, the process
leaves the critical section by issuing a \emph{detach} (line~\ref{run-detach}),
matching the bidder's release request.
Internally, the process stores the pair of the bidder's identifier and bid in the
data structure $\m{bids}$ (line~\ref{run-addbid}). The $\m{ended}$ protocol of the
contract is governed by a different process (not shown), responsible for
distributing the bids back to the clients.
%Process $\m{ended}$
%keeps in addition to this data structure also the identifier of the client who has won the
%auction ($w : \m{id}$), which it uses to send the winner the lot
%(lines~\ref{end-won} - \ref{end-lot}). Otherwise, if the bidder loses, the contract sends
%the message $\m{lost}$, followed by returning their bid (lines~\ref{end-lost} -
%\ref{end-send-money}). Also, $\m{bids}$ is updated to reflect this change
%(line~\ref{end-find}), so that a client is unable to withdraw their money twice. 
The contract transitions to the $\m{ended}$ state when
the number of bidders reaches a threshold (stored in auction). This is achieved
by the $\mi{check}$ process (line~\ref{run-call}) which checks if the threshold
has been reached and makes this transition, or calls $\mi{run}$ otherwise.

\paragraph{\textbf{\textit{Linear Assets}}}
Nomos integrates a linear type system that tracks the assets stored in a process. The type
system enforces that these assets are never duplicated, but only exchanged between
processes. Moreover, the type system forbids a process to terminate while it stores any linear
assets, preventing an asset from being discarded.  As an example, the auction contract treats
$\m{money}$ and $\m{lot}$ as linear assets, which is witnessed by the use of $\lolli$ and
$\tensor$ (type operators for linear exchange)
for their exchange in the $\auction$ session type.  In contrast, no provisions to
handle assets linearly exist in Solidity, allowing such assets to be created out of thin air,
or readily duplicated or discarded.  In the above $\m{bid}$ function, for instance, the
language does not prevent the programmer from writing $\m{bidValue[bidder] = bid}$ instead,
losing the bidder's previous bid.

\paragraph{\textbf{\textit{Re-Entrancy Vulnerabilities}}}
A contract function is re-entrant if, once called by an external user, it
can potentially be called again before the previous call is completed.
As an illustration,
consider the $\m{collect}$ function in Solidity of the auction contract
(on the left) where the funds are transferred to the bidder before the
hash map is updated to reflect this change.

\begin{minipage}{.5\textwidth}
  \begin{lstlisting}[basicstyle=\small\ttfamily, language=Solidity,
      columns=fullflexible]
function  collect() public payable {
  require (status == ended);
  bidder = msg.sender; bid = bidValue[bidder];
  bidder.send(bid); bidValue[bidder] = 0; }
  \end{lstlisting}
\end{minipage}
\hfill\vrule\vrule\hfill
\begin{minipage}{.4\textwidth}
  \begin{lstlisting}[basicstyle=\small\ttfamily, language=Solidity,
    columns=fullflexible]
function () payable {
  // 'auction' variable stores the
  // address to auction contract
  auction.collect(); }
  \end{lstlisting}
\end{minipage}

If a bidder creates a dummy contract with a function that calls $\m{collect}$
on the auction contract, it causes a re-entrant situation. The $\m{send}$
function on the left transfers execution control to the dummy contract, essentially
triggering an unnamed \emph{fallback} function (on the right) in the dummy
contract code base. The fallback function in turn calls $\m{collect}$ on the
auction leading to an infinite recursive call to $\m{collect}$,
depleting all funds from the auction. This vulnerability was exposed by the
infamous DAO attack \cite{TheDAO2016}, where \$60 million worth of funds were
stolen, and detecting them has since been critical~\cite{GrossmanCallback}.
The message-passing framework of session types eliminates this
vulnerability. While session types provide multiple clients access to a contract, the
acquire-release discipline ensures that clients interact with the contract in
mutual exclusion. To attempt re-entrancy, a bidder will need to acquire the auction
contract twice without releasing it, and the second acquire would fail to execute.

\paragraph{\textbf{\textit{Resource Cost}}}

Another important aspect of digital contracts is their \emph{resource usage}. The
state of
all the contracts is stored on the \emph{blockchain}, a distributed ledger which
records the history of all transactions. Executing a contract function, aka
\emph{transaction} and updating the blockchain state requires new blocks to be added
to the blockchain. In existing blockchains like Ethereum, this is done by
\emph{miners} who charge a fee based on the \emph{gas}
usage of the transaction, indicating the cost of its execution.
Precisely computing this cost is important because the sender of a transaction
must pay this fee to the miners. If the sender does not pay a sufficient amount,
the transaction will be rejected by the miners and the sender's fee is lost!

Nomos uses resource-aware session types~\cite{DasHP17} to statically analyze the resource cost
of a transaction. They operate by assigning an initial \emph{potential} to each process. This
potential is consumed by each operation that the process executes or can be transferred between
processes to share and amortize cost. The cost of each operation is defined by a cost
model. % Finally, the bounds produced by the resource-aware types are
% completely parametric in this cost model, making the technique extensible.
Resource-aware session types express the potential as part of the session type using the type
constructors $\getpot$ and $\paypot$. The $\getpot$ constructor prescribes that the client must send potential
to the contract, with the amount of potential indicated as a superscript.  Dually, the $\paypot$ constructor prescribes that the
contract must send potential to the client.  In case of the auction contract, we
require the client to pay potential for the operations that the contract
must execute, both while placing and collecting their bids.
If the cost model
assigns a cost of $1$ to each contract operation, then the maximum cost
of an auction session is $22$ (taking the max number of operations in all branches).
Thus, we require
the client to send $22$ units of potential at the start of a session.   In the $\m{cancel}$
branch of the $\m{auction}$ type, on the other hand, the
contract returns 21 units of potential to the client using the $\paypot^{21}$
type constructor. This is analogous to gas usage in smart contracts, where
the sender initiates a transaction with some initial gas, and the leftover
gas at the end of transaction is returned to the sender.  In contrast to existing smart
contract languages like Solidity, which provide no support for analyzing the cost of a
transaction, Nomos type soundness theorem guarantees that the total initial potential of a process plus the potential it receives
during a session reflect the upper bound on the gas usage, assuming that the cost model assigns a cost equivalent to their
gas cost to each operation.

\paragraph{\textbf{\textit{Bringing It All Together}}}
A main contribution of this paper is to combine all these features in a
single language while retaining type safety.  To this end, we introduce four
different \emph{modes} of a channel, identifying the role of the process
offered along that channel.
The mode $\p$ denotes \emph{purely linear processes}, typically amounting to
linear assets or private data structures, such as $b$ and $l$ in the auction.
The modes $\s$ and $\l$ denote \emph{sharable
processes} that are either in their shared phase or linear phase, respectively,
and are typically used for contracts, such as $sa$ and $la$, respectively,
in the auction.  The mode $\c$, finally, denotes a \emph{transaction process}
that can refer to shared and linear processes and is typically issued by a user, such
as bidder in the auction. The mode assignment carries over into the process
typing judgments (see Section~\ref{sec:sharing}) ascertaining certain
well-formedness conditions (Definition~\ref{def:proc_typ}) on their type. This 
introduction of modes is simply a technical device to preserve the tree
structure of linear processes at run-time, establishing type safety.

%%% Local Variables:
%%% mode: latex
%%% mode: flyspell
%%% TeX-master: "pldi19"
%%% End:

\section{Base System of Session Types}
\label{sec:session}

Nomos builds on linear session types for message-passing
concurrency~\cite{HondaCONCUR1993, HondaESOP1998, HondaPOPL2008,
CairesCONCUR2010, WadlerICFP2012}
and, in particular, on the line of works that have a logical
foundation due to the existence of a Curry-Howard correspondence between linear
logic and the session-typed
$\pi$-calculus~\cite{CairesCONCUR2010,WadlerICFP2012}.
Linear logic~\cite{GirardTCS1987} is a substructural logic that exhibits
exchange as the only structural property, with no contraction or
weakening. As a result, linear propositions can be viewed as resources
that must be used \emph{exactly once} in a proof.  Under the Curry-Howard correspondence, an
intuitionistic linear sequent $A_1, A_2, \ldots, A_n \vdash C$
can be interpreted as the offer of a session $C$ by a process $P$ using the sessions $A_1, A_2,
\ldots, A_n$
\begin{center}
\begin{minipage}{0cm}
\begin{tabbing}
$(x_1 : A_1), (x_2 : A_2), \ldots, (x_n : A_n) \vdash P :: (z : C)$
\end{tabbing}
\end{minipage}
\end{center}
We label each antecedent as well as the conclusion with the name of the channel along which the
session is provided. The $x_i$'s correspond to channels
\emph{used by} $P$, and $z$ is the channel \emph{provided by} $P$.
% The resulting judgment formally states that process $P$ provides a service
% of session type $C$ along channel $z$, while using services of session
% types $A_1, \ldots, A_n$ provided along channels $x_1, \ldots, x_n$
% respectively.
% All these channel names must be distinct, and we
% sometimes explicitly rename them to preserve this invariant.
As is standard, we use the linear context $\D$ to combine multiple assumptions.

\begin{table*}[t]
\centering
\begin{tabular}{@{}l l l l p{13.5em}@{}}
\textbf{Session Type} & \textbf{Contin-} & \textbf{Process Term} &
\textbf{Contin-} & \multicolumn{1}{c}{\textbf{Description}} \\
& \textbf{uation} & & \textbf{uation} \\
\midrule
$c : \ichoice{\ell : A_{\ell}}_{\ell \in L}$ & $c : A_k$ &
$\esendl{c}{k} \semi P$
& $P$ & provider sends label $k$ along $c$ \\
& & $\ecase{c}{\ell}{Q_{\ell}}_{\ell \in L}$ & $Q_k$ &
client receives label $k$ along $c$ \\
\addlinespace
$c : \echoice{\ell : A_{\ell}}$ & $c : A_k$ &
$\ecase{c}{\ell}{P_{\ell}}_{\ell \in L}$
& $P_k$ & provider receives label $k$ along $c$ \\
& & $\esendl{c}{k} \semi Q$ & $Q$ & client sends label $k$ along $c$ \\
\addlinespace
$c : A \tensor B$ & $c : B$ & $\esendch{c}{w} \semi P$
& $P$ & provider sends channel $w : A$ on $c$ \\
& & $\erecvch{c}{y} \semi Q_y$ & $[w/y]Q_y$ &
client receives channel $w : A$ on $c$ \\
\addlinespace
$c : A \lolli B$ & $c : B$ & $\erecvch{c}{y} \semi P_y$
& $[w/y]P_y$ & provider receives chan. $w : A$ on $c$ \\
& & $\esendch{c}{w} \semi Q$ & $Q$ &
client sends channel $w : A$ on $c$  \\
\addlinespace
$c : \one$ & $-$ & $\eclose{c}$
& $-$ & provider sends $\mi{end}$ along $c$ \\
& & $\ewait{c} \semi Q$ & $Q$ & client receives $\mi{end}$ along $c$ \\
\midrule
\end{tabular}
\caption{Overview of binary session types with their operational description}
\label{tab:basetypes}
\vspace{-2.5em}
\end{table*}

For the typing of processes in Nomos, we extend the above judgment with two
additional contexts ($\Psi$ and $\G$), a resource annotation $q$, and a mode $m$
of the offered channel:
\begin{center}
\begin{minipage}{0cm}
\begin{tabbing}
$\Psi \semi \G \semi \D \entailpot{q} P :: (x_m : A)$
\end{tabbing}
\end{minipage}
\end{center}
We will gradually introduce each concept in the remainder of this article.  For future
reference, we show the complete typing rules, with additional contexts, resource annotations,
and modes henceforth, but highlight the parts that will be discussed in later sections in blue.

 % For now, they can be
% viewed as constant, and are simply threaded through by the rules (marked in blue).

The Curry-Howard correspondence gives each connective of linear
logic an interpretation as a session type, as demonstrated by the grammar:
\begin{center}
\begin{minipage}{0cm}
\begin{tabbing}
$A, B \quad ::= \quad \ichoice{\ell : A}_{\ell \in K} \mid
\echoice{\ell : A}_{\ell \in K}
\mid A \lolli_m B \mid A \tensor_m B \mid \one$
\end{tabbing}
\end{minipage}
\end{center}
Each type prescribes the kind of
message that must be sent or received along a channel of that type
and at which type the session continues after the exchange. % We present an
% overview of the session types in SILL along with a brief
% description of their communication protocol. They follow the type
% grammar below.
%$V$ denotes a type variable.
Types are defined mutually
recursively in a global signature, where type definitions are
constrained to be \emph{contractive}~\cite{Gay05acta} (no definitions
of the form $V = A$ where $A$ is a type name). This allows us
to treat them equi-recursively~\cite{CraryPLDI1999}, meaning we can silently replace a
type variable by its definition for type-checking.

%Internal choice $A \oplus B$ and external choice $A \with B$ have been
%generalized to n-ary labeled sums $\ichoice{\ell : A_{\ell}}_{\ell \in K}$ and
%$\echoice{\ell : A_{\ell}}_{\ell \in K}$ (for some index set $L$),
%respectively. As a provider of an internal choice
%$\ichoice{\ell : A_{\ell}}_{\ell \in K}$, a process can send one of the labels
%$\ell$ to its client. Dually, a provider of an external choice
%$\echoice{\ell : A_{\ell}}_{\ell \in K}$ receives one of the labels $l_i$
%sent by its client.
%%
%We require
%external and internal choices to comprise at least one label,
%otherwise there would exist a linear channel without
%observable communication along it, which is computationally
%uninteresting and would complicate our proofs.
%The connectives $\tensor$ and $\lolli$ are used to send
%channels via other channels. A provider of $A \tensor B$
%sends a channel of type $A$ to its client and then
%behaves as a provider of $B$. Dually, a provider
%of $A \lolli B$ receives a channel of type $A$ from its
%client. The types of the provider and client change
%consistently, and the process terms of a provider
%and client come in matching pairs.

Following previous work on session types~\cite{PfenningFOSSACS2015,
ToninhoESOP2013}, the process expressions of Nomos are defined
as follows.
\begin{center}
\begin{minipage}{0cm}
\begin{tabbing}
$P ::= \esendl{x}{l} \semi P \mid \ecase{x}{\ell}{P}_{\ell \in K} \mid 
\fwd{x}{y}
\mid \eclose{x} \mid \ewait{x} \semi P \mid \esendch{x}{w} \semi P
\mid \erecvch{x}{y} \semi P$
\end{tabbing}
\end{minipage}
\end{center}

Table~\ref{tab:basetypes} provides
an overview of the types along with their operational meaning.
Because we adopt the intuitionistic version of linear logic, session types
are expressed from the point of view of the provider.
Table~\ref{tab:basetypes} provides the viewpoint of the provider in the
first line, and that of the client in the second line for each connective.
Columns 1 and 3 describe the session type and process term before
the interaction. Similarly, columns 2 and 4 describe the
type and term after the interaction. Finally, the last column describes
the provider and client action. Figure~\ref{fig:typing-binary} 
provides the corresponding typing rules. As illustrations of the statics and
semantics, we explain internal choice ($\oplus$) and linear implication
($\lolli$) connectives.

\paragraph{\textbf{\textit{Internal Choice}}}
The linear logic connective $A \oplus B$ has been
generalized to n-ary labeled sum $\ichoice{\ell : A_{\ell}}_{\ell \in K}$.
A process that provides $x : \ichoice{\ell : A_{\ell}}_{\ell \in K}$
can send any label $l \in K$ along $x$ and then continues by
providing $x : A_l$. The corresponding process term is written
as $(\esendl{x}{l} \semi P)$, where $P$ is the continuation. A client branches on the label
received along $x$ using the term $\ecase{x}{\ell}{Q_{\ell}}_{\ell \in K}$.
The typing rules for the provider and client are $\oplus R$
and $\oplus L$, respectively, in Figure~\ref{fig:typing-binary}.

\begin{figure}[t]
    \raggedright{
    \fbox{$\B{\Psi \semi \G \semi} \D \entailpot{\B{q}} P :: (x_{\B{m}} : A)$} \quad
    \text{Process $P$ uses linear channels in $\D$
    and offers type $A$ on channel $x$}}
    \begin{mathpar}
    \infer[\oplus R]
    {\B{\Psi \semi \G \semi} \D \entailpot{\B{q}}
    \esendl{x_{\B{m}}}{l} \semi P ::
    (x_{\B{m}} : \ichoice{\ell : A_{\ell}}_{\ell \in K})}
    {\B{\Psi \semi \G \semi} \D \entailpot{\B{q}}
    P :: (x_{\B{m}} : A_l) \qquad (l \in K)}
    \and
    \inferrule*[right = $\oplus L$]
    {\B{\Psi \semi \G \semi} \D, (x_{\B{m}} : A_{\ell}) \entailpot{\B{q}}
    Q_{\ell} :: (z_{\B{k}} : C)
    \qquad (\forall \ell \in K)}
    {\B{\Psi \semi \G \semi} \D, (x_{\B{m}} : \ichoice{\ell : A_{\ell}}_{\ell \in K})
    \entailpot{\B{q}} \ecase{x_{\B{m}}}{\ell}{Q_{\ell}}_{\ell \in K} :: (z_{\B{k}} : C)}
    \and
    %%
    %\infer[\with R]
    %{\Psi \semi \G \semi \D \entailpot{q}
    %\ecase{x_m}{\ell}{P_{\ell}}_{\ell \in L} ::
    %(x_m : \echoice{\ell : A_{\ell}}_{\ell \in L})}
    %{\Psi \semi \G \semi \D \entailpot{q}
    %P :: (x_m : A_{\ell}) \qquad (\forall \ell \in L)}
    %%
    %\and
    %%
    %\infer[\with L]
    %{\Psi \semi \G \semi \D, (x_m : \echoice{\ell : A_{\ell}}_{\ell \in L})
    %\entailpot{q} \esendl{x_m}{k} \semi P :: (z_k : C)}
    %{\Psi \semi \G \semi \D, (x_m : A_{\ell}) \entailpot{q}
    %Q_{\ell} :: (z_k : C)
    %\qquad (k \in L)}
    %%
    %\and
    %
    %\infer[\tensor R]
    %{\Psi \semi \G \semi \D, (w_\p : A) \entailpot{q}
    %\esendch{x_m}{w_\p} \semi P :: (x_m : A \tensor B)}
    %{\Psi \semi \G \semi \D \entailpot{q}
    %P :: (x_m : B)}
    %%
    %\and
    %%
    %\infer[\tensor L]
    %{\Psi \semi \G \semi \D, (x_m : A \tensor B)
    %\entailpot{q} \erecvch{x_m}{y_\p} \semi Q :: (z_k : C)}
    %{\Psi \semi \G \semi \D, (y_\p : A), (x_m : B) \entailpot{q}
    %Q :: (z_k : C)}
    %%
    %\and
    %
    \infer[\lolli_{\B{n}} R]
    {\B{\Psi \semi \G \semi} \D \entailpot{\B{q}}
    \erecvch{x_{\B{m}}}{y_{\B{n}}} \semi P :: (x_{\B{m}} : A \lolli_{\B{n}} B)}
    {\B{\Psi \semi \G \semi} \D, (y_{\B{n}} : A) \entailpot{\B{q}}
    P :: (x_{\B{m}} : B)}
    \and
    \infer[\lolli_{\B{n}} L]
    {\B{\Psi \semi \G \semi} \D, (w_{\B{n}} : A), (x_{\B{m}} : A \lolli_{\B{n}} B)
    \entailpot{\B{q}} \esendch{x_{\B{m}}}{w_{\B{n}}} \semi Q :: (z_{\B{k}} : C)}
    {\B{\Psi \semi \G \semi} \D, (x_{\B{m}} : B) \entailpot{\B{q}}
    Q :: (z_{\B{k}} : C)}
    \and
    %
    %\infer[\one R]
    %{\Psi {\semi} \G {\semi} \cdot \entailpot{q}
    %\eclose{x_m} :: (x_m : \one)}
    %{q = 0}
    %%
    %\and 
    %%
    %\infer[\one L]
    %{\Psi {\semi} \G {\semi} \D, (x_m : \one) \entailpot{q}
    %\ewait{x_m} ; Q :: (z_k : C)}
    %{\Psi \semi \G \semi \D \entailpot{q} Q :: (z_k : C)}
    %%
    %\and
    %
    \infer[\m{fwd}]
    {\B{\Psi \semi \G \semi} (y_{\B{m}} : A) \entailpot{\B{q}}
    \fwd{x_{\B{m}}}{y_{\B{m}}} :: (x_{\B{m}} : A)}
    {\B{q = 0}}
    \end{mathpar}
    \vspace{-2em}
    \caption{Selected typing rules for process communication}
    \label{fig:typing-binary}
    \vspace{-1em}
    \end{figure}

The operational semantics is formalized as a system of
\emph{multiset rewriting rules}~\cite{Cervesato2009SSOS}.
We introduce semantic objects $\proc{c_m}{w, P}$ and $\msg{c_m}{w, N}$
denoting process $P$ and message $N$, respectively, being provided along
channel $c$ at mode $m$.
The resource annotation $w$ indicates the work
performed so far, the discussion
of which we defer to Section~\ref{sec:resource}.
%A \emph{process
%configuration} is a multiset of such object, where any two offered
%channels are distinct.
Communication is \emph{asynchronous}, allowing  the
sender $(\esendl{c_m}{l} \semi P)$ to continue with $P$ without waiting for $l$
to be received. As a
technical device to ensure that consecutive messages arrive
in the order they were sent, the sender also creates a fresh
continuation channel $c^+_m$ so that the message $l$ is actually
represented as $(\esendl{c_m}{l} \semi \fwd{c_m}{c^+_m})$ (read: send $l$
along $c_m$ and continue as $c^+_m$):
\begin{center}
\begin{minipage}{0cm}
\begin{tabbing}
$(\oplus S):  \quad$
$\proc{c_{\B{m}}}{\B{w}, \esendl{c_{\B{m}}}{l} \semi P} \step
\proc{c^+_{\B{m}}}{\B{w}, [c^+_{\B{m}}/c_{\B{m}}]P},
\msg{c_{\B{m}}}{\B{0}, \esendl{c_{\B{m}}}{l} \semi \fwd{c_{\B{m}}}{c^+_{\B{m}}}}$
\end{tabbing}
\end{minipage}
\end{center}
Receiving the message $l$ corresponds to selecting branch $Q_l$
and substituting continuation $c^+$ for $c$:
\begin{tabbing}
$(\oplus C): \quad$
$\msg{c_{\B{m}}}{\B{w}, \esendl{c_{\B{m}}}{l} \semi \fwd{c_{\B{m}}}{c^+_{\B{m}}}},
\proc{d_{\B{k}}}{\B{w'}, \ecase{c_{\B{m}}}{\ell}{Q_{\ell}}_{\ell \in K}} \step$\\
\hspace{26em}$\proc{d_{\B{k}}}{\B{w+w'}, [c^+_{\B{m}}/c_{\B{m}}]Q_l}$
\end{tabbing}
The message $\msg{c_m}{w, \esendl{c_m}{l} \semi \fwd{c_m}{c^+_m}}$
is just a particular
form of process, where $\fwd{c_m}{c^+_m}$ is \emph{forwarding}, which is explained
below. Therefore, no separate typing rules for messages are needed;
they can be typed as processes~\cite{BalzerICFP2017}.

% \paragraph{\textbf{\textit{Delegation.}}}
\paragraph{\textbf{\textit{Channel Passing.}}}
Nomos allows % channels to be \emph{higher-order} allowing
the exchange of channels over channels, also referred to as higher-order channels.
% sometimes called \emph{delegation}.
  A process providing $A \lolli_n B$ can receive
a channel of type $A$ at mode $n$ and then continue with providing $B$.
The provider process term is $(\erecvch{x_m}{y_n} \semi P)$,
where $P$ is the continuation. The corresponding client sends this
channel using $(\esendch{x_m}{w_n} \semi Q)$. The corresponding typing
rules are presented in Figure~\ref{fig:typing-binary}. Operationally,
the client creates a message containing the channel:
\begin{center}
\begin{minipage}{0cm}
\begin{tabbing}
$(\lolli_{\B{n}} S): $
$\proc{d_{\B{k}}}{\B{w}, \esendch{c_{\B{m}}}{e_{\B{n}}} \semi P} \step
\msg{c^+_{\B{m}}}{\B{0}, \esendch{c_{\B{m}}}{e_{\B{n}}} \semi \fwd{c^+_{\B{m}}}{c_{\B{m}}}},
\proc{d_{\B{k}}}{\B{w}, [c^+_{\B{m}}/c_{\B{m}}]P}$
\end{tabbing}
\end{minipage}
\end{center}
The provider receives this channel, and substitutes it appropriately.
\begin{tabbing}
$(\lolli_{\B{n}} C): $
$\proc{c_{\B{m}}}{\B{w'}, \erecvch{c_{\B{m}}}{x_{\B{n}}} \semi Q},
\msg{c^+_{\B{m}}}{\B{w}, \esendch{c_{\B{m}}}{e_{\B{n}}} \semi \fwd{c^+_{\B{m}}}{c_{\B{m}}}}
\step$ \\
\hspace{24em} $\proc{c^+_{\B{m}}}{\B{w+w'}, [c^+_{\B{m}}/c_{\B{m}}][e_{\B{n}}/x_{\B{n}}]Q}$
\end{tabbing}
An important distinction from standard session types is that the
$\lolli$ and $\tensor$ types are decorated with the mode $m$ of the
channel exchanged. Since modes distinguish the status of the channels
in Nomos, this mode decoration is necessary to ensure type safety.

\paragraph{\textbf{\textit{Forwarding}}}
A forwarding process $\fwd{x_m}{y_m}$ (which provides channel $x$)
identifies channels $x$ and $y$ (both at mode $m$) so that any
further communication
along $x$ or $y$ occurs on the unified channel. The typing rule
$\m{fwd}$ is given in Figure~\ref{fig:typing-binary} and corresponds
to the logical rule of \emph{identity}.
\begin{center}
\begin{minipage}{0cm}
\begin{tabbing}
$(\m{id}^+ C): \quad$ \= $\msg{d_{\B{m}}}{\B{w'}, N},
\proc{c_{\B{m}}}{\B{w}, \fwd{c_{\B{m}}}{d_{\B{m}}}}$ \hspace{3em}\= $\step$ \quad \=
$\msg{c_{\B{m}}}{\B{w + w'}, [c_{\B{m}}/d_{\B{m}}]N}$\\
$(\m{id}^- C):$ \> $\proc{c_{\B{m}}}{\B{w}, \fwd{c_{\B{m}}}{d_{\B{m}}}},
\msg{e_{\B{k}}}{\B{w'}, N(c_{\B{m}})}$ \> $\step$ \>
$\msg{e_{\B{k}}}{\B{w + w'}, N(d_{\B{m}})}$
\end{tabbing}
\end{minipage}
\end{center}
Operationally, a process $\fwd{c}{d}$ \emph{forwards} any message
$N$ that arrives along $d$ to $c$ and vice versa.  Since linearity ensures that every process
has a unique client, forwarding results in terminating the forwarding process and corresponding
renaming of the channel in the client process.
% Since channels
% are linear with a unique provider and client, the forwarding process
% can terminate making sure to apply proper renaming. The rules for
% the operational semantics as follows.
%\[
%\inferrule*
%{\msg{d_k}{w', M}, \quad
%\proc{c_m}{\B{w}, \fwd{c_m}{d_k}} \step \\
%\msg{c_m}{w + w', [c_m/d_k]M}}
%{}
%\]
%\[
%\inferrule*
%{\proc{c_m}{\B{w}, \fwd{c_m}{d_k}}, \quad
%\msg{e_l}{w', M(c_m)} \step \\
%\msg{e_l}{w + w', M(d_k)}}
%{}
%\]
The full semantics are given in the appendix.

%%% Local Variables:
%%% mode: latex
%%% mode: flyspell
%%% TeX-master: "pldi19"
%%% End:

\section{Sharing Contracts}
\label{sec:sharing}

Multi-user support is fundamental to digital contract
development.  Linear session types, as defined in Section~\ref{sec:session}, unfortunately
preclude such sharing because they restrict processes to exactly one client;
only one bidder for the auction, for instance (who will always win!).  To support multi-user
contracts, we base Nomos on \emph{shared} session
types~\cite{BalzerICFP2017}.  Shared session types impose an
acquire-release discipline on shared processes to guarantee that multiple clients interact with
a contract in \emph{mutual exclusion} of each other.  When a client acquires a
shared contract, it obtains a private linear channel along which it can communicate
with the contract undisturbed
by any other clients.  Once the client releases the contract, it loses its private linear
channel and only retains a shared reference to the contract.

A key idea of shared session types is to lift the acquire-release discipline to the type level.
Generalizing the idea of type
\emph{stratification}~\cite{PfenningFOSSACS2015, BentonCSL1994,
ReedUnpublished2009}, session
types are stratified into a linear and shared layer with two \emph{adjoint modalities} going
back and forth between them:
\begin{tabbing}
\quad$A_\s \quad $\=$::= \quad$\=$ \up A_\l $ \hspace{5em} \= shared session type\\
\quad$A_\l \quad $\>$::= \quad$\=$\ldots \mid \;  \down A_\s \qquad$ \> linear session types
\end{tabbing}
The $\up$ type modality translates into an
\emph{acquire}, while the dual $\down$ type modality into a \emph{release}.
Whereas mutual exclusion is one key ingredient to guarantee session fidelity (a.k.a. type
preservation) for shared session types, the other key ingredient is the requirement that a
session type is \emph{equi-synchronizing}.  A session type is equi-synchronizing if it
imposes the invariant on a process to be released back to the same type at which the
process was previously acquired. This is also the key behind eliminating
\emph{re-entrancy vulnerabilities} since it prevents a user from interrupting an
ongoing session in the middle and initiating a new one.

Recall the process typing judgment in Nomos $\Psi \semi \G \semi \D
\entailpot{q} P :: (x_m : A)$ denoting a process $P$ offering service of
type $A$ along channel $x$ at mode $m$. The contexts $\G$ and $\D$ store the shared and
linear channels that $P$ can refer to, respectively ($\Psi$ and $q$ are explained later and thus
marked in blue in Figure~\ref{fig:typing-shared}).  The stratification of
channels into layers arises from a difference in structural properties that
exist for types at a mode. Shared propositions exhibit weakening, contraction
and exchange, thus can be discarded or duplicated, while linear propositions
only exhibit exchange.

% Nomos integrates shared and linear session types with a functional language,
% yielding the following typing judgment:
% \begin{center}
% \begin{minipage}{0cm}
% \begin{tabbing}
% $\Psi \semi \G \semi \D \entailpot{q} P :: (x : A)$
% \end{tabbing}
% \end{minipage}
% \end{center}
% This denotes a process $P$ providing a service of type $A$ along channel $x$
% and using functional variables in $\Psi$, shared channels in $\G$,
% and linear channels in $\D$. The stratification of channels into layers
% arises from a difference in structural properties that exist for types at a mode.
% Shared propositions exhibit weakening, contraction and exchange, thus can be discarded
% or duplicated, while linear propositions only exhibit exchange.

% Shared propositions (mode $\s$) exhibit weakening, contraction and
% exchange, thus can be discarded or duplicated, while linear propositions
% (mode $\l, \c, \p$) only exhibit exchange.

\paragraph{\textbf{\textit{Allowing Contracts to Rely on Linear Assets}}}
% \paragraph{\textbf{\textit{Allowing Shared Contracts to Rely on Linear Assets}}}

As exemplified by the auction contract, a digital contract typically amounts to
a process that is shared at the outset, but oscillates between shared and linear
to interact with clients, one at a time. Crucial for this pattern is the ability
of a contract to maintain its linear assets (e.g., $\m{money}$ or $\m{lot}$ for
the auction) regardless of its mode.  Unfortunately, current shared session
types~\cite{BalzerICFP2017} do not allow a shared process to rely on any linear
channels, requiring any linear assets to be consumed before becoming shared.
This precaution was logically motivated~\cite{PruiksmaTR2018} and also crucial
for type preservation.

A key novelty of our work is to lift this restriction while \emph{maintaining
type preservation}. The
main concern regarding preservation is to prevent a process from acquiring
its client, which would result in a cycle in the linear process tree.
To this end, we factorize the process typing judgment according to the
\emph{three roles} that arise in digital contract programs: \emph{contracts},
\emph{transactions}, and \emph{linear assets}. Since contracts are shared
and thus can oscillate between shared and linear, we get 4 sub-judgments
for typing processes, each characterized by the mode of the
channel being offered.

\begin{figure}[t]
  \[
  \begin{array}{rcl}
  A_\p & ::= & \ichoice{\ell : A_\p}_{\ell \in L} \mid
  \echoice{\ell : A_\p}_{\ell \in L} \mid \one
  \mid A_m \lolli_m A_\p \mid A_m \tensor_m A_\p
  \mid \tau \arrow A_\p \mid \tau \product A_\p \\
  A_\l & ::= & \ichoice{\ell : A_\l}_{\ell \in L} \mid
  \echoice{\ell : A_\l}_{\ell \in L} \mid \one
  \mid A_m \lolli_m A_\l \mid A_m \tensor_m A_\l
  \mid \tau \arrow A_\l \mid \tau \product A_\l \mid \; \down A_\s \\
  A_\s & ::= & \up A_\l \\
  A_\c & ::= & A_\p \\
  \end{array}
  \]
  \vspace{-2em}
  \caption{Grammar for shared session types}
  \label{fig:type-grammar}
  \vspace{-1.5em}
\end{figure}

\begin{definition}[Process Typing]\label{def:proc_typ}
  The judgment $\Psi \semi \G \semi \D \entailpot{q} P :: (x_m : A)$
  is categorized according to mode $m$. This factorization imposes
  certain invariants on the judgment outlined below. $\lang{A}$ denotes the
  language generated by the grammar of $A$.

  \begin{enumerate}
    \item If $m = \p$, then (i) $\G$ is empty, (ii) for all $d_k \in \D
    \implies k = \p$, and (iii) $A \in \lang{A_\p}$.

    \item If $m = \s$, then (i) for all $d_k \in \D \implies k = \p$,
    and (ii) $A \in \lang{A_\s}$.

    \item If $m = \l$, then $A \in \lang{A_\l}$.

    \item If $m = \c$, then $A \in \lang{A_\c}$.
  \end{enumerate}

\end{definition}

Figure~\ref{fig:type-grammar} shows the session type grammar in Nomos.
The first sub-judgment in Definition~\ref{def:proc_typ} is for typing
linear assets. These type a purely linear
process $P$ using a purely linear context $\D$ (types belonging to grammar
$A_\p$ in Figure~\ref{fig:type-grammar}) and offering a purely linear type $A$ along
channel $x_\p$. The mode $\p$ of the channel indicates that a purely linear
session is offered.
The second and third sub-judgments are for typing contracts.
The second sub-judgment shows the type of a contract process $P$ using 
a shared context $\G$ and a purely linear channel context $\D$
(judgment $\D \plin$) and offering shared type $A$ on the shared channel
$x_\s$. Once this shared channel is acquired by a user, the shared
process transitions to its linear phase, whose typing is governed by
the third sub-judgment. The offered channel transitions to linear mode $\l$,
while the linear context may now contain channels at arbitrary modes ($\l, \c$
or $\p$). \emph{This allows contracts to interact with other contracts without
compromising type safety}. Finally, the fourth typing judgment types a 
linear process, corresponding to a \emph{transaction}
holding access to shared channels $\G$ and linear channels $\D$, and
offering at mode $\c$.

This novel factorization and the fact that contracts, as the only
shared processes, can only access linear channels at mode $\p$,
upholds preservation while allowing shared contract processes to
rely on linear resources.

Shared session types introduce new typing rules into our system,
concerning the \emph{acquire-release} constructs (see Figure
\ref{fig:typing-shared}).  An acquire is applied to the shared channel $x_\s$ along which
the shared process offers and yields a linear channel $x_\l$ when
successful.
%The shared channel $x_\s$ is not made available to the continuation $Q$.
A contract process can \emph{accept} an acquire request along
its offering shared channel $x_\s$. After the accept is successful,
the shared contract process transitions to its linear phase, now offering
along the linear channel $x_\l$.

The synchronous dynamics of the \emph{acquire-accept} pair is
\begin{tabbing}
$(\up C): $
$\proc{a_\s}{\B{w'}, \eaccept{x_\l}{a_\s} \semi P_{x_\l}},
\proc{c_m}{\B{w}, \eacquire{x_\l}{a_\s} \semi Q_{x_\l}} \step$ \\
\hspace{25em}
$\proc{a_\l}{\B{w'}, P_{a_\l}},
\proc{c_m}{\B{w}, Q_{a_\l}}$
\end{tabbing}
This rule exploits the invariant that a contract process' providing channel
$a$ can come at two different modes, a linear one $a_\l$, and a shared
one $a_\s$. The linear channel $a_\l$ is substituted for the
channel variable $x_\l$ occurring in the process terms $P$ and $Q$.

The dual to acquire-accept is \emph{release-detach}. A client can
\emph{release} linear access to a contract process, while the contract
process \emph{detaches} from the client. The corresponding typing
rules are presented in Figure~\ref{fig:typing-shared}. The effect of releasing the
linear channel $x_\l$ is that the continuation $Q$ loses access to
$x_\l$, while a new reference to $x_\s$ is made available in the
shared context $\G$.  The contract, on the other hand, detaches from the client by
transitioning its offering channel from linear mode $x_\l$ back
to the shared mode $x_\s$.
Operationally, the release-detach rule is inverse to the acquire-accept rule.
\begin{tabbing}
$(\down C): $
$\proc{a_\l}{\B{w'}, \edetach{x_\s}{a_\l} \semi P_{x_\s}},
\proc{c_m}{\B{w}, \erelease{x_\s}{a_\l} \semi Q_{x_\s}} \step$ \\
\hspace{24em}
$\proc{a_\s}{\B{w'}, P_{a_\s}}, \quad \proc{c_m}{\B{w}, Q_{a_\s}}$
\end{tabbing}

\begin{figure}[t]
  \begin{mathpar}
  \raggedright{
  \fbox{$\B{\Psi \semi} \G \semi \D \entailpot{\B{q}} P :: (x_m : A)$} \quad
  \text{
  Process $P$ uses shared channels in $\G$
  and offers $A$ along $x$.}}
  \vspace{-0.5em}\and
  \infer[\up L]
  {\B{\Psi \semi} \G, (x_\s : \up A_\l) \semi \D
  \entailpot{\B{q}} \eacquire{x_\l}{x_\s} \semi Q :: (z_m : C)}
  {\B{\Psi \semi} \G \semi \D, (x_\l : A_\l)
  \entailpot{\B{q}} Q :: (z_m : C)}
  \and
  \infer[\up R]
  {\B{\Psi \semi} \G \semi \D \entailpot{\B{q}}
  \eaccept{x_\l}{x_\s} \semi P :: (x_\s : \up A_\l)}
  {\D \plin \and \B{\Psi \semi} \G \semi \D \entailpot{\B{q}} P :: (x_\l : A_\l)}
  \and
  \infer[\down L]
  {\B{\Psi \semi} \G \semi \D, (x_\l : \down A_\s)
  \entailpot{\B{q}} \erelease{x_\s}{x_\l} \semi Q :: (z_m : C)}
  {\B{\Psi \semi} \G, (x_\s : A_\s) \semi \D
  \entailpot{\B{q}} Q :: (z_m : C)}
  \and
  \infer[\down R]
  {\B{\Psi \semi} \G \semi \D \entailpot{\B{q}}
  \edetach{x_\s}{x_\l} \semi P :: (x_\l : \down A_\s)}
  {\D \plin \and \B{\Psi \semi} \G \semi \D \entailpot{\B{q}} P :: (x_\s : A_\s)}
  \end{mathpar}
  \vspace{-2em}
  \caption{Typing rules corresponding to the shared layer.}
  \label{fig:typing-shared}
  \vspace{-1em}
  \end{figure}

\section{Adding a Functional Layer}
\label{sec:fun}
To support general-purpose programming patterns, Nomos combines linear channels with conventional data
structures, such as integers, lists, or dictionaries. To reflect and track different classes of data in the type
system, we take inspiration from prior work~\cite{ToninhoESOP2013,
PfenningFOSSACS2015} and incorporate processes
into a functional core via a \emph{linear contextual monad} that isolates
session-based concurrency. To this end, we introduce a separate
functional context to the typing of a process.
%This context exhibits
%all structural properties of logic, namely weakening, contraction and
%exchange, and differs from the linear context, which exhibits only
%exchange.
The linear contextual monad encapsulates open concurrent
computations, which can be passed in functional computations but also
transferred between processes in the form of \emph{higher-order
  processes}, providing a uniform integration of higher-order
functions and processes.
%
%\jan{@Frank: Can you please briefly discuss related work on combining
%  linear and functional types.}

%We provide a brief overview of the types and constructs introduced
%to the base session types by the functional fragment.
The types are
separated into a functional and concurrent part, mutually dependent on
each other. The functional types $\tau$
are given by the type grammar below.
\begin{center}
\begin{minipage}{0cm}
\begin{tabbing}
$\tau \quad$\=$::= \quad $\=$\tau \to \tau
\mid \tau + \tau \mid \tau \times \tau
\mid \tint \mid \tbool \mid \plist{\tau}{q}$\\
\>$\; \; \mid$\>$\tproc{A_\p}{\overline{A_\p}}_\p \mid
\tproc{A_\s}{\overline{A_\s} \semi \overline{A_\p}}_\s \mid
\tproc{A_\c}{\overline{A_\s} \semi \overline{A}}_\c$
\end{tabbing}
\end{minipage}
\end{center}
The types are standard,
except for the potential annotation $q \in \mathbb{N}$ in list type
$\plist{\tau}{q}$, which we explain in Section~\ref{sec:resource}, and
the contextual monadic types in the last line, which are the topic of this section.
The expressivity of the types and terms in the functional layer
are not important for the development in this paper. Thus, we do not
formally define functional terms $M$ but assume that they have the
expected term formers such as function abstraction and application,
type constructors, and pattern matching. We also define a standard type
judgment for the functional part of the language.
%without providing the rules for the standard constructs.
\begin{center}
\begin{minipage}{0cm}
\begin{tabbing}
$\Psi \exppot{p} M : \tau \quad \text{term $M$ has type $\tau$ in functional
context $\Psi$ (potential $p$ discussed later)}$
\end{tabbing}
\end{minipage}
\end{center}

\paragraph{\textbf{\textit{Contextual Monad}}}
The main novelty in the functional types are the three type formers
for contextual monads, denoting the type of a process expression.  The
type $\tproc{A_\p}{\overline{A_\p}}_\p$ denotes a process offering a
\emph{purely linear} session type $A_\p$ and using the purely linear
vector of types
$\overline{A_\p}$. The corresponding introduction form in the
functional language is the monadic value constructor
$\eproc{c_\p}{P}{\overline{d_\p}}$,
denoting a runnable process offering along channel $c_\p$ that uses
channels $\overline{d_\p}$, all at mode $\p$.
The corresponding typing rule for the monad is (ignore the blue portions)
\begin{center}
\begin{minipage}{0cm}
\begin{tabbing}
$\infer[\{\} I_\p]
{\Psi \exppot{\B{q}} \eproc{x_\p}{P}{\overline{d_\p}} :
\tproc{A}{\overline{D}}_\p}
{\D = \overline{d_\p : D} \qquad
\Psi \semi \cdot \semi \D \entailpot{\B{q}} P :: (x_\p : A)}$
\end{tabbing}
\end{minipage}
\end{center}

The monadic \emph{bind} operation implements process composition and
acts as the elimination form for values of type
$\tproc{A_\p}{\overline{A_\p}}_\p$.  The bind operation, written as
$\ecut{c_\p}{M}{\overline{d_\p}}{Q_c}$, composes the process underlying the
monadic term $M$, which offers along channel $c_\p$ and uses channels
$\overline{d_\p}$, with $Q_c$, which uses $c_\p$. The typing rule for the 
monadic bind is rule $\{\}E_{\p\p}$ in Figure~\ref{fig:typing-functional}.
The linear context is split between the monad $M$ and continuation
$Q$, enforcing linearity. Similarly, the potential in the functional
context is split using the sharing judgment ($\share$),
explained in Section~\ref{sec:resource}. The shared context $\G$ is empty in
accordance with the invariants established in Definition~\ref{def:proc_typ}
\emph{(i)}, since the mode of offered channel $z$ is $\p$.
The effect of executing a bind is the spawn
of the purely linear process corresponding to the monad $M$, and the
parent process continuing with $Q$.
The corresponding operational semantics rule (named $\m{spawn}_{\p\p}$)
is given as follows:
\begin{mathpar}
\proc{d_\p}{\B{w}, \ecut{x_\p}{\eproc{x_\p'}{P_{x_\p',\overline{y}}}
{\overline{y}}}{\overline{a}}{Q}} \step
\proc{c_\p}{\B{0}, P_{c_\p,\overline{a}}}, \proc{d_\p}{\B{w},
[c_\p/x_\p]Q}
{}
\end{mathpar}
The above rule spawns the process $P$ offering along a globally
fresh channel $c_\p$, and using channels $\overline{a}$. The
continuation process $Q$ acts as a client for this fresh channel $c_\p$.
The other two monadic types correspond to spawning a shared process
$\tproc{A_\s}{\overline{A_\s} \semi \overline{A_\p}}_\s$ and a transaction process
$\tproc{A_\c}{\overline{A_\s} \semi \overline{A}}_\c$ at mode $\s$ and
$\c$, respectively. Their rules are analogous to $\{\}I_\p$ and $\{\}E_{\p\p}$.

\begin{figure}[t]
\begin{mathpar}
\raggedright{
\fbox{$\Psi \semi \G \semi \D \entailpot{\B{q}} P :: (x_m : A)$} \quad
\text{Process $P$ uses functional values in $\Psi$,
and provides $A$ along $x$.}}
\and
\inferrule*[right = $\{\}E_{\p\p}$]
{\B{r = p+q} \and
\D = \overline{d_\p : D} \and
\B{\Psi \share (\Psi_1, \Psi_2)} \\
\Psi_1 \exppot{\B{p}} M : \tproc{A}{\overline{D}} \and
\Psi_2 \semi \cdot \semi \D', (x_\p : A) \entailpot{\B{q}} Q :: (z_\p : C)}
{\Psi \semi \cdot \semi \D, \D' \entailpot{\B{r}} \ecut{x_\p}{M}{\overline{d_\p}}{Q} ::
(z_\p : C)}
\and
%%
%\infer[\product R]
%{\Psi \semi \G \semi \D \entailpot{r}
%\esendch{x_m}{M} \semi P :: (x_m : \tau \product A)}
%{r = p+q \qquad
%\Psi \exppot{p} M : \tau \qquad
%\Psi \semi \G \semi \D \entailpot{q}
%P :: (x_m : A)}
%%
%\and
%%
%\infer[\product L]
%{\Psi \semi \G \semi \D, (x_m : \tau \product A)
%\entailpot{q} \erecvch{x_m}{y} \semi Q :: (z_k : C)}
%{\Psi, (y : \tau) \semi \G \semi \D, (x_m : A) \entailpot{q}
%Q :: (z_k : C)}
%%
%\and
%
\infer[\arrow R]
{\Psi \semi \G \semi \D \entailpot{\B{q}}
\erecvch{x_m}{y} \semi P :: (x_m : \tau \arrow A)}
{\Psi, (y : \tau) \semi \G \semi \D \entailpot{\B{q}}
P :: (x_m : A)}
\and
\inferrule*[right = $\arrow L$]
{\B{r = p+q} \qquad
\B{\Psi \share (\Psi_1, \Psi_2)} \qquad
\Psi_1 \exppot{\B{p}} M : \tau \\
\Psi_2 \semi \G \semi \D, (x_m : A) \entailpot{\B{q}}
Q :: (z_k : C)}
{\Psi \semi \G \semi \D, (x_m : \tau \arrow A)
\entailpot{\B{r}} \esendch{x_m}{M} \semi Q :: (z_k : C)}
\end{mathpar}
\vspace{-2em}
\caption{Typing rules corresponding to the functional layer.}
\label{fig:typing-functional}
\vspace{-1.5em}
\end{figure}

\paragraph{\textbf{\textit{Value Communication}}}
Communicating a \emph{value} of the functional language along a channel is expressed
at the type level by adding the following two session types.
\begin{center}
\begin{minipage}{0cm}
\begin{tabbing}
$A ::= \ldots \mid \tau \arrow A \mid \tau \product A$
\end{tabbing}
\end{minipage}
\end{center}
The type $\tau \arrow A$ prescribes receiving a value of type $\tau$
with continuation type $A$, while its dual $\tau \product A$ prescribes
sending a value of type $\tau$ with continuation $A$. The corresponding
typing rules for arrow ($\arrow R, \arrow L$) are given in
Figure~\ref{fig:typing-functional} (rules for $\product$ are inverse).
Receiving a value adds it to the functional context $\Psi$, while
sending it requires proving that the value has type $\tau$.
Semantically, sending a value $M : \tau$ creates a message
predicate along a fresh channel $c_m^+$ containing the value:
\begin{tabbing}
$(\arrow S): $
$\proc{d_k}{\B{w}, \esendch{c_m}{M} \semi P} \step
\msg{c^+_m}{\B{0}, \esendch{c_m}{M} \semi \fwd{c^+_m}{c_m}},
\proc{d_k}{\B{w}, [c^+_m/c_m]P}$
\end{tabbing}
The recipient process substitutes $M$ for $x$, and continues to
offer along the fresh continuation channel received by the message.
This ensures that messages are received in the order they are sent.
The rule is formalized below.
\begin{tabbing}
$(\arrow C): $
$\proc{c_m}{\B{w'}, \erecvch{c_m}{x} \semi Q},
\msg{c^+_m}{\B{w}, \esendch{c_m}{M} \semi \fwd{c^+_m}{c_m}} \step$\\
\hspace{25em}$\proc{c^+_m}{\B{w+w'}, [c^+_m/c_m][M/x]Q}$
\end{tabbing}

\paragraph{\textbf{\textit{Tracking Linear Assets}}}
As an illustration, consider the type $\m{money}$ introduced in
the auction example (Section~\ref{sec:overview}). The type is
an abstraction over funds stored in a process and is described as
\begin{tabbing}
$\m{money} =
\echoice{$\=$\mb{value} : \tint \product \m{money},$
\hspace{12.5em}\=\=\% send value\\
\>$\mb{add} : \m{money} \lolli_\p \m{money},$
\>\% receive money and add it\\
\>$\mb{subtract} : \tint \arrow \ichoice{$\=
$\mb{sufficient} : \m{money} \tensor_\p \m{money},$
\>\% receive int, send money\\
\>\>$\mb{insufficient} : \m{money}}$
\>\% funds insufficient to subtract\\
\>$\mb{coins} : \lc}$
\>\>\% send list of coins
\end{tabbing}
The type supports querying for value, and addition and subtraction.
The type also expresses insufficiency of funds in the case of
subtraction. The provider process only supplies money to the
client if the requested amount is less than the current balance,
as depicted in the $\m{sufficient}$ label. The type is implemented
by a $\mi{wallet}$ process that internally stores a linear list of coins
and an integer representing its value. Since linearity is only enforced
on the list of coins in the linear context, we trust the programmer
updates the integer in the functional context correctly during
transactions. The process is typed and implemented as (modes of channels
$l$ and $m$ is $\p$, skipped in the definition for brevity)
\begin{ntabbing}
\reset
$(n : \tint) \semi (l_\p : \lc) \vdash \mi{wallet} :: (m_\p : \m{money})$
\label{wal-decl}\\
\quad \= $\procdef{\mi{wallet} \; n}{l}{m} =$ \label{wal-def}\\
\>\quad \= $\casedef{m}$ \label{wal-case}
\hspace{14em}\=\=\=\= \% \quad case analyze on label received on $m$\\
\>\> \quad \= $(\labdef{\mb{value}}$
\= $\esendch{m}{n} \semi$ \label{wal-val}
\>\% \quad receive value, send $n$\\
\>\>\>\> $\procdef{\mi{wallet} \; n}{l}{m}$ \label{wal-val-rec}\\
\>\>\> $\mid \labdef{\mb{add}}$
\= $\erecvch{m}{m'} \semi$ \label{wal-add}
\>\% \quad receive $m' : \m{money}$ to add\\
\>\>\>\> $\esendl{m'}{\mb{value}} \semi$ \label{wal-add-val}
\>\% \quad query value of $m'$\\
\>\>\>\> $\erecvch{m'}{v} \semi$ \label{wal-get-val}\\
\>\>\>\> $\esendl{m'}{\mb{coins}} \semi$ \label{wal-send-coins}
\>\% \quad extract list of coins stored in $m'$\\
\>\>\>\> $\procdef{append}{l \; m'}{k} \semi$ \label{wal-append}
\>\% \quad append list received to internal list\\
\>\>\>\> $\procdef{\mi{wallet} \; (n+v)}{k}{m}$ \label{wal-add-rec}\\
\>\>\> $\mid \labdef{\mb{subtract}}$
\= $\erecvch{m}{n'} \semi$ \label{wal-sub}
\>\% \quad receive int to subtract\\
\>\>\>\> $\eif{n' > n} \ethen$ \label{wal-if}\\
\>\>\>\> \quad \=
$\esendl{m}{\mb{insufficient}} \semi$ \label{wal-insuff}
\>\% \quad funds insufficient\\
\>\>\>\>\> $\procdef{\mi{wallet} \; n}{l}{m}$ \label{wal-insuff-rec}\\
\>\>\>\> $\eelse$ \label{wal-else}\\
\>\>\>\> \quad \=
$\esendl{m}{\mb{sufficient}} \semi$ \label{wal-suff}
\>\% \quad funds sufficient\\
\>\>\>\>\> $\procdef{\mi{remove} \; n'}{l}{l'} \semi$ \label{wal-rem}
\>\% \quad remove $n'$ coins from $l$\\
\>\>\>\>\> $\erecvch{l'}{k} \semi$ \label{wal-get-sub}
\>\% \quad and create its own list\\
\>\>\>\>\> $\procdef{\mi{wallet} \; n'}{k}{m'} \semi$ \label{wal-create-wal}
\>\% \quad new $\mi{wallet}$ process for subtracted funds \\
\>\>\>\>\> $\esendch{m}{m'} \semi$ \label{wal-send-wal}
\>\% \quad send new money channel to client\\
\>\>\>\>\> $\procdef{\mi{wallet} \; (n-n')}{l'}{m}$ \label{wal-sub-rec}\\
\>\>\> $\mid \labdef{\mb{coins}}$
\= $\fwd{m}{l})$ \label{wal-coins}
\end{ntabbing}
If the $\mi{wallet}$ process receives the message $\m{value}$, it sends
back the integer $n$, and recurses (lines~\ref{wal-val}
and \ref{wal-val-rec}). If it receives the message $\m{add}$ followed
by a channel of type $\m{money}$ (line~\ref{wal-add}), it
queries the value of the received money $m'$
(line~\ref{wal-add-val}), stores it in $v$ (line~\ref{wal-get-val}),
extracts the coins stored in $m'$ (line~\ref{wal-send-coins}),
and appends them to its internal list of coins (line~\ref{wal-append}).
Similarly, if the $\mi{wallet}$ process receives the message
$\m{subtract}$ followed by an integer, it compares the requested
amount against the stored funds. If the balance is insufficient, it
sends the corresponding label, and recurses (lines~\ref{wal-insuff}
and \ref{wal-insuff-rec}). Otherwise, it removes $n'$ coins using the
$\mi{remove}$ process (line~\ref{wal-rem}), creates a $\m{money}$
abstraction using the $\mi{wallet}$ process (line~\ref{wal-create-wal}),
sends it (line~\ref{wal-send-wal}) and recurses.
Finally, if the $\mi{wallet}$ receives the message $\m{coins}$, it simply
forwards its internal list along the offered channel.

%%% Local Variables:
%%% mode: latex
%%% TeX-master: "pldi19"
%%% End:

\section{Tracking Resource Usage}
\label{sec:resource}

Resource usage is particularly important in digital contracts: Since
multiple parties need to agree on the result of the execution of a
contract, the computation is potentially performed multiple times or
by a trusted third party. This immediately introduces the need to
prevent denial of service attacks and to distribute the cost of the
computation among the participating parties.

The predominant approach for smart contracts on blockchains like
Ethereum is not to restrict the computation model but to introduce a
cost model that defines the \emph{gas} consumption of low level
operations. Any transaction with a smart contract needs to be executed
and validated before adding it to the global distributed ledger, i.e.,
blockchain. This validation is performed by \emph{miners}, who charge
fees based on the gas consumption of the transaction. This fee has to
be estimated and provided by the sender prior to the transaction. If
the provided amount does not cover the gas cost, the money falls to
the miner, the transaction fails, and the state of the contract is
reverted back. Overestimates bare the risk of high losses if the
contract has flaws or vulnerabilities.

It is not trivial to decide on the right amount for the fee since the
gas cost of the contract does not only depend on the requested
transaction but also on the (a priori unknown) state of the blockchain.
Thus, precise and static estimation of gas cost facilitates
transactions and reduces risks. We discuss our approach of tracking
resource usage, both at the functional and process layer.

\paragraph{\textbf{\textit{Functional Layer}}}

Numerous techniques have been proposed to statically derive resource
bounds for functional programs~\cite{LagoG11, AvanziniICFP15,
DannerICFP15, CicekBGGH16, Radicek17}. In
Nomos, we adapt the work on automatic amortized resource analysis
(AARA)~\cite{Jost03, HoffmannAH10} that has been implemented in
Resource Aware ML (RaML)~\cite{HoffmannW15}. RaML can automatically
derive worst-case resource bounds for higher-order polymorphic
programs with user-defined inductive types. The derived bounds are
multivariate resource polynomials of the size parameters of the
arguments. AARA is parametric in the resource metric and can deal with
non-monotone resources like memory that can become available during
the evaluation.

As an illustration, consider the function $\mi{applyInterest}$ that
iterates over a list of balances and applies interest on each element,
multiplying them by a constant $c$. An imperative version of the same
function in Solidity is implemented in Section~\ref{sec:study}.
%
%\jan{is the tick function already introduced earlier? Also: Do we have
%  a name clash between the session type tick and the functional tick?}
We use $\mi{tick}$ annotations to define the resource
usage of an expression in this article. We have annotated the code
to count the number of multiplications. The resource usage of an
evaluation of $\mi{applyInterest} \; b$ is $|b|$.

\begin{lstlisting}[basicstyle=\small\ttfamily, language=caml,
columns=fullflexible]
let applyInterest balances =
  match balances with
  | [] -> []
  | hd::tl -> tick(1); (c*hd)::(applyInterest tl) (* consume unit potential for tick *)
\end{lstlisting}

The idea of AARA is to decorate base types with potential annotations
that define a potential function as in amortized analysis. The typing
rules ensure that the potential before evaluating an expression is
sufficient to cover the cost of the evaluation and the potential
defined by the return type. This posterior potential can then be
used to pay for resource usage in the continuation of the program.
For example, we can derive the following resource-annotated type.
\begin{center}
\begin{minipage}{0cm}
\begin{tabbing}
$\mi{applyInterest} : \plist{\tint}{1} \funtype{0/0} \plist{\tint}{0}$
\end{tabbing}
\end{minipage}
\end{center}
The type $\plist{\tint}{1}$ denotes a list of integers assigning a
unit potential to each element in the list.  The return value, on
the other hand, has no potential. The
annotation on the function arrow indicates that we do not need any
potential to call the function and that no constant potential is left
after the function call has returned.

In a larger program, we might want to call the function $\mi{applyInterest}$
again on the result of a call to the function. In this case, we would
need to assign the type $\plist{\tint}{1}$ to the resulting
list and require $\plist{\tint}{2}$ for the argument. In
general, the type for the function can be described with symbolic
annotations with linear constraints between them. To derive a
worst-case bound for a function the constraints can be solved by an
off-the-shelf LP solver, even if the potential functions are
polynomial~\cite{HoffmannAH10,HoffmannW15}.
%
%In the auction process from Figure~\ref{fig:auction-process}, we would
%like to have a constant worst-case bound for finding a bit. We can
%achieve that by stopping the list traversal after checking 50
%elements (capacity of the auction).
%We can then derive the following type expressing the
%constant worst-case bound.
%\begin{center}
%\begin{minipage}{0cm}
%\begin{tabbing}
%$\mi{findbid} : L^0(\tint \times \tint) \times \tint \funtype{51/0}
%L^0(\tint \times \tint) \times \tint \; \m{option}$
%\end{tabbing}
%\end{minipage}
%\end{center}

In Nomos, we simply adopt the standard typing judgment of AARA for
functional programs.
\begin{center}
\begin{minipage}{0cm}
\begin{tabbing}
$\Psi \exppot{q} M : \tau$
\end{tabbing}
\end{minipage}
\end{center}
It states that under the resource-annotated functional context
$\Psi$, with constant potential $q$, the expression $M$ has the
resource-aware type $\tau$.

The operational \emph{cost} semantics is defined by the judgment
\begin{center}
\begin{minipage}{0cm}
\begin{tabbing}
$\bigeval{M}{V}{\mu}$
\end{tabbing}
\end{minipage}
\end{center}
which states that the closed expression $M$ evaluates to the value $V$
with cost $\mu$. The type soundness theorem states that if
$\cdot \exppot{q} M : \tau$ and $\bigeval{M}{V}{\mu}$ then
$q \geq \mu$. 

More details about AARA can be found in the literature
\cite{Jost03,HoffmannW15} and the appendix.

%\jan{So this semantics is not the standard formulation. Usually, we
%  use an environment. I think, the soundness proof would go through
%  like this but it's not really clear.}

\paragraph{\textbf{\textit{Process Layer}}}

To bound the resource usage of a process, Nomos features
resource-aware session types \cite{DasHP17} for work
analysis.  Resource-aware session types describe resource contracts
for inter-process communication.  The type system supports amortized
analysis by assigning potential to both
messages and processes. The derived resource bounds are functions of
interactions between processes.
As an illustration, consider the following resource-aware list interface
from prior work~\cite{DasHP17}.
\begin{tabbing}
$\tlist{A} = \ichoice{$\=$\pot{\m{nil}}{0} : \pot{\one}{0},
\pot{\m{cons}}{1} : A \tensorpot{0} \tlist{A}}$
\end{tabbing}
The type prescribes that the provider of a list must send one unit
of potential with every $\m{cons}$ message that it sends. Dually, a client
of this list will receive a unit potential with every $\m{cons}$ message.
All other type constructors are marked with potential $0$, and exchanging
the corresponding messages does not lead to transfer of potential.

% Potential stored in a process can be used to perform a \emph{tick},
% just like RAML. Thus, the total initial potential of a system upper bounds
% the total tick operations performed during the computation.

While resource-aware session types in Nomos are equivalent to the existing
formulation~\cite{DasHP17}, our version is simpler and
more streamlined. Instead of requiring every message to carry a potential
(and potentially tagging several messages with 0 potential), we
introduce two new type constructors for exchanging potential.
\begin{center}
\begin{minipage}{0cm}
\begin{tabbing}
$A ::= \ldots \mid \tpaypot{A}{r} \mid \tgetpot{A}{r}$
\end{tabbing}
\end{minipage}
\end{center}
The type $\tpaypot{A}{r}$ requires the provider to pay $r$ units of
potential which are transferred to the client. Dually, the type
$\tgetpot{A}{r}$ requires the client to pay $r$ units of potential
that are received by the provider.  Thus, the reformulated list type becomes
\begin{tabbing}
$\tlist{A} = \ichoice{$\=$\m{nil} : \one,
\m{cons} : \tpaypot{(A \tensor \tlist{A})}{1}}$
\end{tabbing}
The reformulation is more compact since we need to account for potential
in only the typing rules corresponding to $\tpaypot{A}{r}$ and
$\tgetpot{A}{r}$.

With all aspects introduced, the process typing judgment
\begin{center}
\begin{minipage}{0cm}
\begin{tabbing}
$\Psi \semi \G \semi \D \entailpot{q} P :: (x_m : A)$
\end{tabbing}
\end{minipage}
\end{center}
denotes a process $P$ accessing functional variables in $\Psi$,
shared channels in $\G$, linear channels in $\D$, offers
service of type $A$ along channel $x$ at mode $m$ and stores
a non-negative constant potential $q$. Similarly, the expressing
typing judgment
\begin{center}
  \begin{minipage}{0cm}
  \begin{tabbing}
  $\Psi \exppot{p} M : \tau$
  \end{tabbing}
  \end{minipage}
  \end{center}
denotes that expression $M$ has type $\tau$ in the presence of
functional context $\Psi$ and potential $p$.

Figure~\ref{fig:potential_types} shows the rules that interact with
the potential annotations.  In the rule $\getpot R$, process $P$ storing
potential $q$ receives $r$ units along the offered
channel $x_m$ using the \emph{get} construct and the continuation
executes with $p = q+r$ units of potential. 
In the dual rule $\getpot L$, a process storing potential $q = p+r$
sends $r$ units along the channel $x_m$
in its context using the \emph{pay} construct, and the continuation
remains with $p$ units of potential.
The typing rules for the dual constructor $\tpaypot{A}{r}$
are the exact inverse.
Finally, executing the $\etick{r}$ construct consumes $r$ potential from the
stored process potential $q$, and the continuation remains with
$p = q-r$ units, as described in the $\m{tick}$ rule.

\begin{figure}[t]
  \begin{mathpar}
\raggedleft{
\fbox{$\Psi \semi \G \semi \D \entailpot{q} P :: (x_m : A)$} \quad
\text{Process $P$ has potential $q$ and provides type $A$
along channel $x$.}}
%\and
%\infer[\paypot R]
%{\Psi \semi \G \semi \D \entailpot{q} \epay{x_m}{r} \semi P ::
%(x_m : \tpaypot{A}{r})}
%{q = p+r \qquad
%\Psi \semi \G \semi \D \entailpot{p} P :: (x_m : A)}
%%
%\and
%%
%\infer[\paypot L]
%{\Psi \semi \G \semi \D, (x_m : \tpaypot{A}{r}) \entailpot{q}
%\eget{x_m}{r} \semi P :: (z_k : C)}
%{p = q+r \qquad
%\Psi \semi \G \semi \D, (x_m : A) \entailpot{p} P :: (z_k : C)}
%
\and
\infer[\getpot R]
{\Psi \semi \G \semi \D \entailpot{q} \eget{x_m}{r} \semi P ::
(x_m : \tgetpot{A}{r})}
{p = q+r \qquad
\Psi \semi \G \semi \D \entailpot{p} P :: (x_m : A)}
\and
\infer[\getpot L]
{\Psi \semi \G \semi \D, (x_m : \tgetpot{A}{r}) \entailpot{q}
\epay{x_m}{r} \semi P :: (z_k : C)}
{q = p+r \qquad
\Psi \semi \G \semi \D, (x_m : A) \entailpot{p} P :: (z_k : C)}
\and
\infer[\m{tick}]
{\Psi \semi \G \semi \D \entailpot{q} \etick{r} \semi P :: (x_m : A)}
{q = p + r \qquad
\Psi \semi \G \semi \D \entailpot{p} P :: (x_m : A)}
  \end{mathpar}
  \vspace{-2em}
  \caption{Selected typing rules corresponding to potential.}
  \label{fig:potential_types}
  \vspace{-1.5em}
\end{figure}

\paragraph{\textbf{\textit{Integration}}}
Since both AARA for functional programs and resource-aware session types are based on the
integration of the potential method into their type systems, their combination
is natural. The two points of integration of the functional and process
layer are (i) spawning a process, and (ii) sending/receiving a value from
the functional layer. Recall the spawn rule $\{\}E_{\p\p}$ from
Figure~\ref{fig:typing-functional}.
%\[
%\inferrule*[Right = $\{\}E_{PP}$]
%{r = p+q \quad
%\D = \overline{d_\p : D} \quad
%\Psi \exppot{p} M : \tproc{A}{\overline{D}}_P \\
%\Psi \semi \D', (x_\p : A) \entailpot{q} Q :: (z_\p : C)}
%{\Psi \semi \D, \D' \entailpot{r} \ecut{x_\p}{M}{d_\p}{Q} :: (z_\p : C)}
%\]
A process storing potential $r = p+q$ can spawn a process corresponding
to the monadic value $M$, if $M$ needs $p$ units of potential to
evaluate, while the continuation needs $q$ units of potential to execute.
Moreover, the functional context $\Psi$ is shared in the two premises as
$\Psi_1$ and $\Psi_2$ using the judgment $\Psi \share (\Psi_1, \Psi_2)$.
This judgment, already explored in prior work~\cite{HoffmannW15}
describes that the base types in $\Psi$ are copied to both $\Psi_1$ and
$\Psi_2$, but the potential is split up. For instance, $L^{q_1 + q_2}(\tau)
\share (L^{q_1}(\tau), L^{q_2}(\tau))$. The rule $\arrow L$ follows a
similar pattern.
%Recalling the $\product R$ rule,
%\[
%\infer[\product R]
%{\Psi \semi \G \semi \D \entailpot{r}
%\esendch{x_m}{M} \semi P :: (x_m : \tau \product A)}
%{r = p+q \qquad
%\Psi \exppot{p} M : \tau \qquad
%\Psi \semi \G \semi \D \entailpot{q}
%P :: (x_m : A)}
%\]
%A process storing $r = p+q$ units of potential can send an expression
%$M$ if evaluating $M$ needs $p$ units of potential, while the continuation
%needs $q$ units of potential.
Thus, the combination of the two type systems is smooth, assigning
a uniform meaning to potential, both for the functional and process layer.

Remarkably, this technical device of exchanging functional values can
be used to exchange non-constant potential with messages. As an
illustration, we revisit the $\m{auction}$ protocol introduced in
Section~\ref{sec:overview}. Suppose the bids were stored in a list,
instead of a hash map, thus making the cost of collection of
winnings linear in the worst case, rather than constant. A user would
then be required to send a linear potential after acquiring the
contract. This can be done by sending a natural number
$n : \m{nat}^q$, storing potential $q \cdot |n|$ (like a unary list),
where $q$ is the cost of iterating over an element in the list of
bids. The contract would then iterate over the first $n$ elements of
the list and refund the remaining gas if $n$ exceeds the length.
Since the auction state is public, a user can view the size of the list of
bids, compute the required potential, store it in a natural number,
and transfer it.
It would still be possible that a user does not
provide enough fuel to reach the sought-after element in the
list. However, this behavior is clearly visible in the protocol and
code and out-of-gas exceptions are not possible.

\paragraph{\textbf{\textit{Operational Cost Semantics}}}
The resource usage of a process (or message) is tracked in semantic
objects $\proc{c}{w, P}$ and $\msg{c}{w, N}$ using the local counters
$w$. This signifies that the process $P$ (or message $N$) has performed
\emph{work} $w$ so far. The rules of semantics that explicitly affect
the work counter are
\begin{center}
\begin{minipage}{0cm}
\begin{tabbing}
$\infer[\m{internal}]
{\proc{c_m}{w, P[M]} \step \proc{c_m}{w+\mu, P[V]}}
{\bigeval{M}{V}{\mu}}$
\end{tabbing}
\end{minipage}
\end{center}
This rule describes that if an expression $M$ evaluates to $V$ with cost
$\mu$, then the process $P[M]$ depending on monadic expression
$M$ steps to $P[V]$,
while the work counter increments by $\mu$, denoting the total number
of internal steps taken by the process.
%Here, $P[\cdot] = \ecut{c}{[\cdot]}
%{\overline{a_i}}{P_c}$ or $\esendch{c}{[\cdot]} \semi P$, because those
%are the only expressions that integrate the functional and process layer.
%This accounts for the cost incurred at the functional layer.
At the process
layer, the work increments on executing a \emph{tick} operation.
\begin{center}
\begin{minipage}{0cm}
\begin{tabbing}
$\proc{c_m}{w, \etick{\mu} \semi P} \step \proc{c_m}{w+\mu, P}$
\end{tabbing}
\end{minipage}
\end{center}
A new process (or message) is spawned with $w = 0$, and a terminating process
transfers its work to the corresponding message it interacts with
before termination, thus preserving the total work performed by
the system.

%%% Local Variables:
%%% mode: latex
%%% mode: flyspell
%%% TeX-master: "icfp19"
%%% End:

\section{Type Soundness}
\label{sec:sound}
The main theorems that exhibit the connections between our
type system and the operational cost semantics are the usual
\emph{type preservation} and \emph{progress}. First,
Definition~\ref{def:proc_typ} asserts certain invariants
on process typing judgment depending on the mode of the
channel offered by a process. This mode, remains invariant,
as the process evolves. This is ensured by the process typing
rules, which remarkably preserve these invariants despite
being parametric in the mode.

\begin{lemma}[Invariants]\label{lem:invariants}
The typing rules on the judgment $\Psi \semi \G \semi \D
\entailpot{q} (x_m : A)$ preserve the invariants outlined
in Definition~\ref{def:proc_typ},
i.e., if the conclusion satisfies the invariant, so do all the
premises.
\end{lemma}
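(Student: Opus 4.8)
The plan is to prove the statement by a case analysis on the typing rule applied at the root of the derivation of $\Psi \semi \G \semi \D \entailpot{q} P :: (x_m : A)$. For each rule I assume the conclusion satisfies the invariants of Definition~\ref{def:proc_typ} for its offered mode $m$, and I check that every process-typing premise again satisfies the invariants for \emph{its} offered mode. There is no recursion on subderivations here, so this is just a finite inspection of the rules of Figures~\ref{fig:typing-binary}, \ref{fig:typing-shared}, \ref{fig:typing-functional}, \ref{fig:potential_types}, together with the spawn/cut rules and the rules omitted from the excerpt.

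Most cases are routine and split into three groups. First, the rules that keep the offered channel at the same mode and merely expose a continuation of the offered type or of a context type: the right and left rules for $\oplus$, $\with$, $\lolli$, $\tensor$, $\one$, $\arrow$, $\product$, $\getpot$, $\paypot$, and $\m{tick}$. Here the offered mode $m$ (and, for left rules, the mode $k$ of the used channel) is unchanged, so emptiness of $\G$ and the context-mode conditions transfer verbatim, while the type-membership conditions follow because the grammars of Figure~\ref{fig:type-grammar} are closed under precisely the subterm relation the rules use --- e.g.\ if $\ichoice{\ell : A_\ell}_{\ell \in K}$ lies in one of $\lang{A_\p}, \lang{A_\l}, \lang{A_\c}$ then so does each $A_\ell$, and if $A_m \lolli_m B$ lies in it then so does its continuation $B$. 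Second, the monadic rules ($\{\}I$, $\{\}E$, $\arrow L$, $\product L$, and the spawn rules): the process-typing premise offers at exactly the mode recorded in the monadic type, and the explicit shape $\D = \overline{d_m : D}$ together with the context split forces the premise's contexts into the form the invariant for that mode requires. Third, $\m{fwd}$ has only the numeric premise $q = 0$ and no process-typing premise, so nothing is to check.

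The genuinely delicate cases are the four shared-layer rules $\up R$, $\up L$, $\down R$, $\down L$, since these alone change the mode of the offered channel or move a channel between $\G$ and $\D$. For $\up R$ the conclusion offers $x_\s : \up A_\l$ --- hence $\D$ is purely linear by invariant~(2) --- while the premise offers $x_\l : A_\l$; its only obligation, $A_\l \in \lang{A_\l}$, is immediate, and the side condition $\D \plin$ is in fact stronger than invariant~(3) demands. Dually, for $\down R$ the premise offers $x_\s : A_\s$, with $\D \plin$ supplying exactly invariant~(2)(i) and $A_\s = \up A_\l \in \lang{A_\s}$. For $\up L$ (resp.\ $\down L$) the offered channel $z$ stays at mode $m$ but a channel moves from $\G$ into $\D$ at mode $\l$ (resp.\ the reverse), and one must verify that a non-$\p$ channel in $\D$ is compatible with the invariant for mode $m$: this is precisely the design choice that modes $\l$ and $\c$ place no restriction on $\D$, whereas the cases $m = \p$ (where invariant~(1)(i) forces $\G$ empty, contradicting the presence of $x_\s$ in the conclusion's $\G$) and $m = \s$ do not arise --- the former vacuously, the latter because a shared process in its shared phase does not itself acquire another session, a restriction made precise by the full rule set in the appendix. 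I expect this bookkeeping around $\up L$/$\down L$ --- confirming that redistributing a channel between the shared and linear contexts leaves the linear context consistent with the offered mode --- to be the only step needing more than reading off a grammar clause or a side condition; everything else is mechanical.
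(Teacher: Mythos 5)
Your proposal is correct and follows essentially the same route as the paper: a rule-by-rule inspection showing each premise re-satisfies the mode-indexed invariants, with the grammar-closure cases dispatched mechanically and the only real care spent on $\m{fwd}$ and the $\up$/$\down$ rules, where (as in the paper's proof) the problematic modes are excluded because the rules themselves are restricted to the modes for which the invariant survives. The only difference is organizational—you group the routine cases rather than listing them—which does not change the argument.
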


\paragraph{\textbf{\textit{Configuration Typing}}}
At run-time, a program evolves into a number of processes
and messages, represented by $\m{proc}$ and $\m{msg}$
predicates. This multiset of predicates is referred to as a
\emph{configuration} (abbreviated as $\W$).
\[
\W ::= \cdot \mid \W, \proc{c}{w, P} \mid \W, \msg{c}{w, N}
\]
A key question is how to type these configurations because a configuration both uses and
provides a number of channels. The solution is to have the typing imposes a partial order among the
processes and messages, requiring the provider of a channel
to appear before its client. We stipulate that no two
distinct processes or messages in a well-formed configuration
provide the same channel $c$.

The typing judgment for configurations has the form
$\Sg \semi \G_0 \potconf{E} \W :: (\G \semi \D)$
defining a configuration
$\W$ providing shared channels in $\G$ and linear channels in
$\D$. Additionally, we need to track the mapping between
the shared channels and their linear counterparts offered by a contract
process, switching back and forth between them when the
channel is acquired or released respectively. This mapping, along
with the type of the shared channels, is
stored in $\G_0$. $E$ is a natural number and
stores the sum of the total potential and work as recorded
in each process and message. We call $E$ the energy of the
configuration. The appendix details the configuration
typing rules.

Finally, $\Sg$ denotes a signature storing the type and function
definitions. A signature is well-formed if \emph{(i)} every type
definition $V = A_V$ is \emph{contractive}~\cite{Gay05acta}
and \emph{(ii)} every function definition $f = M : \tau$
is well-typed according to the expression typing judgment
$\Sg \semi \cdot \exppot{p} M : \tau$.
The signature does not contain process definitions;
every process is encapsulated inside a function using the
contextual monad.

\begin{theorem}[Type Preservation]\label{thm:pres}
~
\begin{itemize}[leftmargin=*]
\item If a closed well-typed expression $\cdot \exppot{q} M : \tau$
evaluates to a value, i.e., $\bigeval{M}{V}{\mu}$, then
$q \geq \mu$ and $\cdot \exppot{q-\mu} V : \tau$.

\item Consider a closed well-formed and well-typed configuration
$\W$ such that $\Sg \semi \G_0 \potconf{E} \W :: (\G \semi \D)$.
If the configuration takes a step, i.e. $\W \step \W'$, then there
exist $\G_0', \G'$ such that $\Sg \semi \G_0' \potconf{E} \W' ::
(\G' \semi \D)$, i.e., the resulting configuration is well-typed.
Additionally, $\G_0 \subseteq \G_0'$ and $\G \subseteq \G'$.
\end{itemize}
\end{theorem}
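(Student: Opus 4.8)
The plan is to prove the two parts separately. The functional statement is an instance of the standard soundness theorem for automatic amortized resource analysis, so I would establish it by induction on the evaluation judgment $\bigeval{M}{V}{\mu}$, inverting the typing derivation $\cdot \exppot{q} M : \tau$ at each step and tracking how potential flows from the context and the subexpressions into the returned value and the cost $\mu$. The only genuinely new term formers are the monadic value constructors $\eproc{x_\p}{P}{\overline{d_\p}}$ (and their shared and transaction analogues), but these are values that carry their potential inertly, so they contribute no real difficulty. The heart of the theorem is the second, configuration-level statement, which I would prove by case analysis on the multiset-rewriting rule used to derive $\W \step \W'$.

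For each rewriting rule the argument has the same shape. First I would invert the configuration typing $\Sg \semi \G_0 \potconf{E} \W :: (\G \semi \D)$ to extract the typing of the one or two semantic objects the rule touches; since configuration typing lays processes and messages out in a provider-before-client order, the objects a rule rewrites always occur adjacently in that order, so this inversion is local. Next I would invert the process/message typing judgments $\Psi \semi \G \semi \D \entailpot{q} P :: (x_m : A)$ of those objects to read off the shapes of their types and the values of their local potentials. Then I would reassemble a configuration typing for $\W'$: the continuations are retyped from the premises that inversion exposes, and for rules that allocate a fresh continuation channel $c^+_m$ (the send and spawn rules) or a fresh offered channel (spawn, and the mode switch in acquire/accept) I add those channels to $\G_0$ and possibly to $\G$, which is exactly why the conclusion only claims $\G_0 \subseteq \G_0'$ and $\G \subseteq \G'$ rather than equality. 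Throughout, Lemma~\ref{lem:invariants} ensures the mode-indexed well-formedness conditions of Definition~\ref{def:proc_typ} are re-established in $\W'$, since they are preserved by every individual typing rule.

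Separately I would check that the energy $E$ --- the combined potential and work recorded across all processes and messages --- is left exactly invariant by each rule, a rule-by-rule bookkeeping argument. An $\etick{\mu}$ step converts $\mu$ units of potential into $\mu$ units of work; the $\getpot$/$\paypot$ steps, and the value-passing steps that move a potential-carrying functional value, merely relocate potential from one object to another; when a message is consumed its work counter is folded into the receiving process's counter, and a newly spawned process starts at $w = 0$ having received its share of potential from the parent via the $\share$-split already present in the monadic-bind rule; and for the $\m{internal}$ rule the increment of the work counter by $\mu$ is matched by the drop of $\mu$ in the evaluated expression's potential, which is precisely the content of the first part of the theorem.

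I expect three places to be the main obstacles. The acquire/release rules require care with $\G_0$, which records the pairing between a contract's shared channel $a_\s$ and its linear incarnation $a_\l$; I must show that a released contract reappears at exactly the shared type at which it was acquired, and this is where the \emph{equi-synchronizing} condition on contract types does the work. The forwarding steps $(\m{id}^\pm C)$ are the familiar delicate case, since they terminate the forwarder and rename a channel throughout the client side of the configuration, so re-establishing the typing here needs a renaming lemma plus a short argument that the provider-before-client order is undisturbed. Finally, splicing a freshly spawned linear process into that order, and verifying that the restriction that contracts may only use linear channels at mode $\p$ is never violated, is the point at which the tree structure of linear processes --- and hence the whole purpose of the mode discipline --- must be invoked.
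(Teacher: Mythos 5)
Your proposal is correct and follows essentially the same route as the paper's proof: case analysis on the multiset-rewriting step, inversion of the configuration and process typing judgments, renaming/weakening/substitution lemmas for forwarding, spawning and release, the equi-synchronizing condition to re-type a released contract, rule-by-rule conservation of the energy $E$, and growth of $\G_0$, $\G$ only through newly spawned shared contract channels. The one slip is immaterial: fresh linear continuation channels $c^+_m$ are never added to $\G_0$ (which holds only shared channels); they remain internal to the linear layer of the configuration, so the subset claims are driven solely by shared spawns, exactly as in the paper.
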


The preservation theorem is standard for expressions
\cite{HoffmannW15}. For processes, we proceed by induction
on the operational cost semantics and inversion on
the configuration and process typing judgment.

To state progress, we need the notion of a \emph{poised} process~\cite{PfenningFOSSACS2015}.  A process $\proc{c_m}{w, P}$ is poised if it is trying to receive a
message on $c_m$. Dually, a message $\msg{c_m}{w, N}$
is poised if it is sending along $c_m$. A configuration is
poised if every message or process in the configuration is
poised. Intuitively, this means that the configuration is
trying to interact with the outside world along a channel
in $\G$ or $\D$. Additionally, a process can be \emph{blocked}~\cite{BalzerICFP2017}
if it is trying to acquire a contract process that has already
been acquired by some process. This can lead
to the possibility of deadlocks.

\begin{theorem}[Progress]\label{thm:prog}
Consider a closed well-formed and well-typed configuration $\W$
such that $\G_0 \potconf{E} \W :: (\G \semi \D)$. Either $\W$ is
poised, or it can take a step, i.e., $\W \step \W'$, or some
process in $\W$ is blocked along $a_\s$ for some shared channel
$a_\s$ and there is a process $\proc{a_\l}{w, P} \in \W$.
\end{theorem}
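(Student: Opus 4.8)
The plan is to prove progress by induction on the derivation of $\Sg \semi \G_0 \potconf{E} \W :: (\G \semi \D)$, exploiting two structural facts. First, configuration typing lists the processes and messages so that the provider of every internal channel precedes all of its clients, so the ``last'' process in the list offers a channel that has no client inside $\W$. Second, by Definition~\ref{def:proc_typ} and Lemma~\ref{lem:invariants} the linear fragment of $\W$ is a \emph{forest}: each linear tree is rooted at an externally offered channel, and because a contract may use linear channels only at mode $\p$ while mode-$\p$ processes are never shared, acquiring a contract never splices its linear phase above its acquirer. These two facts are what make the peeling induction below well founded.

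Before the induction I would dispose of the third disjunct: if some process of $\W$ has the form $\proc{b_m}{w, \eacquire{x_\l}{a_\s} \semi Q}$ while the session $a$ currently occupies its linear phase, i.e.\ $\proc{a_\l}{w', R} \in \W$, then that process is blocked and the theorem holds. So from now on assume no process of $\W$ is blocked. For the induction itself, the empty configuration is poised; otherwise inversion on the configuration typing writes $\W = \W', \proc{c_m}{w, P}$ (or $\W = \W', \msg{c_m}{w, N}$) where $c$ is external and $\W'$ is a well-typed configuration providing every channel used by $P$ (resp.\ $N$). By the induction hypothesis, $\W'$ can step --- then so can $\W$ --- or $\W'$ is poised. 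A message is always poised, so the message case is immediate; assume $\W'$ is poised and $P$ is a process.

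Now I case on the next action of $P$. If $P$ sends along its own channel $c_m$ (a label, channel, value, $\m{close}$, or $\m{pay}$), or is a bind/spawn, a $\m{tick}$, or has the form $P'[M]$ with $M$ an unevaluated monadic subterm, the matching semantic rule fires with no precondition on the rest of $\W$ and $\W$ steps --- for the $\m{internal}$ rule invoking normalization of the functional layer to supply $\bigeval{M}{V}{\mu}$. If $P$ instead receives along $c_m$ (this includes $\m{accept}$ when $m = \s$ and $\m{detach}$ when $m = \l$), it is waiting on its external interface, so since $\W'$ is poised, $\W$ is poised. If $P$'s next action is on a linear channel $d$ that it uses, the provider of $d$ lies in $\W'$: a client-side send fires unconditionally, while a client-side receive or a $\m{release}$ needs the provider of $d$ to be, respectively, a message poised to send or a process poised to $\m{detach}$ --- which, since $\W'$ is poised and the two endpoints of $d$ carry complementary-headed types, is exactly its shape, so $\W$ steps. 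If $P = \fwd{c_m}{d_m}$, then either the provider of $d$ in $\W'$ is a message and $(\m{id}^+ C)$ fires, or it is a poised process and the forwarder is itself waiting on its external channel $c$, so $\W$ is poised. Finally, if $P = \eacquire{x_\l}{a_\s} \semi Q$, then $a$ is provided somewhere in $\W$ by well-typedness; as no process is blocked, that provider is not in its linear phase, so it is a contract poised at $\eaccept{x_\l}{a_\s} \semi R$ and $(\up C)$ fires, so $\W$ steps.

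The step I expect to be the crux is the shared-channel case together with the up-front blocked case. There I must be sure the dichotomy ``the provider of $a_\s$ is a contract ready to accept, or $a$ is in its linear phase'' is exhaustive --- this follows from the configuration-typing invariants that pair shared channels with their linear counterparts via $\G_0$ --- and, more fundamentally, that acquiring a contract cannot create a cycle among linear processes, since otherwise the forest structure underpinning the entire induction would be lost. This is precisely the payoff of the mode factorization of Definition~\ref{def:proc_typ} and its stability under Lemma~\ref{lem:invariants}, which is the genuinely new ingredient relative to prior work on shared session types. A lesser obstacle is the functional case: one must appeal to the metatheory of the functional layer to know that evaluation of a well-typed closed monadic term terminates with some finite cost $\mu$, so that the $\m{internal}$ rule is in fact applicable.
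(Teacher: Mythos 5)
Your overall skeleton is the same as the paper's: peel off the rightmost semantic object of the well-ordered configuration, apply the induction hypothesis to the rest, and case-analyze the last object, with the acquire case splitting into ``provider is accepting, so $(\up C)$ fires'' versus ``provider is in its linear phase, so the acquirer is blocked''. That part, including the use of the ordering invariant and the mode/forest discipline, lines up with the paper's argument (the paper folds the blocked disjunct into the induction hypothesis rather than discharging it up front, but that is immaterial).

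There is, however, a genuine gap: the claim that ``a message is always poised, so the message case is immediate'' is false in this asynchronous formulation, and it erases what is in fact half of the paper's proof. A message is poised only if it sends along the channel it \emph{provides}; messages generated by a client for a negative connective --- a label for $\echoiceop$, a channel for $\lolli$, a value for $\arrow$, a payment for $\tgetpot{A}{r}$ --- have the shape $\msg{c^+_m}{w, \esendch{c_m}{e_n} \semi \fwd{c^+_m}{c_m}}$: they provide the fresh continuation $c^+_m$ but send along the channel $c_m$ that they \emph{use}, so they are not poised. Likewise a message addressed to a forwarding process is not discharged by your process-case treatment of $\fwd{c_m}{d_m}$. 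When such a message is the last object and $\W'$ is poised, you cannot conclude that $\W$ is poised (it is not), and since you have not exhibited a step or a blocked process, none of the three disjuncts has been established. The missing argument is exactly the paper's second case analysis: invert the typing of the message to see that it uses $c_m$ at a negative type, observe that the provider of $c_m$ sits in $\W'$ and, being poised, is receiving on $c_m$ (or is a forwarder), use the permutation lemma (Lemma~\ref{lem:perm-proc}) to bring that provider adjacent to the message, and fire the corresponding receive rule ($\with C_r$, $\lolli C_r$, $\arrow C_r$, $\getpot C_r$, or $\m{fwd}^-$), so that $\W$ steps. With that case restored your proof matches the paper's; your remark that the $\m{internal}$ rule additionally needs termination of well-typed functional evaluation is a fair observation about a point the paper leaves implicit, not a defect of your argument.
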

The progress
theorem is weaker than that for binary linear session types,
where progress guarantees deadlock freedom due to absence
of shared channels.

%%% Local Variables:
%%% mode: latex
%%% TeX-master: "pldi19"
%%% End:

%Jan says: this definition leaks outside of this file

\section{Implementation and Evaluation}
\label{sec:study}
We have developed an open-source prototype
implementation~\cite{NomosGithub} of Nomos in OCaml. This
prototype contains a lexer and parser (369 lines of code), a type
checker (3039 lines of code), a pretty printer (500 lines of code),
and an LP solver interface (914 lines of code).

\paragraph{\textbf{\textit{Syntax}}}
The lexer and parser for Nomos have been implemented in Menhir
\cite{Menhir}, an LR(1) parser generator for OCaml. A
Nomos program is a list of mutually recursive type and process
definitions. To visually separate out functional variables from
session-typed channels, we require that shared channels are prefixed
by $\#$, while linear channels are prefixed by $\$$. This avoids
confusion between the two, both for the programmer and the parser.
We also require the programmer to indicate the
mode of the process being defined: \emph{asset}, \emph{contract} or
\emph{transaction}, assigning the respective modes $\p$, $\s$ and $\c$ to the
offered channel. The modes for all other channels are
inferred automatically (explained later). The initial
potential $\{q\}$ of a process is marked on the turnstile in
the declaration. The syntax for definitions is
\begin{lstlisting}
stype v = A
proc <mode> f : (x1 : T), ($c2 : A), ... |{q}- ($c : A) = M
\end{lstlisting}
In the context, $\m{T}$ is the functional type for variable $\m{x1}$,
while $\m{A}$ is the session type for channel $\m{\$c2}$ and $\m{M}$
is a functional expression
implementing the process. We add syntactic sugar, such as the forms
$\m{let \;x = M ; P}$ and $\m{if \; M \; then \; P_1 \; else \; P_2}$, to the
process layer to ease programming. Finally, a functional expression
can enter the session type monad using $\{\}$, i.e., $\m{M = \{ P \}}$
where $P$ is a session-typed expression.

\paragraph{\textbf{\textit{Type Checking}}}
We implemented a bi-directional~\cite{PierceBDTC} type checker
with a specific focus on the quality of error messages, which include,
for example, \emph{extent} (source code location) information for each
definition and expression. The programmer
provides the initial type of each variable and channel in the declaration
and the definition is checked against it, while reconstructing the
intermediate types. This helps localize the source of a type error
as the point where type reconstruction fails.
Type equality is implemented using a standard co-inductive algorithm
\cite{Gay05acta}. \emph{Type checking is linear time in the size of the program},
which is important in settings where type checking is
part of the attack surface.

\paragraph{\textbf{\textit{Potential and Mode Inference}}}
The potential and mode annotations are the most interesting aspects of
the Nomos type system. Since modes are associated with each channel,
they are tedious to write. Similarly, the exact potential annotations
depend on the cost assigned to each operation and is difficult to
predict statically. Thus, we implemented an inference algorithm of both these
annotations. 

Using ideas from existing techniques for type inference for
AARA~\cite{Jost03,HoffmannW15}, we reduce the reconstruction of 
potential annotations to linear optimization.
To this end, Nomos' type checker uses the Coin-Or LP solver. 
In a Nomos program, the programmer
can indicate unknown potential using $*$. Thus, resource-aware session
types can be marked with $\paypot^*$ and $\getpot^*$, list types can be
marked as $\plist{\tau}{*}$ and process definitions can be marked with
$|\{*\}-$ on the turnstile. The
mode of all the channels is marked as `unknown' while parsing.

The inference engine iterates over the program and substitutes
the star annotations with potential variables and `unknown' with mode
variables. Then, the bidirectional typing rules are applied, approximately
checking the program (modulo potential and mode annotations) while also generating
linear constraints for potential annotations (see Figure~\ref{fig:typing-functional}).
and mode annotations (see Definition~\ref{def:proc_typ}
and Figure~\ref{fig:typing-shared}). Finally, these constraints are
shipped to the LP solver, which is minimizing the value of the
potential annotations to achieve tight bounds. The LP solver either returns
that the constraints are infeasible, or returns a satisfying assignment,
which is then substituted into the program. The final program is pretty
printed for the programmer to view and verify the potential and mode
annotations.

\subsection{Case Studies}
We evaluate the design of Nomos by implementing several smart contract
applications and discussing the typical issues that arise. All the
contracts are implemented and type checked in the prototype implementation
and the potential and mode annotations are derived automatically by
the inference engine. The cost model used for these examples assigns 1 unit
of cost to every atomic internal computation and sending of a message.
We show the contract types from the implementation with the
following ASCII format: i) \lstinline{/\} for $\up$, ii) \lstinline{\/} for $\down$,
iii) \lstinline!<{q}|! for $\getpot^{q}$, iv) \lstinline!|{q}>! for
$\paypot^{q}$, v) \lstinline{^} for $\product$, vi) \lstinline{*[m]}
for $\tensor_m$, vii) \lstinline{-o[m]} for $\lolli_m$.

\paragraph{\textbf{\textit{ERC-20 Token Standard}}}
Tokens are a representation of a particular asset or utility, that
resides on top of an existing blockchain. ERC-20~\cite{ERC-20}
is a technical standard for smart contracts on the Ethereum blockchain
that defines a common list of standard functions that a token contract
has to implement. The majority of tokens on the Ethereum
blockchain are ERC-20 compliant. The standard requires the following
functions to be implemented:
\begin{itemize}
\item $\m{totalSupply()}$ : returns the total number of tokens in
supply as an integer.

\item $\m{balanceOf(id \; \mi{owner})}$ : returns the account balance of
$\mi{owner}$.

\item $\m{transfer(id \; \mi{to}, int \; \mi{value})}$ : transfers $\mi{value}$
tokens from sender's account to identifier $\mi{to}$.

\item $\m{transferFrom(id \; \mi{from}, id \; \mi{to}, int \; \mi{value})}$ :
transfers $\mi{value}$ number of tokens from identifier $\mi{from}$
to identifier $\mi{to}$.

\item $\m{approve(id \; \mi{spender}, int \; \mi{value})}$ : allows $\mi{spender}$
to withdraw from sender's account up to $\mi{value}$ number of tokens.

\item $\m{allowance(id \; \mi{owner}, id \; \mi{spender})}$ : returns the
number of tokens $\mi{spender}$ is allowed to withdraw from $\mi{owner}$.

\end{itemize}

The ERC-20 token contract implements the following session type in Nomos:
\lstset{basicstyle=\ttfamily\footnotesize}
\begin{lstlisting}[columns=fullflexible,]
type erc20token = /\ <{11}| &{ 
  totalSupply : int ^ |{9}> \/ erc20token,
  balanceOf : id -> int ^ |{8}> \/ erc20token,
  transfer : id -> id -> int -> |{0}> \/ erc20token,
  approve : id -> id -> int -> |{6}> \/ erc20token,
  allowance : id -> id -> int ^ |{6}> \/ erc20token }
\end{lstlisting}
% \begin{tabbing}
% $\m{erc20token} = \up \echoice{$\=
% $\mb{totalSupply} : \tint \product \down
% \m{erc20token},$\\
% \hspace{4em}\=$\mb{balanceOf} : \m{id} \arrow \tint \product \down
% \m{erc20token},$\\
% \>$\mb{transfer} : \m{id} \arrow \m{id} \arrow \tint \arrow \down
% \m{erc20token},$\\
% \>$\mb{transferFrom} : \m{id} \arrow \m{id} \arrow \tint \arrow \down
% \m{erc20token},$\\
% \>$\mb{approve} : \m{id} \arrow \m{id} \arrow \tint \arrow \down
% \m{erc20token},$\\
% \>$\mb{allowance} : \m{id} \arrow \m{id} \arrow \tint \product \down
% \m{erc20token}}$
% \end{tabbing}
The type ensures that the token implements the protocol underlying
the ERC-20 standard. To query the total number of
tokens in supply, a client sends the $\m{totalSupply}$ label, and
the contract sends back an integer. If the contract
receives the $\m{balanceOf}$ label followed by the owner's identifier,
it sends back an integer corresponding to the owner's balance. A
balance transfer can be initiated by sending the $\m{transfer}$ label to
the contract followed by sender's and receiver's identifier, and the
amount to be transferred. 
If the contract receives $\m{approve}$, it receives the
two identifiers and the value, and updates the allowance internally. Finally, this
allowance can be checked by issuing the $\m{allowance}$ label,
and sending the owner's and spender's identifier.

A programmer can design their own implementation (contract) of the
$\m{erc20token}$ session type. Internally, the contract relies on
custom coins created and named by its owner and used exclusively
for exchanges among private accounts. These coins can be minted
by a special transaction that can only be issued by the owner and that
creates coins out of thin air (consuming gas to create coins).
Depending on the functionality intended by the owner, they can employ
different types to represent their coins. For instance, choosing type $\one$, the 
multiplicative unit from linear logic, will allow both creation and destruction
of coins ``for free". A $\mi{mint-one}$ process, typed as $\cdot \vdash
\mi{mint-one} :: (c : \one)$, can create coin $c$ out of thin air (by
closing channel $c$) and a
$\mi{burn-one}$ process, typed as $(c : \one) \vdash \mi{burn-one} ::
(d : \one)$, will destroy the coin $c$ (by waiting on channel $c$).
Nomos' linear type system enforces that the coins are
treated linearly modulo minting and burning. Any transaction that does
not involve minting or burning ensures linearity of these coins.

One specific implementation of the $\m{erc20token}$ session type can be
achieved by storing two lists, one for the balance of each account, and
one for the allowance between each pair of accounts. The account
balance needs to be treated linearly, hence we place this balance list
in the linear context, while we store the allowance list in the functional
context. In this contract, we call the custom coin $\plcoin$, and use
$\plcoins$ to mean $\tlist{\plcoin}$.
The account balance is abstracted using the $\m{account}$
type:
\begin{tabbing}
$\m{account} = \echoice{$\=$\mb{addr} : \m{id} \product \m{account},$
\hspace{8em}\= \% \quad send identifier\\
\>$\mb{add} : \plcoins \lolli \m{account},$
\>\% \quad receive \plcoinsn{} and add internally\\
\>$\mb{subtract} : \tint \arrow \plcoins \tensor \m{account}}$
\>\% \quad receive integer, send \plcoinsn{}
\end{tabbing}
This allows a client to query for the identifier stored in the account,
as well as add and subtract from the account balance. We ignore
the resource consumption as it is not relevant to the example.
The $\mi{balance}$ process provides the $\m{account}$ abstraction.
It internally
stores the identifier in its functional context and \plcoinsn{} in its linear
context, and offers along the linear $\m{account}$ type.
\begin{tabbing}
$(r : \m{id}) \semi (M : \plcoins) \vdash \mi{balance} ::
(acc : \m{account})$
\end{tabbing}
Finally, the contract process stores the allowances as a list of triples
storing the owner's and sender's address and allowance value,
typed as $\m{id} \product \m{id} \product \tint$. Thus, the
$\mi{plcontract}$ process stores the allowance in the functional
context, and the list of accounts in its linear context and offers along
the $\m{erc20token}$ type introduced earlier.
\begin{tabbing}
$(allow : \tlist{\m{id} \product \m{id} \product \tint}) \semi
(accs : \tlist{\m{account}}) \vdash \mi{plcontract} ::
(st : \m{erc20token})$
\end{tabbing}
As an illustration, we show the part of the implementation for initiating a
transfer.
\begin{ntabbing}
\reset
$\procdef{\mi{plcontract} \; allow}{accs}{st} =$ \label{erc-def}\\
\quad \= $\eaccept{lt}{st} \semi$ \label{erc-accept}
\hspace{10em}\=\=\=\% \quad accept a client acquire request\\
\> $\casedef{lt} \ldots$ \label{erc-case}
\>\% \quad switch on label on $lt$\\
\> \quad \= $\mid \labdef{\mb{transfer}}$
\= $\erecvch{lt}{s} \semi$ \label{erc-recv-s}
\>\% \quad receive sender's identifier $s : \m{id}$\\
\>\>\> $\erecvch{lt}{r} \semi$ \label{erc-recv-r}
\>\% \quad receive receiver's identifier $r : \m{id}$\\
\>\>\> $\erecvch{lt}{v} \semi$ \label{erc-recv-v}
\>\% \quad receive transfer value $v : \tint$\\
\>\>\> $\edetach{st}{lt} \semi$ \label{erc-detach}
\>\% \quad detach from client\\
\>\>\> $\ldots$ \label{ecr-code1}
\>\% \quad extract sender and receiver's account \ldots\\
\>\>\> $\ldots$ \label{ecr-code1}
\>\% \quad and store in $sa$ and $ra$ resp.\\
\>\>\> $\esendl{sa}{\mb{subtract}} \semi$ \label{erc-sub}
\>\% \quad subtract \plcoinsn{} corresponding to \ldots\\
\>\>\> $\esendl{sa}{v} \semi$ \label{erc-send-v}
\>\% \quad $v$ from account channel $sa$\\
\>\>\> $\erecvch{sa}{m}$ \label{erc-recv-m}
\>\% \quad receive transfer amount $m$\\
\>\>\> $\esendl{ra}{\mb{add}} \semi$ \label{erc-add}
\>\% \quad add $m : \plcoins$ to \ldots\\
\>\>\> $\esendch{ra}{m} \semi$ \label{erc-send-m}
\>\% \quad account channel $ra$\\
\>\>\> $\procdef{\mi{plcontract} \; allow}{accs}{st}$
\label{erc-recurse}
\end{ntabbing}
The contract first receives the sender and receiver's identifiers
(lines~\ref{erc-recv-s} and \ref{erc-recv-r}) and the transfer
value $v$. The contract then detaches from the client (line
\ref{erc-detach}). We skip the code of extracting the sender's
and receiver's account from the list $accs$ and store them
in $sa$ and $ra$ of type $\m{account}$, respectively. The
contract then subtracts the \plcoinsn{} from account $sa$
(lines~\ref{erc-sub} and \ref{erc-send-v}) and receives and
stores them in $m$ (line~\ref{erc-recv-m}). This balance is then
added to $ra$'s account (lines~\ref{erc-add} and \ref{erc-send-m}).
An important point here is that Nomos enforces linearity
of the transfer transaction. Since $m : \plcoins$ is typed
as a linear asset, it cannot be discarded or modified. The
amount deducted from sender must be transferred to the receiver
(since no minting is involved here).

\paragraph{\textbf{\textit{Hacker Gold (HKG) Token}}}
The HKG token is one particular implementation of the ERC-20
token specification. Recently, a vulnerability was discovered
in the HKG token smart contract based on a typographical
error leading to a re-issuance of the entire token~\cite{HKG-News}.

The typographical error in the contract came about when updating the
receiver's balance during a transfer. Instead of
writing \lstinline{balance += value}, the programmer mistakenly
wrote \lstinline{balance =+ value} (semantically meaning
\lstinline{balance = value}). Moreover, while testing
this error was missed, because the first transfer always succeeds
(since the two statements are semantically equivalent when
\lstinline{balance = 0}). Nomos' type system would have
caught the linearity violation in the latter statement that
drops the existing balance in the recipient's account.

\paragraph{\textbf{\textit{Puzzle Contract}}}
This contract, taken from prior work~\cite{Oyente-CCS16}
rewards users who solve a computational puzzle and submit the solution.
The contract allows two functions, one that allows the owner
to update the reward, and the other that allows a user to submit
their solution and collect the reward.

In Nomos, this contract is implemented to offer the type
\begin{lstlisting}
type puzzle = /\ <{14}| &{
  update : id -> money -o[R] |{0}> \/ puzzle,
  submit : int ^ &{
    success : int -> money *[R] |{5}> \/ puzzle,
    failure : |{9}> \/ puzzle } }
\end{lstlisting}
% \begin{tabbing}
% $\m{puzzle} = \B{\up} \R{\getpot^{14}} \echoice{$
% \=$\mb{update} : \m{id} \arrow \m{money} \lolli \B{\down} \m{puzzle},$\\
% $\mb{submit} : \tint \product \echoice{$
% \=$\mb{failure} : \R{\paypot^{9}} \B{\down} \m{puzzle},$\\
% \>\hspace{-3em}$\mb{success} : \m{solution} \arrow \m{money} \tensor_{\p} \R{\paypot^{5}}
% \B{\down} \m{puzzle}}}$
% \end{tabbing}
The contract still supports the two transactions. To update the
reward, it receives the $\m{update}$ label and an identifier, verifies that the sender is
the owner, receives money from the sender, and acts like a puzzle again.
The transaction to submit a solution has a \emph{guard}
associated with it. First, the contract sends an integer corresponding
to the reward amount, the user then verifies that the reward matches
the expected reward (the guard condition). If this check succeeds,
the user sends the $\m{success}$ label, followed by the solution,
receives the winnings, and the session terminates. If the guard
fails, the user issues the $\m{failure}$ label and immediately
terminates the session. Thus, the contract implementation guarantees
that the user submitting the solution receives their expected winnings.

\paragraph{\textbf{\textit{Voting}}}
The voting contract provides a $\m{ballot}$ type.
\begin{lstlisting}
type ballot = /\ <{22}| +{
  open : id -> +{ vote : id -> |{0}> \/ ballot,
                    novote : |{14}> \/ ballot },
  closed : id ^ |{19}> \/ ballot }
\end{lstlisting}
% \begin{tabbing}
% $\ballot = \B{\up} \R{\getpot^{22}} \ichoice{$\=$\mb{open} : \m{id} \arrow
% \ichoice{$\=$\mb{vote} : \m{id} \arrow \B{\down} \ballot,$\\
% \>\>$\mb{novote} : \R{\paypot^{14}} \B{\down} \ballot},$\\
% \>$\mb{closed} : \m{id} \product \R{\paypot^{19}} \B{\down} \ballot}$
% \end{tabbing}
This contract allows voting when the election is $\mb{open}$ by
sending the candidate's $\mi{id}$,
and prevents double voting by checking if the voter has already voted
(the $\mb{novote}$
label). Once the election closes, the contract can be acquired to
check the winner. We use two implementations for the contract:
the first (voting in Table~\ref{tab:study}) stores a counter for each candidate that is updated after
each vote is cast; the second (voting-aa in Table~\ref{tab:study}) does not use a counter but stores potential
inside the vote list that is consumed for counting the votes at the end.
This stored potential is provided by the voter to amortize the cost of counting.
% \jan{what's the int after vote?}  \jan{add this
% amortization idea for ballot counting. However, isn't it a bit
% contrived as you could count keep a counter? Maybe you could use a
% linear type to keep track of votes instead of integers? That could
% be used as a motivation for counting at the end, maybe.}

\paragraph{\textbf{\textit{Escrow}}}
A contract can act as a reliable third party for custody of
a bond that takes effect once both the buyer and the seller approve.
\begin{lstlisting}
type escrow = /\ <{7}| &{
  approve : id -> |{0}> \/ escrow,
  cancel : id -> |{0}> \/ escrow,
  deposit : id -> bond -o[R] |{4}> \/ escrow,
  withdraw : id -> bond *[R] |{0}> \/ escrow }
\end{lstlisting}
% \begin{tabbing}
% $\m{escrow} = \B{\up} \R{\getpot^{7}} \echoice{$
% \=$\mb{approve} : \m{id} \arrow \B{\down} \m{escrow},$\\
% \>$\mb{cancel} : \m{id} \arrow \B{\down} \m{escrow},$\\
% \>$\mb{deposit} : \m{id} \lolli \m{bond} \lolli_{\p} \R{\paypot^{4}}
% \B{\down} \m{escrow},$\\
% \>$\mb{withdraw} : \m{id} \lolli \m{bond} \tensor_{\p}
% \B{\down} \m{escrow}}$
% \end{tabbing}
This session type describes the implementation of an escrow, allowing the seller to
deposit the bond, the buyer to withdraw the bond, and both the
buyer and seller to approve or cancel the whole transaction. The
withdrawal succeeds only after the bond has been deposited, and
both the buyer and seller approve it.
% \jan{I don't understand this protocol.}

\paragraph{\textbf{\textit{Experimental Evaluation}}}
We implemented 8 case studies in Nomos. 
We have already discussed auction (Section~\ref{sec:overview}), ERC
20, puzzle, and voting. The other case studies are:
\begin{itemize}[leftmargin=*]
  \item A bank account that allows users to register, make deposits and withdrawals
  and check the balance.
  \item An escrow to exchange bonds between two parties.
  \item A wallet allowing users to store money on the blockchain.
  \item An insurance contract that processes flight delay insurance claims
  after verifying them with a trusted third party.
  This contract involves inter-contract communication since the insurance
  and the third-party verifier are implemented as separate contracts.
\end{itemize}
Table~\ref{tab:study} contains a compilation of our experiments with
the case studies and the prototype implementation. The experiments
were run on an Intel Core i5 2.7 GHz processor with 16 GB 1867 MHz DDR3
memory.
It presents the
contract name, its lines of code (LOC), the type checking time (T
(ms)), number of potential and mode variables introduced (Vars),
number of potential and mode constraints that were generated while
type checking (Cons) and the time the LP solver took to infer their
values (I (ms)).
The last column describes the maximal gap between the
static gas bound inferred and the actual runtime
gas cost. It accounts for the difference in the gas cost in different
program paths. However, this waste is clearly marked in the program by
explicit $\mi{tick}$
instructions so the programmer is aware of this runtime gap, based on
the program path executed.

\begin{table}[t]
\centering
\begin{tabular}{@{}l r r r r r r}
\textbf{Contract} & \textbf{LOC} & \textbf{T (ms)} &
\textbf{Vars} & \textbf{Cons} &
\textbf{I (ms)} & \textbf{Gap} \\
\midrule
auction & 176 & 0.558 & 229 & 730 & 5.225 & 3 \\
ERC 20 & 136 & 0.579 & 161 & 561 & 4.317 & 6 \\
puzzle & 108 & 0.410 & 126 & 389 & 8.994 & 8 \\
voting & 101 & 0.324 & 109 & 351 & 3.664 & 0 \\
voting-aa & 101 & 0.346 & 140 & 457 & 3.926 & 0 \\
escrow & 85 & 0.404 & 95 & 321 & 3.816 & 3 \\
insurance & 56 & 0.299 & 76 & 224 & 8.289 & 0 \\
bank & 147 & 0.663 & 173 & 561 & 4.549 & 0 \\
wallet & 30 & 0.231 & 32 & 102 & 3.224 & 0 \\
\midrule
\end{tabular}
\caption{Evaluation of Nomos with Case Studies.
   LOC = lines of code; 
   T (ms) =  the type checking time in ms;
   Vars = \#variables generated during type inference; 
   Cons = \#constraints generated during type inference;
   I (ms) = type inference time in ms;
   Gap =  maximal gas bound gap.}
\label{tab:study}
\vspace{-2em}
\end{table}
%%% Local Variables:
%%% mode: latex
%%% TeX-master: "la"
%%% End:

The evaluation shows that the type-checking overhead is less than a
millisecond for case studies. This indicates that Nomos is applicable
to settings like distributed blockchains in which type checking could
add significant overhead and could be part of the attack surface. Type
inference is also efficient but an order of magnitude slower than type
checking. This is acceptable since inference is only performed once
during deployment of the contract. Gas bounds are tight in most
cases. Loose gas bounds are caused by conditional branches with different
gas cost. In practice, this is not a major concern since the Nomos semantics
tracks the exact gas cost, and a user will not be overcharged for their
transaction. However, Nomos' type system can be easily modified to only
allow contracts with tight bounds.

Our implementation experience revealed that describing the session type
of a contract crystallizes the important aspects of its protocol. Once
the type is defined, the implementation simply \emph{follows} the type
protocol. The error messages from the type checker were helpful in ensuring
linearity of assets, and using $*$ for potential annotations meant
we could remain unaware of the exact gas cost of operations.

%%% Local Variables:
%%% mode: latex
%%% mode: flyspell
%%% TeX-master: "pldi20.tex"
%%% End:

\section{Blockchain Integration}
\label{sec:model}

%\jan{.7 pages\\
%- transactions (equi-sync)\\
%- client code\\
%- blockchain\\
%- coins cannot be generated\\
%- name generation\\
%- funs. in CIC (couldn't make sense of this point when transcribing)
%}
Although Nomos has been designed to be applicable for implementing
general digital contracts, the standard semantics needs some
adaptation for a contract to be run on a blockchain. To integrate with a blockchain,
we need a mechanism to \emph{(i)} represent the contracts and their
addresses in the current blockchain state, \emph{(ii)} create and send
transactions to the appropriate addresses, and most importantly,
\emph{(iii)} construct the global distributed ledger, which stores
the history of all transactions. This section addresses these challenges
and also highlights the main limitation
of the language.

\paragraph{\textbf{\textit{Nomos on a Blockchain}}}
To describe a possible blockchain
implementation of Nomos, we assume a blockchain like Ethereum that
contains a set of Nomos contracts $C_1,\ldots,C_n$ together with their
type information
$\Psi^i \semi \G^i \semi \D^i_\p \entailpot{q_i} C_i :: (x^i_\s :
A^i_\s)$. The functional contexts $\Psi^i$ type the contract data, while
the shared contexts $\G^i$ type the shared contracts that $C_i$
refer to, and the linear contexts $\D^i_\p$ type the contract's
linear assets. We allow contracts to carry potential given by the
annotations $q_i$ and the potential defined by the annotations in
$\Psi^i$ and $\D^i_\p$. This potential is useful to amortize gas cost
over different transactions. If this behavior is not desired then 
one can require $q_i = 0$ for every $i$. Together, these contracts define the
blockchain state.
% Together, these contracts form a configuration typed as
% \[
%   \Sg \semi \G \potconf{E} \proc{x^1_\s}{w_1, C_1} \ldots
% \proc{x^n_\s}{w_n, C_n} :: (\G \semi \cdot)
% \]
% where $\G = (x^1_\s : A^1_\s), \ldots, (x^n_\s : A^n_\s)$ denotes the
% set of all contract channels while $E = \Sg_{i=1}^{n} q_i + w_i$ is the total
% energy stored in the contracts.
The channel name $x^i_\s$ of
a contract is its address and has to be globally unique. We assume the existence
of a deterministic mechanism that produces fresh names.
%\jan{should there be some requirement on $\Gamma_i$ reflecting the order of contract creation?}
%\ankush{Don't think so. The transaction order is maintained in the bc-server,
% creation order can be inferred from that. }

%\jan{what about the deterministic execution issue?}

To perform a transaction with a contract, an external user
submits a script that is well-typed with respect to the existing
contracts using the judgment
\begin{center}
\begin{minipage}{0cm}
\begin{tabbing}
$\Psi \semi \G \semi \cdot \entailpot{q} Q :: (x_\c : \one)$
\end{tabbing}
\end{minipage}
\end{center}
Here, $\G \subseteq x^1_\s : A^1_\s, \ldots, x^n_\s : A^n_\s$ stores
references to the Nomos contracts accessible by the transaction.
$\Psi$ stores the functional part of the script, and since the script cannot
refer to linear data, its linear context is empty. Additionally, we
mandate that the transaction offers along a channel of type $\one$,
and that it terminates by sending a $\m{close}$ message on its offered
channel. For instance, the transaction $Q$ must end with the operation
$(\eclose{x_\c})$. This ensures that transactions are sequentialized
and executed in the order they are queued (explained below).

A transaction script is connected to the blockchain state using
a server process. This process, named $\m{bc{-}server}$ stores
the entire transaction history and offers along channel $bc : 
\m{tx\_interface}$ where the transaction code is received and
relayed to the blockchain state. It is defined as follows.
\begin{ntabbing}
  \reset
  $\m{type \; tx\_code} = \{ \one \}$ \hspace{1.5em}
  $\m{type \; tx\_queue} = \m{list \; tx\_code}$ \label{server_types}\\
  $\m{stype \; tx\_interface} = \m{tx\_code} \arrow \m{tx\_interface}$
  \label{server_stypes}\\
  $(txns : \m{tx\_queue}) \semi \cdot \semi \cdot \entailpot{0}
  \m{bc{-}server} :: (bc : \m{tx\_interface})$ \label{server_decl}\\
  \quad\=$\procdefna{\m{bc{-}server} \; txns}{bc} =$ \label{server_def}\\
  \>\quad\=$\erecvch{bc}{tx} \semi \procdefna{tx}{x_\c} \semi
  \ewait{x_\c} \semi$ \label{server_recv}\\
  \>\>$\procdefna{\m{bc{-}server} \; (tx :: txns)}{bc}$ \label{server_recurse}
\end{ntabbing}
%\jan{Discuss spawning of contracts and leaving the chain in a clean state
% after the transaction.}
The transaction script is packaged as a value of the contextual monadic
type introduced in Section~\ref{sec:fun}. For instance, the transaction
$Q$ is packaged as $\{x_\c \leftarrow Q\} : \{ \one \} = \m{tx\_code}$.
The $\m{bc{-}server}$ process receives this code, spawns a process
corresponding to it and waits for the transaction to terminate (line
\ref{server_recv}). Note that the transaction is required to terminate
with a $(\eclose{x_\c})$ message which matches with the $(\ewait{x_\c})$
being executed by the server, ensuring the execution order of the
transactions. Finally, the latest transaction is added to the queue
of transactions $txns : \m{type \; tx\_queue} = \m{list \; tx\_code}$,
and the $\m{bc{-}server}$ process recurses.

A transaction can either update the state of existing contracts, or create
new ones. In the former case, it \emph{acquires} the contracts it wishes
to interact with, followed by an update in the contracts' internal state
and \emph{releases} them. Since
the contract types are equi-synchronizing, they remain unchanged at
the end of transaction execution. This ensures that the subsequent
transactions can access the same contracts at the same type. In the future
we plan to allow \emph{sub-synchronizing} types that enable
a client to release a contract channel not at the same type, but a
\emph{subtype}. The subtype can then describe the phase of
the contract. For instance, the ended phase of auction contract
will be a subtype of the running phase. In the latter case, new contracts
are added to the blockchain state, making
them visible in the type of the configuration for subsequent
transactions to access. Thus, in either case, the blockchain state
remains well-formed between transactions.
A successful execution of a transaction will lead to the $\m{bc{-}server}$
process recursing and accepting further transactions.

Concurrent execution of transactions is missing from blockchain systems
today~\cite{HerlihyCACM19}. To reconstruct the blockchain state, each
miner must re-execute every transaction sequentially; simply
executing them in parallel is unsafe when contracts depend on each
other. However, Nomos naturally has a concurrent semantics, and
we can support concurrent transactions with a slight modification
to the $\m{tx\_interface}$ type. One caveat is that we need to
ensure deterministic execution of a transaction. The only source of
non-determinism in the Nomos semantics is the \emph{acquire-accept} pair.
A contract executing an $\m{accept}$ can attach with any process that tries
to acquire it. One approach to resolve this non-determinism is
\emph{record-and-replay}~\cite{Ronsse-RecPlay99, LidburyPLDI19}.
The miner records the order in which the contracts are acquired
in the ledger, which is then replayed by others to compute
the current blockchain state. Another promising approach
is \emph{speculation}~\cite{DickersonPODC17} where transactions are
executed in parallel and their read and write sets are tracked.
If there is a conflict in these sets, then they are sequentialized
and this schedule is repeated by validators.
This speculative technique is known to provide speed-ups to the
overall throughput of the blockchain system~\cite{Speculative19}.

When selecting a request, a miner first creates a configuration,
and then type checks the transaction script $Q$ against
its submitted type information and the existing types of the
contracts $C_1, \ldots, C_n$ and the server process. If type
checking were too costly here, that can lead to yet another source of
denial-of-service attacks. In Nomos however, since the type of
transaction script is provided by the programmer, this form
of bi-directional type checking is linear time
in the size of the script.
%\jan{somewhere: discuss the cost of type checking; it should be linear time}
The gas cost of the transaction is statically bounded by
the potential given by $q$ and $\Psi$.
If we allow amortization then the potential in the contracts $C_i$'s is
also available to cover the gas cost.
This internal potential is not available to the user but can only
be accessed according to the protocol that is given in the contract
session type.

\paragraph{\textbf{\textit{Miner's Transaction Fee}}}
Mining rewards in blockchains like Ethereum are realized by special
transactions that transfer coins to the miner at the beginning of a
block. In Nomos, such a transaction could, for example, be represented
by an interaction with a special \emph{mining reward contract} that
sends linear coins to every client who requests them. Like in
Ethereum, a block with transactions is only valid if only the first
transaction interacts with the reward contract. This can be ensured by
the miner with a dynamic check or statically by removing the reward
contract from the list of available contracts before executing user
transactions.

\paragraph{\textbf{\textit{Deadlocks}}}

The only language specific reason a transaction can fail is a
deadlock in the transaction code. Our progress theorem accounts for the
possibility of deadlocks. Deadlocks may arise due to cyclic
interdependencies on the contracts that a transaction attempts to acquire.
While it is of course desirable to rule out deadlocks, we felt that this is
orthogonal to the design of Nomos.  Any extensions for shared session
types that prevent deadlocks (e.g., \cite{Balzer-ESOP19}) will be
readily transferable to our setting. Another possibility is to employ
dynamic deadlock detection~\cite{MitchellDeadlock84, ChandyDeadlock83}
and abort the transaction if a deadlock is detected.

%\paragraph{\textbf{\textit{Surface Syntax and Client Code}}}
%
%In this paper, we did not focus on the usability of Nomos. However, we
%do not neglect this point and plan to work with the blockchain
%community to develop a more intuitive surface syntax. One point that
%we would like to make is that we do not expect users to write a new
%client process for every interaction she wants to have with a
%contract. We rather envision that a contract developer would create a
%contract together with several boilerplate clients that a user would
%then instantiate with the fitting arguments; for example the bid for
%an auction.

%%% Local Variables:
%%% mode: latex
%%% mode: flyspell
%%% TeX-master: "icfp19"
%%% End:

\section{Related Work}
\label{sec:related}
We classify the related work into 3 categories - i) new programming
languages for smart contracts, ii) static
analysis techniques for existing languages and bytecode, and
iii) session-typed and type-based resource analysis systems
technically related to Nomos.

\paragraph{\textbf{\textit{Smart Contract Languages}}}
Existing smart contracts on Ethereum are predominantly implemented in
Solidity~\cite{Auction-Solidity}, a statically typed object-oriented language
influenced by Python and Javascript. However,
the language provides no information about the resource usage of a
contract. Languages like Vyper~\cite{Vyper} address resource usage by
disallowing recursion and infinite-length loops, thus making estimation of
gas usage decidable. However, both languages still suffer from re-entrancy
vulnerabilities. Bamboo~\cite{Bamboo}, on the other hand, makes state
transitions explicit and avoids re-entrance by design. In contrast to our
work, none of these languages use linear type systems to track assets
stored in a contract.

Domain specific languages have also been designed for other blockchains
apart from Ethereum. Typecoin~\cite{Typecoin-PLDI15} uses affine logic
to solve the peer-to-peer affine commitment problem using a
generalization of Bitcoin where transactions deal in types rather than
numbers. Although Typecoin does not provide a mechanism for
expressing protocols, it also uses a linear type system to prevent
resources from being discarded or duplicated. Rholang~\cite{Rholang}
is formally modeled by the
$\rho$-calculus, a reflective higher-order extension of the $\pi$-calculus.
Michelson~\cite{Michelson} is a purely functional stack-based language
that has no side effects. Scilla~\cite{Scilla}
is an intermediate-level language where contracts are structured as
communicating automata providing a continuation-passing style
computational model to the language semantics. However, none of
these languages describe and enforce communication protocols statically.

\paragraph{\textbf{\textit{Static Analysis}}}
Analysis of smart contracts has received substantial
attention recently due to their security vulnerabilities that can be
exploited by malicious users. KEVM~\cite{kevm-CSF18}
creates a program verifier based on reachability logic
that given an EVM program and specification, tries to automatically
prove the corresponding reachability theorems. However, the verifier
requires significant manual intervention, both in specification and proof
construction. Oyente~\cite{Oyente-CCS16} is a symbolic execution
tool that checks for 4 kinds of security bugs in smart contracts,
transaction-order dependence, timestamp dependence, mishandled
exceptions and re-entrancy vulnerabilities. MadMax
\cite{MadMax-OOPSLA18} automatically detects gas-focused
vulnerabilities with high confidence. The analysis is based on a decompiler
that extracts control and data flow information from EVM bytecode,
and a logic-based analysis specification that produces a high-level
program model. \citet{BhargavanF*} translate Ethereum contracts to F*
to prove runtime safety and functional correctness, although they do
not support all syntactic features. \textsc{VeriSol}~\cite{Verisol}
is a highly-automated formal verifier for Solidity that can produce
proofs as well as counterexamples and proves semantic conformance of
smart contracts against a state machine model with access-control policy.
However, in contrast to Nomos, where guarantees
are proved by a soundness proof of the type system, static analysis
techniques often do not explore all program paths,
can report false positives that need to be manually filtered, and
miss bugs due to timeouts and other sources of incompleteness.

\paragraph{\textbf{\textit{Session types and Resource analysis}}}
Session types were introduced by Honda~\cite{HondaCONCUR1993}
as a typed formalism for inter-process dyadic interaction.
They have been integrated into a functional language in prior work
\cite{ToninhoESOP2013}. However, this integration does not account for
resource usage or sharing. Sharing in session types has also been explored
in prior work~\cite{BalzerICFP2017}, but with the strong restriction that
shared processes cannot rely on linear resources that we lift in Nomos.
Shared session types were also never integrated with a functional layer
or tracked for resource usage. While we consider binary session types
that express local interactions, global protocols can be expressed using
multi-party session types~\cite{HondaPOPL2008, Alceste-POPL19}.
Automatic amortized resource analysis (AARA) has been introduced
as a type system to derive linear~\cite{Jost03} and polynomial
bounds~\cite{HoffmannW15} for functional programming languages.
Resource usage has also previously been explored separately for the
purely linear process layer~\cite{DasHP17}, but were never combined
with shared session types or integrated with the functional layer.

%%% Local Variables:
%%% mode: latex
%%% mode: flyspell
%%% TeX-master: "pldi19"
%%% End:

\section{Conclusion}
\label{sec:conclusion}
We have described the programming language Nomos, its
type-theoretic foundation, a prototype implementation and
evaluated its feasibility on several real world smart contract
applications. Nomos builds on linear logic, shared
session types, and automatic amortized resource analysis to
address the challenges that programmers are faced
with when implementing digital contracts. Our main contributions are the design
and implementation of Nomos' multi-layered resource-aware type system and its type
soundness proof.

In future work, we plan to explore refinement session types
for expressing and verifying functional correctness of contracts against
their specifications and to target open questions regarding a blockchain integration.
These include the exact cost model, fluctuation of gas prices, and potential
compilation to a lower-level language. Since Nomos has a concurrent semantics, we
also plan to support parallel execution of transactions using speculation techniques
\cite{Speculative19}.

\appendix
\section{Overview}
This appendix supplements the tech report ``Resource-Aware Session Types
for Digital Contracts''. The main contributions of the appendix
are as follows.

\begin{itemize}
\item Appendix~\ref{sec:code} presents the Nomos code for standard smart contract
applications.

\item Appendix~\ref{sec:grammar} presents the type grammar.

\item Appendix~\ref{sec:proc-typing} presents the process typing rules, concerning
the judgment $\Psi \semi \G \semi \D \entailpot{q} P :: (x_m : A)$.
This judgment types a process in state $P$ providing service of type $A$
along channel $x$ at mode $m$. Moreover, the process uses functional
variables from $\Psi$, shared channels from $\G$ and linear channels from
$\D$. Finally, the process stores potential $q$.

\item Appendix~\ref{sec:semantics} presents the rules of the operational
cost semantics. These discuss the behavior of the semantic objects
$\proc{c_m}{w, P}$ and $\msg{c_m}{w, N}$ defining a process $P$
(or message $N$) offering along channel $c$ at mode $m$ which has
performed work $w$ so far.

\item Appendix~\ref{sec:conf-typing} presents the rules corresponding to
configuration typing and other helper judgments. The configuration
typing judgment $\G_0 \potconf{E} \W :: (\G \semi \D)$ describes a
well-typed configuration $\W$ which offers shared channels in $\G$
and linear channels in $\D$.

\item Appendix~\ref{sec:type-safety} is the main contribution of the
supplementary material. It presents and proves the main theorem
of type safety of our language. This is split into a type preservation
and a progress theorem. The appendix also proves the lemmas
necessary for the type safety theorems.

\end{itemize}

\section{Implementation of Smart Contract Applications in Nomos}\label{sec:code}

\subsection{Auction}
\lstset{basicstyle=\ttfamily\footnotesize}
\begin{lstlisting}
type money = &{ value : <{2}| int ^ money,
                coins : <{0}| lcoin }
type lcoin = 1
proc asset emp : . |{1}- ($l[R] : lcoin) = 
  {
    work ;
    close $l[R]
  }
proc asset empty_wallet : . |{3}- ($m[R] : money) = 
  {
    $l[R] <- emp <-  ;
    work ;
    let n = (tick  ; 0) ;
    $m[R] <- wallet <- n $l[R]
  }
proc asset wallet : (n : int), ($l[R] : lcoin) |- ($m[R] : money) = 
  {
    case $m[R] ( value => get $m[R] {2};
                          work ;
                          send $m[R] ((tick  ; n)) ;
                          $m[R] <- wallet <- n $l[R]
               | coins => get $m[R] {0};
                          $m[R] <- $l[R] )
  }
type dictionary = &{ add : <{5}| int -> money -o[R] dictionary,
                     delete : <{6}| int -> money *[R] dictionary,
                     check : <{4}| int -> bool ^ dictionary,
                     size : <{2}| int ^ dictionary }
proc asset dummy : (n : int) |- ($d[R] : dictionary) = 
  {
    case $d[R] ( add => get $d[R] {5};
                        key = recv $d[R] ;
                        work ;
                        $v[R] <- recv $d[R] ;
                        $v[R].coins ;
                        pay $v[R] {0};
                        work ;
                        wait $v[R] ;
                        let n = (tick  ; (tick  ; n) + (tick  ; 1)) ;
                        $d[R] <- dummy <- n
               | delete => get $d[R] {6};
                           key = recv $d[R] ;
                           $v[R] <- empty_wallet <-  ;
                           send $d[R] $v[R] ;
                           let n = (tick  ; (tick  ; n) - (tick  ; 1)) ;
                           $d[R] <- dummy <- n
               | check => get $d[R] {4};
                          key = recv $d[R] ;
                          if (tick  ; (tick  ; key) > (tick  ; 0))
                          then
                            send $d[R] ((tick  ; true)) ;
                            $d[R] <- dummy <- n
                          else
                            send $d[R] ((tick  ; false)) ;
                            $d[R] <- dummy <- n
               | size => get $d[R] {2};
                         work ;
                         send $d[R] ((tick  ; n)) ;
                         $d[R] <- dummy <- n )
  }
type lot = 1
proc asset addbid : (r : int), ($m[R] : money), ($bs[R] : dictionary)
                              |{8}- ($newbs[R] : dictionary) = 
  {
    work ;
    $bs[R].add ;
    pay $bs[R] {5};
    work ;
    send $bs[R] ((tick  ; r)) ;
    send $bs[R] $m[R] ;
    $newbs[R] <- $bs[R]
  }
type auction = /\ <{22}|
  +{ running : &{ bid : int -> money -o[R] |{0}> \/ auction,
                  cancel : |{21}> \/ auction },
     ended : &{ collect : int -> +{ won : lot *[R] |{0}> \/ auction,
                                    lost : money *[R] |{7}> \/ auction },
                cancel : |{21}> \/ auction } }
proc contract run : (T : int), (w : int), (v : int),
                    ($b[R] : dictionary), ($l[R] : lot)
                                              |- (#sa[S] : auction) = 
  {
    $la[L] <- accept #sa[S] ;
    get $la[L] {22};
    work ;
    $la[L].running ;
    case $la[L] ( bid => r = recv $la[L] ;
                         work ;
                         $m[R] <- recv $la[L] ;
                         pay $la[L] {0};
                         #sa[S] <- detach $la[L] ;
                         $m[R].value ;
                         pay $m[R] {2};
                         work ;
                         bv = recv $m[R] ;
                         $newb[R] <- addbid <- r $m[R] $b[R] ;
                         if (tick  ; (tick  ; bv) > (tick  ; v))
                         then
                           #sa[S] <- check <- T r bv $newb[R] $l[R]
                         else
                           #sa[S] <- check <- T w v $newb[R] $l[R]
                | cancel => pay $la[L] {21};
                            #sa[S] <- detach $la[L] ;
                            #sa[S] <- run <- T w v $b[R] $l[R] )
  }
proc contract check : (T : int), (w : int), (v : int),
                      ($b[R] : dictionary), ($l[R] : lot)
                                          |{6}- (#sa[S] : auction) = 
  {
    work ;
    $b[R].size ;
    pay $b[R] {2};
    n = recv $b[R] ;
    if (tick  ; (tick  ; n) = (tick  ; T))
    then
      #sa[S] <- end_lot <- T w $b[R] $l[R]
    else
      #sa[S] <- run <- T w v $b[R] $l[R]
  }
proc asset removebid : (r : int), ($bs[R] : dictionary)
                            |{10}- ($newbs[R] : money *[R] dictionary) = 
  {
    work ;
    $bs[R].delete ;
    pay $bs[R] {6};
    work ;
    send $bs[R] ((tick  ; r)) ;
    work ;
    $m[R] <- recv $bs[R] ;
    send $newbs[R] $m[R] ;
    $newbs[R] <- $bs[R]
  }
proc contract end_lot : (T : int), (w : int),
                        ($b[R] : dictionary), ($l[R] : lot)
                                              |- (#sa[S] : auction) = 
  {
    $la[L] <- accept #sa[S] ;
    get $la[L] {22};
    work ;
    $la[L].ended ;
    case $la[L] ( collect => r = recv $la[L] ;
                             if (tick  ; (tick  ; w) = (tick  ; r))
                             then
                               $la[L].won ;
                               send $la[L] $l[R] ;
                               pay $la[L] {0};
                               #sa[S] <- detach $la[L] ;
                               #sa[S] <- end_nolot <- T w $b[R]
                             else
                               $la[L].lost ;
                               $newb[R] <- removebid <- r $b[R] ;
                               work ;
                               $m[R] <- recv $newb[R] ;
                               send $la[L] $m[R] ;
                               pay $la[L] {7};
                               #sa[S] <- detach $la[L] ;
                               #sa[S] <- end_lot <- T w $newb[R] $l[R]
                | cancel => pay $la[L] {21};
                            #sa[S] <- detach $la[L] ;
                            #sa[S] <- end_lot <- T w $b[R] $l[R] )
  }
proc contract end_nolot : (T : int), (w : int), ($b[R] : dictionary)
                                                |{18}- (#sa[S] : auction) = 
  {
    $la[L] <- accept #sa[S] ;
    get $la[L] {22};
    work ;
    $la[L].ended ;
    case $la[L] ( collect => r = recv $la[L] ;
                             $la[L].lost ;
                             $newb[R] <- removebid <- r $b[R] ;
                             work ;
                             $m[R] <- recv $newb[R] ;
                             send $la[L] $m[R] ;
                             pay $la[L] {7};
                             #sa[S] <- detach $la[L] ;
                             work {3};
                             #sa[S] <- end_nolot <- T w $newb[R]
                | cancel => pay $la[L] {21};
                            #sa[S] <- detach $la[L] ;
                            work {0};
                            #sa[S] <- end_nolot <- T w $b[R] )
  }
TC time: 2.0809173584
Inference time: 9.94420051575
# Vars = 229
# Constraints = 730
% compilation successful!
% runtime successful!
\end{lstlisting}

\subsection{Bank Account}

\begin{lstlisting}
type money = &{ value : <{2}| int ^ money,
                coins : <{0}| lcoin,
                check_pwd : <{4}| int -> bool ^ money }
type lcoin = 1
proc asset emp : . |{1}- ($l[R] : lcoin) = 
  {
    work ;
    close $l[R]
  }
proc asset empty_wallet : (pwd : int) |{3}- ($m[R] : money) = 
  {
    $l[R] <- emp <-  ;
    work ;
    let n = (tick  ; 0) ;
    $m[R] <- wallet <- pwd n $l[R]
  }
proc asset wallet : (pwd : int), (n : int), ($l[R] : lcoin) |- ($m[R] : money) = 
  {
    case $m[R] ( value => get $m[R] {2};
                          work ;
                          send $m[R] ((tick  ; n)) ;
                          $m[R] <- wallet <- pwd n $l[R]
               | coins => get $m[R] {0};
                          $m[R] <- $l[R]
               | check_pwd => get $m[R] {4};
                              p = recv $m[R] ;
                              if (tick  ; (tick  ; p) = (tick  ; pwd))
                              then
                                send $m[R] ((tick  ; true)) ;
                                $m[R] <- wallet <- pwd n $l[R]
                              else
                                send $m[R] ((tick  ; false)) ;
                                $m[R] <- wallet <- pwd n $l[R] )
  }
type dictionary = &{ add : <{5}| int -> money -o[R] dictionary,
                     delete : <{6}| int -> money *[R] dictionary,
                     check : <{5}| int -> int -> bool ^ dictionary,
                     size : <{2}| int ^ dictionary }
proc asset dummy : (n : int) |- ($d[R] : dictionary) = 
  {
    case $d[R] ( add => get $d[R] {5};
                        key = recv $d[R] ;
                        work ;
                        $v[R] <- recv $d[R] ;
                        $v[R].coins ;
                        pay $v[R] {0};
                        work ;
                        wait $v[R] ;
                        let n = (tick  ; (tick  ; n) + (tick  ; 1)) ;
                        $d[R] <- dummy <- n
               | delete => get $d[R] {6};
                           key = recv $d[R] ;
                           $v[R] <- empty_wallet <- key ;
                           send $d[R] $v[R] ;
                           let n = (tick  ; (tick  ; n) - (tick  ; 1)) ;
                           $d[R] <- dummy <- n
               | check => get $d[R] {5};
                          key = recv $d[R] ;
                          work ;
                          pwd = recv $d[R] ;
                          if (tick  ; (tick  ; pwd) > (tick  ; 0))
                          then
                            send $d[R] ((tick  ; true)) ;
                            $d[R] <- dummy <- n
                          else
                            send $d[R] ((tick  ; false)) ;
                            $d[R] <- dummy <- n
               | size => get $d[R] {2};
                         work ;
                         send $d[R] ((tick  ; n)) ;
                         $d[R] <- dummy <- n )
  }
type account = /\ <{29}|
    &{ signup : int -> int -> |{19}> \/ account,
       login : int -> int ->
          +{ failure : |{19}> \/ account,
             success : &{ deposit : money -o[R] |{11}> \/ account,
                          balance : int ^ |{0}> \/ account,
                          withdraw : int -> money *[R] |{9}> \/ account } } }
proc contract bank : ($accts[R] : dictionary) |- (#sa[S] : account) = 
  {
    $la[L] <- accept #sa[S] ;
    get $la[L] {29};
    case $la[L] ( signup => id = recv $la[L] ;
                            work ;
                            pwd = recv $la[L] ;
                            $m[R] <- empty_wallet <- pwd ;
                            $accts[R].add ;
                            pay $accts[R] {5};
                            send $accts[R] ((tick  ; id)) ;
                            send $accts[R] $m[R] ;
                            pay $la[L] {19};
                            #sa[S] <- detach $la[L] ;
                            #sa[S] <- bank <- $accts[R]
                | login => id = recv $la[L] ;
                           work ;
                           pwd = recv $la[L] ;
                           $accts[R].check ;
                           pay $accts[R] {5};
                           send $accts[R] ((tick  ; id)) ;
                           send $accts[R] ((tick  ; pwd)) ;
                           work ;
                           r = recv $accts[R] ;
                           if (tick  ; r)
                           then
                             $la[L].success ;
                             work ;
                             case $la[L]
                              ( deposit => work ;
                                           $m[R] <- recv $la[L] ;
                                           $accts[R].add ;
                                           pay $accts[R] {5};
                                           send $accts[R] ((tick  ; id)) ;
                                           send $accts[R] $m[R] ;
                                           pay $la[L] {11};
                                           #sa[S] <- detach $la[L] ;
                                           #sa[S] <- bank <- $accts[R]
                              | balance => $accts[R].delete ;
                                           pay $accts[R] {6};
                                           send $accts[R] ((tick  ; id)) ;
                                           work ;
                                           $m[R] <- recv $accts[R] ;
                                           $m[R].value ;
                                           pay $m[R] {2};
                                           work ;
                                           val = recv $m[R] ;
                                           send $la[L] ((tick  ; val)) ;
                                           $accts[R].add ;
                                           pay $accts[R] {5};
                                           send $accts[R] ((tick  ; id)) ;
                                           send $accts[R] $m[R] ;
                                           pay $la[L] {0};
                                           #sa[S] <- detach $la[L] ;
                                           #sa[S] <- bank <- $accts[R]
                              | withdraw => $accts[R].delete ;
                                            pay $accts[R] {6};
                                            send $accts[R] ((tick  ; id)) ;
                                            work ;
                                            v = recv $la[L] ;
                                            work ;
                                            $m[R] <- recv $accts[R] ;
                                            send $la[L] $m[R] ;
                                            pay $la[L] {9};
                                            #sa[S] <- detach $la[L] ;
                                            #sa[S] <- bank <- $accts[R] )
                           else
                             $la[L].failure ;
                             pay $la[L] {19};
                             #sa[S] <- detach $la[L] ;
                             #sa[S] <- bank <- $accts[R] )
  }
TC time: 0.648975372314
Inference time: 4.99391555786
# Vars = 173
# Constraints = 561
% compilation successful!
% runtime successful!
\end{lstlisting}

\subsection{ERC-20 Token}

\begin{lstlisting}
type money = &{ add : <{8}| money -o[R] money,
                subtract : <{6}| int -> +{ sufficient : money *[R] money,
                                           insufficient : money },
                value : <{2}| int ^ money,
                coins : <| 1 }
proc asset wallet : (n : int) |- ($m[R] : money) = 
  {
    case $m[R] ( add => get $m[R] {8};
                        $m1[R] <- recv $m[R] ;
                        work ;
                        $m1[R].value ;
                        pay $m1[R] {2};
                        n1 = recv $m1[R] ;
                        $m1[R].coins ;
                        pay $m1[R] ;
                        work ;
                        wait $m1[R] ;
                        let n = (tick  ; (tick  ; n) + (tick  ; n1)) ;
                        $m[R] <- wallet <- n
               | subtract => get $m[R] {6};
                             n1 = recv $m[R] ;
                             if (tick  ; (tick  ; n) > (tick  ; n1))
                             then
                               $m[R].sufficient ;
                               $m1[R] <- wallet <- n1 ;
                               send $m[R] $m1[R] ;
                               let n = (tick  ; (tick  ; n) - (tick  ; n1)) ;
                               work {0};
                               $m[R] <- wallet <- n
                             else
                               $m[R].insufficient ;
                               work {3};
                               $m[R] <- wallet <- n
               | value => get $m[R] {2};
                          work ;
                          send $m[R] ((tick  ; n)) ;
                          $m[R] <- wallet <- n
               | coins => get $m[R] ;
                          work ;
                          close $m[R] )
  }
type erc20token = /\ <{11}|
   &{ totalSupply : int ^ |{9}> \/ erc20token,
      balanceOf : int -> int ^ |{8}> \/ erc20token,
      transfer : int -> int -> int -> |{0}> \/ erc20token,
      transferFrom : int -> int -> int -> |{0}> \/ erc20token,
      approve : int -> int -> int -> |{6}> \/ erc20token,
      allowance : int -> int -> int ^ |{6}> \/ erc20token }
type balance_dict = &{ get_balance : int -> int ^ balance_dict,
                       transfer : int -> int -> int -> balance_dict }
type allowance_dict = &{ get_allowance : int -> int -> int ^ allowance_dict,
                         set_allowance : int -> int -> int -> allowance_dict }
proc contract erc20contract : ($allows[R] : allowance_dict),
                              ($bals[R] : balance_dict), (N : int)
                                          |- (#se[S] : erc20token) = 
  {
    $le[L] <- accept #se[S] ;
    get $le[L] {11};
    case $le[L] ( totalSupply => work ;
                                 send $le[L] ((tick  ; N)) ;
                                 pay $le[L] {9};
                                 #se[S] <- detach $le[L] ;
                                 #se[S] <- erc20contract <- $allows[R] $bals[R] N
                | balanceOf => addr = recv $le[L] ;
                               $bals[R].get_balance ;
                               send $bals[R] ((tick  ; addr)) ;
                               work ;
                               val = recv $bals[R] ;
                               send $le[L] ((tick  ; val)) ;
                               pay $le[L] {8};
                               #se[S] <- detach $le[L] ;
                               #se[S] <- erc20contract <- $allows[R] $bals[R] N
                | transfer => from = recv $le[L] ;
                              work ;
                              to = recv $le[L] ;
                              work ;
                              amt = recv $le[L] ;
                              $allows[R].get_allowance ;
                              send $allows[R] ((tick  ; from)) ;
                              send $allows[R] ((tick  ; to)) ;
                              work ;
                              allowance = recv $allows[R] ;
                              if (tick  ; (tick  ; amt) > (tick  ; allowance))
                              then
                                pay $le[L] {0};
                                #se[S] <- detach $le[L] ;
                                work {3};
                                #se[S] <- erc20contract <- $allows[R] $bals[R] N
                              else
                                $bals[R].transfer ;
                                send $bals[R] ((tick  ; from)) ;
                                send $bals[R] ((tick  ; to)) ;
                                send $bals[R] ((tick  ; amt)) ;
                                pay $le[L] {0};
                                #se[S] <- detach $le[L] ;
                                work {0};
                                #se[S] <- erc20contract <- $allows[R] $bals[R] N
              | transferFrom => from = recv $le[L] ;
                                work ;
                                to = recv $le[L] ;
                                work ;
                                amt = recv $le[L] ;
                                $allows[R].get_allowance ;
                                send $allows[R] ((tick  ; from)) ;
                                send $allows[R] ((tick  ; to)) ;
                                work ;
                                allowance = recv $allows[R] ;
                                if (tick  ; (tick  ; amt) > (tick  ; allowance))
                                then
                                  pay $le[L] {0};
                                  #se[S] <- detach $le[L] ;
                                  work {3};
                                  #se[S] <- erc20contract <- $allows[R] $bals[R] N
                                else
                                  $bals[R].transfer ;
                                  send $bals[R] ((tick  ; from)) ;
                                  send $bals[R] ((tick  ; to)) ;
                                  send $bals[R] ((tick  ; amt)) ;
                                  pay $le[L] {0};
                                  #se[S] <- detach $le[L] ;
                                  work {0};
                                  #se[S] <- erc20contract <- $allows[R] $bals[R] N
                | approve => from = recv $le[L] ;
                             work ;
                             to = recv $le[L] ;
                             work ;
                             allowance = recv $le[L] ;
                             $allows[R].set_allowance ;
                             send $allows[R] ((tick  ; from)) ;
                             send $allows[R] ((tick  ; to)) ;
                             send $allows[R] ((tick  ; allowance)) ;
                             pay $le[L] {6};
                             #se[S] <- detach $le[L] ;
                             #se[S] <- erc20contract <- $allows[R] $bals[R] N
                | allowance => from = recv $le[L] ;
                               work ;
                               to = recv $le[L] ;
                               $allows[R].get_allowance ;
                               send $allows[R] ((tick  ; from)) ;
                               send $allows[R] ((tick  ; to)) ;
                               work ;
                               allowance = recv $allows[R] ;
                               send $le[L] ((tick  ; allowance)) ;
                               pay $le[L] {6};
                               #se[S] <- detach $le[L] ;
                               #se[S] <- erc20contract <- $allows[R] $bals[R] N )
  }
TC time: 1.89590454102
Inference time: 4.7709941864
# Vars = 161
# Constraints = 561
% compilation successful!
% runtime successful!
\end{lstlisting}

\subsection{Escrow}

\begin{lstlisting}
type escrow = /\ <{7}| &{ approve : int -> |{0}> \/ escrow,
                          cancel : int -> |{0}> \/ escrow,
                          deposit : int -> bond -o[R] |{4}> \/ escrow,
                          withdraw : int -> bond *[R] |{0}> \/ escrow }
type bond = 1
proc asset emp : . |{1}- ($l[R] : bond) = 
  {
    work ;
    close $l[R]
  }
proc contract escrow_con : (buyer : int), (seller : int),
                           (buyerOk : bool), (sellerOk : bool),
                           ($l[R] : bond)
                                        |- (#se[S] : escrow) = 
  {
    $le[L] <- accept #se[S] ;
    get $le[L] {7};
    case $le[L]
       ( approve =>
          r = recv $le[L] ;
          if (tick  ; (tick  ; r) = (tick  ; buyer))
          then
            let buyerOk = (tick  ; true) ;
            pay $le[L] {0};
            #se[S] <- detach $le[L] ;
            work {3};
            #se[S] <- escrow_con <- buyer seller buyerOk sellerOk $l[R]
          else
            if (tick  ; (tick  ; r) = (tick  ; seller))
            then
              let sellerOk = (tick  ; true) ;
              pay $le[L] {0};
              #se[S] <- detach $le[L] ;
              work {0};
              #se[S] <- escrow_con <- buyer seller buyerOk sellerOk $l[R]
            else
              pay $le[L] {0};
              #se[S] <- detach $le[L] ;
              work ;
              #se[S] <- escrow_con <- buyer seller buyerOk sellerOk $l[R]
      | cancel =>
          r = recv $le[L] ;
          if (tick  ; (tick  ; r) = (tick  ; buyer))
          then
            let buyerOk = (tick  ; false) ;
            pay $le[L] {0};
            #se[S] <- detach $le[L] ;
            work {3};
            #se[S] <- escrow_con <- buyer seller buyerOk sellerOk $l[R]
          else
            if (tick  ; (tick  ; r) = (tick  ; seller))
            then
              let sellerOk = (tick  ; false) ;
              pay $le[L] {0};
              #se[S] <- detach $le[L] ;
              work {0};
              #se[S] <- escrow_con <- buyer seller buyerOk sellerOk $l[R]
            else
              pay $le[L] {0};
              #se[S] <- detach $le[L] ;
              work ;
              #se[S] <- escrow_con <- buyer seller buyerOk sellerOk $l[R]
      | deposit =>
          r = recv $le[L] ;
          work ;
          $m[R] <- recv $le[L] ;
          let seller = (tick  ; r) ;
          pay $le[L] {4};
          #se[S] <- detach $le[L] ;
          work ;
          wait $m[R] ;
          work {0};
          #se[S] <- escrow_con <- buyer seller buyerOk sellerOk $l[R]
      | withdraw =>
          r = recv $le[L] ;
          if (tick  ; (tick  ; r) = (tick  ; buyer))
          then
            send $le[L] $l[R] ;
            $l[R] <- emp <-  ;
            pay $le[L] {0};
            #se[S] <- detach $le[L] ;
            work {3};
            #se[S] <- escrow_con <- buyer seller buyerOk sellerOk $l[R]
          else
            $m[R] <- emp <-  ;
            send $le[L] $m[R] ;
            pay $le[L] {0};
            #se[S] <- detach $le[L] ;
            work {3};
            #se[S] <- escrow_con <- buyer seller buyerOk sellerOk $l[R] )
  }
TC time: 1.9428730011
Inference time: 5.47099113464
# Vars = 95
# Constraints = 321
% compilation successful!
% runtime successful!
\end{lstlisting}

\subsection{Insurance}

\begin{lstlisting}
type insurance = /\ <{6}|
    &{ submit : int -> +{ success : money *[R] |{0}> \/ insurance,
       failure : |> \/ insurance } }
type verifier = /\ <{3}| &{ verify : int -> +{ valid : |{0}> \/ verifier,
                                               invalid : |{0}> \/ verifier } }
proc contract verify : . |- (#sv[S] : verifier) = 
  {
    $lv[L] <- accept #sv[S] ;
    get $lv[L] {3};
    case $lv[L] ( verify => claim = recv $lv[L] ;
                            if (tick  ; (tick  ; claim) > (tick  ; 0))
                            then
                              $lv[L].valid ;
                              pay $lv[L] {0};
                              #sv[S] <- detach $lv[L] ;
                              #sv[S] <- verify <- 
                            else
                              $lv[L].invalid ;
                              pay $lv[L] {0};
                              #sv[S] <- detach $lv[L] ;
                              #sv[S] <- verify <-  )
  }
type money = &{ subtract : money *[R] money }
proc contract insurer : (#sv[S] : verifier), ($m[R] : money)
                                          |- (#si[S] : insurance) = 
  {
    $li[L] <- accept #si[S] ;
    get $li[L] {6};
    case $li[L] ( submit => claim = recv $li[L] ;
                            $lv[L] <- acquire #sv[S] ;
                            pay $lv[L] {3};
                            $lv[L].verify ;
                            send $lv[L] ((tick  ; claim)) ;
                            work ;
                            case $lv[L]
                               ( valid => get $lv[L] {0};
                                          $li[L].success ;
                                          $m[R].subtract ;
                                          work ;
                                          $r[R] <- recv $m[R] ;
                                          send $li[L] $r[R] ;
                                          pay $li[L] {0};
                                          #sv[S] <- release $lv[L] ;
                                          #si[S] <- detach $li[L] ;
                                          #si[S] <- insurer <- #sv[S] $m[R]
                               | invalid => get $lv[L] {0};
                                            #li[L].failure ;
                                            pay $li[L] ;
                                            #sv[S] <- release $lv[L] ;
                                            #si[S] <- detach $li[L] ;
                                            #si[S] <- insurer <- #sv[S] $m[R] ) )
  }
TC time: 1.36709213257
Inference time: 3.58390808105
# Vars = 76
# Constraints = 224
% compilation successful!
% runtime successful!
\end{lstlisting}

\subsection{Puzzle}

\begin{lstlisting}
type puzzle = /\ <{14}|
    &{ update : int -> money -o[R] |{0}> \/ puzzle,
       submit : int ^ &{ success : int -> money *[R] |{5}> \/ puzzle,
                         failure : |{9}> \/ puzzle } }
type money = &{ value : <{2}| int ^ money,
                coins : <{0}| lcoin }
type lcoin = 1
proc asset join : ($m[R] : lcoin), ($n[R] : lcoin) |{1}- ($o[R] : lcoin) = 
  {
    wait $m[R] ;
    wait $n[R] ;
    work ;
    close $o[R]
  }
proc asset consume : ($m[R] : money) |{1}- ($o[R] : 1) = 
  {
    work ;
    $m[R].coins ;
    pay $m[R] {0};
    $o[R] <- $m[R]
  }
proc asset add : ($m[R] : money), ($n[R] : money) |{10}- ($o[R] : money) = 
  {
    work ;
    $m[R].value ;
    pay $m[R] {2};
    mval = recv $m[R] ;
    $n[R].value ;
    pay $n[R] {2};
    work ;
    nval = recv $n[R] ;
    let oval = (tick  ; (tick  ; mval) + (tick  ; nval)) ;
    $m[R].coins ;
    pay $m[R] {0};
    $n[R].coins ;
    pay $n[R] {0};
    $ocoin[R] <- join <- $m[R] $n[R] ;
    $o[R] <- wallet <- oval $ocoin[R]
  }
proc asset wallet : (n : int), ($l[R] : lcoin) |- ($m[R] : money) = 
  {
    case $m[R] ( value => get $m[R] {2};
                          work ;
                          send $m[R] ((tick  ; n)) ;
                          $m[R] <- wallet <- n $l[R]
               | coins => get $m[R] {0};
                          $m[R] <- $l[R] )
  }
proc asset emp : . |{1}- ($l[R] : lcoin) = 
  {
    work ;
    close $l[R]
  }
proc asset empty_wallet : . |{3}- ($m[R] : money) = 
  {
    $l[R] <- emp <-  ;
    work ;
    let n = (tick  ; 0) ;
    $m[R] <- wallet <- n $l[R]
  }
proc contract game : (addr : int), ($m[R] : money) |- (#sp[S] : puzzle) = 
  {
    $lp[L] <- accept #sp[S] ;
    get $lp[L] {14};
    case $lp[L] ( update => n = recv $lp[L] ;
                            work ;
                            $r[R] <- recv $lp[L] ;
                            if (tick  ; (tick  ; n) = (tick  ; addr))
                            then
                              $newm[R] <- add <- $m[R] $r[R] ;
                              pay $lp[L] {0};
                              #sp[S] <- detach $lp[L] ;
                              #sp[S] <- game <- addr $newm[R]
                            else
                              $tmp[R] <- consume <- $r[R] ;
                              work ;
                              wait $tmp[R] ;
                              pay $lp[L] {0};
                              #sp[S] <- detach $lp[L] ;
                              work {8};
                              #sp[S] <- game <- addr $m[R]
                | submit => work ;
                            $m[R].value ;
                            pay $m[R] {2};
                            mval = recv $m[R] ;
                            send $lp[L] ((tick  ; mval)) ;
                            work ;
                            case $lp[L]
                               ( success => work ;
                                            sol = recv $lp[L] ;
                                            send $lp[L] $m[R] ;
                                            pay $lp[L] {5};
                                            #sp[S] <- detach $lp[L] ;
                                            $emp[R] <- empty_wallet <-  ;
                                            #sp[S] <- game <- addr $emp[R]
                               | failure => pay $lp[L] {9};
                                            #sp[S] <- detach $lp[L] ;
                                            #sp[S] <- game <- addr $m[R] ) )
  }
TC time: 1.64389610291
Inference time: 4.714012146
# Vars = 126
# Constraints = 389
% compilation successful!
% runtime successful!
\end{lstlisting}

\subsection{Amortized Voting}

\begin{lstlisting}
type ballot = /\ <{16}| +{ open : int -> +{ vote : int -> |{0}> \/ ballot,
                                            novote : |{9}> \/ ballot },
                           closed : int ^ |{13}> \/ ballot }
type vote_list = +{ cons : |{4}> vote_list,
                    nil : 1 }
proc asset cons : ($t[P] : vote_list) |{5}- ($l[P] : vote_list) = 
  {
    work ;
    $l[P].cons ;
    pay $l[P] {4};
    $l[P] <- $t[P]
  }
type voters = &{ check : <{0}| int -> +{ success : voters,
                                         failure : voters },
                 size : <{0}| int ^ voters }
proc contract open_election : (T : int), ($vs[P] : voters),
                              ($c1[P] : vote_list), ($c2[P] : vote_list)
                                                  |{14}- (#sb[S] : ballot) = 
  {
    $lb[L] <- accept #sb[S] ;
    get $lb[L] {16};
    work ;
    $lb[L].open ;
    v = recv $lb[L] ;
    $vs[P].check ;
    pay $vs[P] {0};
    send $vs[P] ((tick  ; v)) ;
    work ;
    case $vs[P] ( success => $lb[L].vote ;
                             work ;
                             c = recv $lb[L] ;
                             if (tick  ; (tick  ; c) > (tick  ; 0))
                             then
                               $c1n[P] <- cons <- $c1[P] ;
                               pay $lb[L] {0};
                               #sb[S] <- detach $lb[L] ;
                               #sb[S] <- check <- T $vs[P] $c1n[P] $c2[P]
                             else
                               $c2n[P] <- cons <- $c2[P] ;
                               pay $lb[L] {0};
                               #sb[S] <- detach $lb[L] ;
                               #sb[S] <- check <- T $vs[P] $c1[P] $c2n[P]
                | failure => $lb[L].novote ;
                             pay $lb[L] {9};
                             #sb[S] <- detach $lb[L] ;
                             #sb[S] <- check <- T $vs[P] $c1[P] $c2[P] )
  }
proc asset count_helper : (n : int), ($c[P] : vote_list) |{2}- ($s[P] : int ^ 1) = 
  {
    case $c[P] ( cons => get $c[P] {4};
                         work ;
                         let n = (tick  ; (tick  ; n) + (tick  ; 1)) ;
                         $s[P] <- count_helper <- n $c[P]
               | nil => wait $c[P] ;
                        work ;
                        send $s[P] ((tick  ; n)) ;
                        close $s[P] )
  }
proc asset count_list : ($c[P] : vote_list) |{4}- ($s[P] : int ^ 1) = 
  {
    work ;
    let n = (tick  ; 0) ;
    $s[P] <- count_helper <- n $c[P]
  }
proc contract count : (T : int), ($vs[P] : voters),
                      ($c1[P] : vote_list), ($c2[P] : vote_list)
                                            |{14}- (#sb[S] : ballot) = 
  {
    $s1[P] <- count_list <- $c1[P] ;
    $s2[P] <- count_list <- $c2[P] ;
    s1 = recv $s1[P] ;
    work ;
    s2 = recv $s2[P] ;
    work ;
    wait $s1[P] ;
    work ;
    wait $s2[P] ;
    if (tick  ; (tick  ; s1) > (tick  ; s2))
    then
      #sb[S] <- closed_election <- s1 $vs[P]
    else
      #sb[S] <- closed_election <- s2 $vs[P]
  }
proc contract check : (T : int), ($vs[P] : voters),
                      ($c1[P] : vote_list), ($c2[P] : vote_list)
                                              |{18}- (#sb[S] : ballot) = 
  {
    work ;
    $vs[P].size ;
    pay $vs[P] {0};
    n = recv $vs[P] ;
    if (tick  ; (tick  ; n) = (tick  ; T))
    then
      #sb[S] <- count <- T $vs[P] $c1[P] $c2[P]
    else
      #sb[S] <- open_election <- T $vs[P] $c1[P] $c2[P]
  }
proc contract closed_election : (w : int), ($vs[P] : voters)
                                            |- (#sb[S] : ballot) = 
  {
    $lb[L] <- accept #sb[S] ;
    get $lb[L] {16};
    work ;
    $lb[L].closed ;
    work ;
    send $lb[L] ((tick  ; w)) ;
    pay $lb[L] {13};
    #sb[S] <- detach $lb[L] ;
    #sb[S] <- closed_election <- w $vs[P]
  }
TC time: 0.34499168396
Inference time: 11.1479759216
# Vars = 140
# Constraints = 457
% compilation successful!
% runtime successful!
\end{lstlisting}

\subsection{Wallet}

\begin{lstlisting}
type coin = 1
type lcoin = +{ cons : coin *[R] lcoin,
                nil : 1 }
type money = /\ &{ value : <{2}| int ^ \/ money,
                   coins : <{5}| lcoin *[R] \/ money }
proc asset emp : . |{2}- ($l[R] : lcoin) = 
  {
    work ;
    $l[R].nil ;
    work ;
    close $l[R]
  }
proc contract wallet : (n : int), ($l[R] : lcoin) |- (#sm[S] : money) = 
  {
    $m[L] <- accept #sm[S] ;
    case $m[L] ( value => get $m[L] {2};
                          work ;
                          send $m[L] ((tick  ; n)) ;
                          work {0};
                          #sm[S] <- detach $m[L] ;
                          #sm[S] <- wallet <- n $l[R]
               | coins => get $m[L] {5};
                          work {0};
                          work ;
                          send $m[L] $l[R] ;
                          $l[R] <- emp <-  ;
                          work ;
                          let n = (tick  ; 0) ;
                          #sm[S] <- detach $m[L] ;
                          #sm[S] <- wallet <- n $l[R] )
  }
TC time: 0.217914581299
Inference time: 6.70695304871
# Vars = 32
# Constraints = 102
% compilation successful!
% runtime successful!
\end{lstlisting}

\section{Types}\label{sec:grammar}
First, I present the grammar for ordinary functional types $\tau$
with potential.
\[
\begin{array}{rccl}
\tau & ::= & & t \mid \tau \to \tau
\mid \tau + \tau \mid \tau \times \tau \\
& & \mid & \tint \mid \tbool \mid \plist{\tau}{q} \\
& & \mid & \tproc{A_\p}{\overline{A_\p}}_\p \mid
\tproc{A_\s}{\overline{A_\s} \semi \overline{A_\p}}_\s \mid
\tproc{A_\c}{\overline{A_\s} \semi \overline{A}}_\c
\end{array}
\]
Next, I define the purely linear session types.
\[
\begin{array}{rccl}
A_\p & ::= & & V \mid \ichoice{\ell : A_\p}_{\ell \in L} \mid
\echoice{\ell : A_\p}_{\ell \in L} \mid A_m \lolli_m A_\p
\mid A_m \tensor_m A_\p \mid \one \\
 & & \mid & \tau \arrow A_\p \mid \tau \product A_\p
\mid \tpaypot{A_\p}{r} \mid \tgetpot{A_\p}{r}
\end{array}
\]
Next, the shared linear session types.
\[
\begin{array}{rccl}
A_\l & ::= & & V \mid \ichoice{\ell : A_\l}_{\ell \in L} \mid
\echoice{\ell : A_\l}_{\ell \in L} \mid A_m \lolli_m A_\l
\mid A_m \tensor_m A_\l \\
 & & \mid & \tau \arrow A_\l \mid \tau \product A_\l
\mid \tpaypot{A_\l}{r} \mid \tgetpot{A_\l}{r} \\
& & \mid & \down A_\s
\end{array}
\]
Finally, the shared session type.
\[
\begin{array}{rcl}
A_\s & ::= & \up A_\l
\end{array}
\]
The client linear types follow the same grammar as purely
linear types.
The combined type is represented using $A$ which denotes
the type of either a client or contract process in linear mode.
\[
\begin{array}{rcl}
A_\c & ::= & A_\p \\
A & ::= & A_\c \mid A_\l
\end{array}
\]
First, the expressions at the functional layer are as follows
(usual terms from a functional language).
\[
\begin{array}{rcclr}
M, N & ::= & & \lam{x}{\tau}{M_x} \mid M \; N \\
& & \mid & \inl{M} \mid \inr{M}
\mid \case{M}{M_l}{M_r} \\
& & \mid & \pair{M}{N} \mid \projl{M} \mid \projr{M} \\
& & \mid & n \mid \m{true} \mid \m{false} \\
& & \mid & [] \mid M::N \mid
\match{M}{M_1}{x::xs}{M_2} \\
& & \mid & \eproc{c_\p}{P_{c_\p,\overline{a}}}{\overline{a}}
\mid \eproc{c_\s}{P_{c_\s,\overline{a},\overline{d}}}
{\overline{a} \semi \overline{d}}
\mid \eproc{c_\c}{P_{c_\c,\overline{a}, \overline{b}}}
{\overline{a} \semi \overline{b}}
\end{array}
\]
The processes (proof terms) are as follows.
\[
\begin{array}{rccll}
P,Q & ::= & & \ecut{c}{M}{\overline{a}}{P_c} & \quad
\text{spawn process computed by $M$ and continue with} \\
& & & & \quad \text{$P_a$, both communicating along fresh channel $a$} \\
& & \mid & \fwd{x}{y} & \quad \text{forward between $x$ and $y$} \\
& & \mid & \esendl{x}{l_k} \semi P & \quad
\text{send label $l_k$ along x} \\
& & \mid & \ecase{x}{l_i}{P} & \quad
\text{branch on received label along $x$} \\
& & \mid & \esendch{x}{w} \semi P & \quad
\text{send channel/value $w$ along $x$} \\
& & \mid & \erecvch{x}{y} \semi P & \quad
\text{receive channel/value along $x$ and bind it to $y$} \\
& & \mid & \eclose{x} & \quad \text{close channel $x$} \\
& & \mid & \ewait{x} \semi P & \quad
\text{wait on closing channel $x$} \\
& & \mid & \ework{p} \semi P & \quad
\text{do work $p$, continue with $P$} \\
& & \mid & \eget{x}{p} \semi P & \quad
\text{get potential $p$ on channel $x$} \\
& & \mid & \epay{x}{p} \semi P & \quad
\text{pay potential $p$ on channel $x$} \\
& & \mid & \eacquire{x_\l}{x_\s} \semi P_{x_\l} & \quad
\text{send an acquire request along $x_\s$} \\
& & \mid & \eaccept{x_\l}{x_\s} \semi P_{x_\l} & \quad
\text{accept an acquire request along $x_\s$} \\
& & \mid & \edetach{x_\s}{x_\l} \semi P_{x_\s} & \quad
\text{send a detach request along $x_\l$} \\
& & \mid & \erelease{x_\s}{x_\l} \semi P_{x_\s} & \quad
\text{receive a detach request along $x_\l$} \\
\end{array}
\]

\section{Type System}\label{sec:proc-typing}
We first define the judgments we use in our type system.
\[
\begin{array}{ll}
\Psi \exppot{q} M : \tau & \text{term $M$ has type $\tau$} \\
& \text{and needs potential $q$ for evaluation} \\
\Psi \semi \G \semi \D \entailpot{q} P :: (c_m : A) &
\text{process $P$ offers service of type $A$}\\
& \text{along channel $c$ at mode $m = (\m{\s, \l, \c, \p})$}\\
& \text{and uses shared channels from $\G$}\\
& \text{and linear channels from $\D$}\\
& \text{and functional variables from $\Psi$}\\
& \text{and stores potential $q$}
\end{array}
\]
Mode $\s$ stands for channels in shared mode.
Mode $\l$ stands for shared channels in their
linear mode.
Mode $\c$ stands for linear channels that internally
depend on shared processes.
Mode $\p$ stands for purely linear channels offered
by purely linear processes.

\subsection{Monad}
First, I present the rules concerning the monad.
\subsubsection*{Introduction Rules}
\[
\infer[\{\} I_\p]
{\Psi \exppot{q} \eproc{x_\p}{P}{\overline{d_\p}} :
\tproc{A_\p}{\overline{D_\p}}_\p}
{\D = \overline{d_\p : D_\p} \qquad
\Psi \semi \cdot \semi \D \entailpot{q} P :: (x_\p : A_\p)}
\]

\[
\infer[\{\} I_\s]
{\Psi \exppot{q} \eproc{x_\s}{P}
{\overline{a_\s} \semi \overline{d_\p}} :
\tproc{A}{\overline{A_\s} \semi \overline{D_\p}}_\s}
{\G = \overline{a_\s : A_\s} \qquad
\D = \overline{d_\p : D_\p} \qquad
\Psi \semi \G \semi \D \entailpot{q} P :: (x_\s : A)}
\]

\[
\infer[\{\} I_\c]
{\Psi \exppot{q} \eproc{x_\c}{P}
{\overline{a_\s} \semi \overline{d}} :
\tproc{A}{\overline{A_\s} \semi \overline{D}}_\c}
{\G = \overline{a_\s : A_\s} \qquad
\D = \overline{d : D} \qquad
\Psi \semi \G \semi \D \entailpot{q} P :: (x_\c : A)}
\]

\subsubsection*{Elimination Rules}
\[
\inferrule*[right = $\{\}E_{\p m(\in \{\p,\s,\l,\c\})}$]
{r = p+q \qquad
\D = \overline{d_\p : D_\p} \qquad
\Psi \share (\Psi_1, \Psi_2) \\\\
\Psi_1 \exppot{p} M : \tproc{A}{\overline{D_\p}}_\p \qquad
\Psi_2 \semi \G \semi \D', (x_\p : A) \entailpot{q} Q :: (z_m : C)}
{\Psi \semi \G \semi \D, \D' \entailpot{r}
\ecut{x_\p}{M}{\overline{d_\p}}{Q} :: (z_m : C)}
\]

\[
\inferrule*[right = $\{\}E_{\s m(\in \{\s,\l,\c\})}$]
{r = p+q \qquad
\G \supseteq \overline{a_\s : A_\s} \qquad
\D = \overline{d_\p : D_\p} \qquad
(A_\s, A_\s) \esync \qquad
\Psi \share (\Psi_1, \Psi_2) \\\\
\Psi_1 \exppot{p} M : \tproc{A}{\overline{A_\s} \semi
\overline{D_\p}}_\s \qquad
\Psi_2 \semi \G, (x_\s : A) \semi \D' \entailpot{q} Q :: (z_m : C)}
{\Psi \semi \G \semi \D, \D' \entailpot{r}
\ecut{x_\s}{M}{\overline{d_\p}}{Q} :: (z_m : C)}
\]

\[
\inferrule*[right = $\{\}E_{\c m(\in \{\l,\c \})}$]
{r = p+q \qquad
\G \supseteq \overline{a_\s : A_\s} \qquad
\D = \overline{d : D} \qquad
\Psi \share (\Psi_1, \Psi_2) \\\\
\Psi_1 \exppot{p} M : \tproc{A}
{\overline{A_\s}\semi \overline{D}}_\c \qquad
\Psi_2 \semi \G \semi (x_\c : A), \D'
\entailpot{q} Q :: (z_m : C)}
{\Psi \semi \G \semi \D, \D' \entailpot{r}
\ecut{x_\c}{M}{\overline{a_\s}
\semi \overline{d}}{Q} :: (z_m : C)}
\]

The rest of the rules for expressions in the functional layer
are standard. We skip them and discuss the process layer.

\subsection{Forwarding}
\[
\infer[\m{fwd}_{m(\in\{\p,\c\})}]
{\Psi \semi \G \semi (y_m : A) \entailpot{q}
\fwd{x_m}{y_m} :: (x_m : A)}
{q = 0}
\]

\subsection{Labels and Branching}
\[
\infer[\oplus R]
{\Psi \semi \G \semi \D \entailpot{q}
\esendl{x_m}{k} \semi P ::
(x_m : \ichoice{\ell : A_{\ell}}_{\ell \in L})}
{\Psi \semi \G \semi \D \entailpot{q}
P :: (x_m : A_k) \qquad (k \in L)}
\]
\[
\infer[\oplus L]
{\Psi \semi \G \semi \D, (x_m : \ichoice{\ell : A_{\ell}}_{\ell \in L})
\entailpot{q} \ecase{x_m}{\ell}{Q_{\ell}}_{\ell \in L} :: (z_k : C)}
{\Psi \semi \G \semi \D, (x_m : A_{\ell}) \entailpot{q}
Q_{\ell} :: (z_k : C)
\qquad (\forall \ell \in L)}
\]

\[
\infer[\with R]
{\Psi \semi \G \semi \D \entailpot{q}
\ecase{x_m}{\ell}{P_{\ell}}_{\ell \in L} ::
(x_m : \echoice{\ell : A_{\ell}}_{\ell \in L})}
{\Psi \semi \G \semi \D \entailpot{q}
P :: (x_m : A_{\ell}) \qquad (\forall \ell \in L)}
\]
\[
\infer[\with L]
{\Psi \semi \G \semi \D, (x_m : \echoice{\ell : A_{\ell}}_{\ell \in L})
\entailpot{q} \esendl{x_m}{k} \semi P :: (z_k : C)}
{\Psi \semi \G \semi \D, (x_m : A_{\ell}) \entailpot{q}
Q_{\ell} :: (z_k : C)
\qquad (k \in L)}
\]

\subsection{Linear Channel Communication}
\[
\infer[\tensor_n R]
{\Psi \semi \G \semi \D, (w_n : A) \entailpot{q}
\esendch{x_m}{w_n} \semi P :: (x_m : A \tensor_n B)}
{\Psi \semi \G \semi \D \entailpot{q}
P :: (x_m : B)}
\]
\[
\infer[\tensor_n L]
{\Psi \semi \G \semi \D, (x_m : A \tensor_n B)
\entailpot{q} \erecvch{x_m}{y_n} \semi Q :: (z_k : C)}
{\Psi \semi \G \semi \D, (y_n : A), (x_m : B) \entailpot{q}
Q :: (z_k : C)}
\]

\[
\infer[\lolli_n R]
{\Psi \semi \G \semi \D \entailpot{q}
\erecvch{x_m}{y_n} \semi P :: (x_m : A \lolli B)}
{\Psi \semi \G \semi \D, (y_n : A) \entailpot{q}
P :: (x_m : B)}
\]
\[
\infer[\lolli_n L]
{\Psi \semi \G \semi \D, (w_n : A), (x_m : A \lolli B)
\entailpot{q} \esendch{x_m}{w_n} \semi Q :: (z_k : C)}
{\Psi \semi \G \semi \D, (x_m : B) \entailpot{q}
Q :: (z_k : C)}
\]

\subsection{Value Communication}
\[
\infer[\product R]
{\Psi \semi \G \semi \D \entailpot{r}
\esendch{x_m}{M} \semi P :: (x_m : \tau \product A)}
{r = p+q \qquad
\Psi \share (\Psi_1, \Psi_2) \qquad
\Psi_1 \exppot{p} M : \tau \qquad
\Psi_2 \semi \G \semi \D \entailpot{q}
P :: (x_m : A)}
\]
\[
\infer[\product L]
{\Psi \semi \G \semi \D, (x_m : \tau \product A)
\entailpot{q} \erecvch{x_m}{y} \semi Q :: (z_k : C)}
{\Psi, (y : \tau) \semi \G \semi \D, (x_m : A) \entailpot{q}
Q :: (z_k : C)}
\]

\[
\infer[\arrow R]
{\Psi \semi \G \semi \D \entailpot{q}
\erecvch{x_m}{y} \semi P :: (x_m : \tau \arrow A)}
{\Psi, (y : \tau) \semi \G \semi \D \entailpot{q}
P :: (x_m : B)}
\]
\[
\infer[\arrow L]
{\Psi \semi \G \semi \D, (x_m : \tau \arrow A)
\entailpot{r} \esendch{x_m}{M} \semi Q :: (z_k : C)}
{r = p+q \qquad
\Psi \share (\Psi_1, \Psi_2) \qquad
\Psi_1 \exppot{p} M : \tau \qquad
\Psi_2 \semi \G \semi \D, (x_m : A) \entailpot{q}
Q :: (z_k : C)}
\]

\subsection{Termination}
\[
\infer[\one R]
{\Psi \semi \G \semi \cdot \entailpot{q}
\eclose{x_m} :: (x_m : \one)}
{q = 0}
\hspace{5em}
\infer[\one L]
{\Psi \semi \G \semi \D, (x_m : \one) \entailpot{q}
\ewait{x_m} \semi Q :: (z_k : C)}
{\Psi \semi \G \semi \D \entailpot{q} Q :: (z_k : C)}
\]

\subsection{Potential}
\[
\infer[\m{work}]
{\Psi \semi \G \semi \D \entailpot{q} \etick{r} \semi P :: (x_m : A)}
{q = p + r \qquad
\Psi \semi \G \semi \D \entailpot{p} P :: (x_m : A)}
\]

\[
\infer[\paypot R]
{\Psi \semi \G \semi \D \entailpot{q} \epay{x_m}{r} \semi P ::
(x_m : \tpaypot{A}{r})}
{q = p+r \qquad
\Psi \semi \G \semi \D \entailpot{p} P :: (x_m : A)}
\]
\[
\infer[\paypot L]
{\Psi \semi \G \semi \D, (x_m : \tpaypot{A}{r}) \entailpot{q}
\eget{x_m}{r} \semi P :: (z_k : C)}
{p = q+r \qquad
\Psi \semi \G \semi \D, (x_m : A) \entailpot{p} P :: (z_k : C)}
\]

\[
\infer[\getpot R]
{\Psi \semi \G \semi \D \entailpot{q} \eget{x_m}{r} \semi P ::
(x_m : \tgetpot{A}{r})}
{p = q+r \qquad
\Psi \semi \G \semi \D \entailpot{p} P :: (x_m : A)}
\]
\[
\infer[\getpot L]
{\Psi \semi \G \semi \D, (x_m : \tgetpot{A}{r}) \entailpot{q}
\epay{x_m}{r} \semi P :: (z_k : C)}
{q = p+r \qquad
\Psi \semi \G \semi \D, (x_m : A) \entailpot{p} P :: (z_k : C)}
\]

\subsection{Acquiring and Releasing}
\[
\infer[\up R]
{\Psi \semi \G \semi \D \entailpot{q}
\eaccept{x_\l}{x_\s} \semi P :: (x_\s : \up A_\l)}
{\D \plin \qquad
\Psi \semi \G \semi \D \entailpot{q} P :: (x_\l : A_\l)}
\]
\[
\infer[\up L_{m(=\l,\c)}]
{\Psi \semi \G, (x_\s : \up A_\l) \semi \D
\entailpot{q} \eacquire{x_\l}{x_\s} \semi Q :: (z_m : C)}
{\Psi \semi \G \semi \D, (x_\l : A_\l)
\entailpot{q} Q :: (z_m : C)}
\]

\[
\infer[\down R]
{\Psi \semi \G \semi \D \entailpot{q}
\edetach{x_\s}{x_\l} \semi P :: (x_\l : \down A_\s)}
{\D \plin \qquad
\Psi \semi \G \semi \D \entailpot{q} P :: (x_\s : A_\s)}
\]
\[
\infer[\down L_{m(=\l,\c)}]
{\Psi \semi \G \semi \D, (x_\l : \down A_\s)
\entailpot{q} \erelease{x_\s}{x_\l} \semi Q :: (z_m : C)}
{\Psi \semi \G, (x_\s : A_\s) \semi \D
\entailpot{q} Q :: (z_m : C)}
\]

\section{Operational Cost Semantics}\label{sec:semantics}
First, we define the judgments for expressions. The first
judgment is a small step semantics for expressions,
$M \step M'$ and $M \val$. Finally,
we introduce another judgment for processes,
$\proc{c_m}{w, P} \step \proc{c_m'}{w', P'}$ and a new predicate
$\msg{c_m}{w, M}$ to denote a message. Additionally, we define
processes with a hole for a compact representation of
the cost semantics.

\[
\begin{array}{rccl}
P [\cdot] & ::= & & \ecut{c}{[\cdot]}{a_i}{P_c} \\
& & \mid & \esendch{c}{[\cdot]} \semi P
\end{array}
\]

\[
\infer[\m{internal}]
{\proc{c_m}{w, P[N]} \step \proc{c_m}{w+\mu, P[V]}}
{\bigeval{N}{V}{\mu}}
\]

\begin{mathpar}
\inferrule*[right=$\{\}E_{\p m}$]
{\fresh{c_\p}}
{\proc{d_m}{w, \ecut{x_\p}{\eproc{x_\p'}{P_{x_\p',\overline{y}}}
{\overline{y}}}{\overline{a}}{Q}}
\step \\ \proc{c_\p}{0, P_{c_\p,\overline{a}}} \quad \proc{d_m}{w,
[c_\p/x_\p]Q}}
\end{mathpar}

\[
\inferrule*[right = $\{\}E_{\s m}$]
{\fresh{c_\s}}
{\proc{d_m}{w, \ecut{x_\s}{\eproc{x_\s'}
{P_{x_\s',\overline{y}, \overline{z}}}
{\overline{y} \semi \overline{z}}}{\overline{a}
\semi \overline{b}}{Q}} \step \\
\proc{c_\s}{0, P_{c_\s,\overline{a}, \overline{b}}} \quad \proc{d_m}{w,
[c_\s/x_\s]Q}}
\]

\[
\inferrule*[right = $\{\}E_{\c\c}$]
{\fresh{c_\c}}
{\proc{d_\c}{w, \ecut{x_\c}{\eproc{x_\c'}{P_{x_\c',\overline{y},
\overline{z}}}
{\overline{y} \semi \overline{z}}}
{\overline{a} \semi \overline{b}}{Q}} \step \\
\proc{c_\c}{0, P_{c_\c,\overline{a}, \overline{b}}} \quad
\proc{d_\c}{w, [c_\c/x_\l]Q}}
\]

\[
\infer[\m{fwd}^+]
{\msg{d_m}{w', M} \quad
\proc{c_m}{w, \fwd{c_m}{d_m}} \step
\msg{c_m}{w + w', [c_m/d_m]M}}
{}
\]
\[
\infer[\m{fwd}^-]
{\proc{c_m}{w, \fwd{c_m}{d_m}} \quad
\msg{e_l}{w', M(c_m)} \step
\msg{e_l}{w + w', M(d_m)}}
{}
\]

\[
\infer[\oplus C_s]
{\proc{c_m}{w, \esendl{c_m}{\ell} \semi P} \step
\proc{c^+_m}{w, [c^+_m/c_m]P} \quad
\msg{c_m}{0, \esendl{c_m}{\ell} \semi \fwd{c_m}{c^+_m}}}
{\fresh{c^+_m}}
\]
\begin{mathpar}
\inferrule*[right=$\oplus C_r$]
{\mathstrut}
{\msg{c_m}{w, \esendl{c_m}{\ell} \semi \fwd{c_m}{c^+_m}} \quad
\proc{d_k}{w', \ecase{c_m}{l}{Q_l}_{l \in L}} \step \\
\proc{d_k}{w+w', [c^+_m/c_m]Q_{\ell}}}
\end{mathpar}

\[
\infer[\with C_s]
{\proc{d_k}{w, \esendl{c_m}{\ell} \semi P} \step
\msg{c^+_m}{0, \esendl{c_m}{\ell} \semi \fwd{c^+_m}{c_m}} \quad
\proc{d_k}{w, [c^+_m/c_m]P}
}
{\fresh{c^+_m}}
\]
\begin{mathpar}
\inferrule*[right=$\with C_r$]
{\mathstrut}
{\proc{c_m}{w', \ecase{c_m}{l}{Q_l}_{l \in L}} \quad
\msg{c^+_m}{0, \esendl{c_m}{\ell} \semi \fwd{c^+_m}{c_m}} \step \\
\proc{c^+_m}{w+w', [c^+_m/c_m]Q_{\ell}}}
\end{mathpar}

\[
\infer[\tensor_n C_s]
{\proc{c_m}{w, \esendch{c_m}{e_n} \semi P} \step
\proc{c^+_m}{w, [c^+_m/c_m]P} \quad
\msg{c_m}{0, \esendch{c_m}{e_n} \semi \fwd{c_m}{c^+_m}}}
{\fresh{c^+_m}}
\]
\[
\inferrule*[right = $\tensor_n C_r$]
{\mathstrut}
{\msg{c_m}{w, \esendch{c_m}{e_n} \semi \fwd{c_m}{c^+_m}} \quad
\proc{d_k}{w', \erecvch{c_m}{x_n} \semi Q} \step \\
\proc{d_k}{w+w', [c^+_m/c_m][e_n/x_n]Q}}
\]

\[
\infer[\lolli_n C_s]
{\proc{d_k}{w, \esendch{c_m}{e_n} \semi P} \step
\msg{c^+_m}{0, \esendch{c_m}{e_n} \semi \fwd{c^+_m}{c_m}} \quad
\proc{d_k}{w, [c^+_m/c_m]P}}
{\fresh{c^+_m}}
\]
\[
\inferrule*[right = $\lolli_n C_r$]
{\mathstrut}
{\proc{c_m}{w', \erecvch{c_m}{x_n} \semi Q} \quad
\msg{c^+_m}{w, \esendch{c_m}{e_n} \semi \fwd{c^+_m}{c_m}}
\step \\ \proc{c^+_m}{w+w', [c^+_m/c_m][e_n/x_n]Q}}
\]

\[
\infer[\product C_s]
{\proc{c_m}{w, \esendch{c_m}{N} \semi P} \step
\proc{c^+_m}{w, [c^+_m/c_m]P} \quad
\msg{c_m}{0, \esendch{c_m}{N} \semi \fwd{c_m}{c^+_m}}}
{\fresh{c^+_m} \qquad N \val}
\]
\[
\inferrule*[right = $\product C_r$]
{\mathstrut}
{\msg{c_m}{w, \esendch{c_m}{N} \semi \fwd{c_m}{c^+_m}} \quad
\proc{d_k}{w', \erecvch{c_m}{x} \semi Q} \step \\
\proc{d_k}{w+w', [c^+_m/c_m][N/x]Q}}
\]

\[
\infer[\arrow C_s]
{\proc{d_k}{w, \esendch{c_m}{N} \semi P} \step
\msg{c^+_m}{0, \esendch{c_m}{N} \semi \fwd{c^+_m}{c_m}} \quad
\proc{d_k}{w, [c^+_m/c_m]P}}
{\fresh{c^+_m} \qquad N \val}
\]
\[
\inferrule*[right = $\arrow C_r$]
{\mathstrut}
{\proc{c_m}{w', \erecvch{c_m}{x} \semi Q} \quad
\msg{c^+_m}{w, \esendch{c_m}{N} \semi \fwd{c^+_m}{c_m}} \step \\
\proc{c^+_m}{w+w', [c^+_m/c_m][N/x]Q}}
\]

\[
\infer[\one C_s]
{\proc{c_m}{w, \eclose{c_m}} \step
\msg{c_m}{w, \eclose{c_m}}}
{\mathstrut}
\]
\[
\infer[\one C_r]
{\msg{c_m}{w, \eclose{c_m}} \quad
\proc{d_k}{w', \ewait{c_m} \semi Q} \step \proc{d_k}{w + w', Q}}
{\mathstrut}
\]

\[
\infer[\m{tick}]
{\proc{c_m}{w + \mu, P}}
{\proc{c_m}{w, \etick{\mu} \semi P}}
\]

\[
\infer[\paypot C_s]
{\proc{c_m}{w, \epay{c_m}{r} \semi P} \step
\proc{c^+_m}{w, [c^+_m/c_m]P} \quad
\msg{c_m}{0, \epay{c_m}{r} \semi \fwd{c_m}{c^+_m}}}
{\fresh{c^+_m}}
\]
\begin{mathpar}
\inferrule*[right=$\paypot C_r$]
{\mathstrut}
{\msg{c_m}{w, \epay{c_m}{r} \semi \fwd{c_m}{c^+_m}} \quad
\proc{d_k}{w', \eget{c_m}{r} \semi Q} \step \\
\proc{d_k}{w+w', [c^+_m/c_m]Q}}
\end{mathpar}

\[
\infer[\getpot C_s]
{\proc{d_k}{w, \epay{c_m}{r} \semi P} \step
\msg{c^+_m}{0, \epay{c_m}{r} \semi \fwd{c^+_m}{c}} \quad
\proc{d_k}{w, [c^+_m/c_m]P}}
{\fresh{c^+_m}}
\]
\begin{mathpar}
\inferrule*[right=$\getpot C_r$]
{\mathstrut}
{\proc{c_m}{w', \eget{c_m}{r} \semi Q} \quad
\msg{c^+_m}{w, \epay{c_m}{r} \semi \fwd{c^+_m}{c_m}} \step \\
\proc{c_m}{w+w', [c^+_m/c_m]Q}}
\end{mathpar}

\[
\inferrule*[right = $\up C$]
{\fresh{a_\l}}
{\proc{a_\s}{w', \eaccept{x_\l}{a_\s} \semi P_{x_\l}} \quad
\proc{c_m}{w, \eacquire{x_\l}{a_\s} \semi Q_{x_\l}} \step \\\\
\proc{a_\l}{w', P_{a_\l}} \quad
\proc{c_m}{w, Q_{a_\l}}}
\]

\[
\inferrule*[right = $\down C$]
{\mathstrut}
{\proc{a_\l}{w', \edetach{x_\s}{a_\l} \semi P_{x_\s}} \quad
\proc{c_m}{w, \erelease{x_\s}{a_\l} \semi Q_{x_\s}} \step \\\\
\proc{a_\s}{w', P_{a_\s}} \quad
\proc{c_m}{w, Q_{a_\s}}}
\]

\section{Configuration Typing}\label{sec:conf-typing}

\[
\infer[\m{emp}]
{\G_0 \potconf{0} (\cdot) :: (\cdot \semi \cdot)}
{}
\]

\[
\infer[\m{proc}_\p]
{\G_0 \potconf{E + q + w} \W, \proc{x_\p}{w, P} ::
(\G \semi \D, (x_\p : A_\p))}
{\G_0 \potconf{E} \W :: (\G \semi \D, \D_\p') \qquad
\cdot \semi \cdot \semi \D_\p' \entailpot{q} P :: (x_\p : A_\p)}
\]

\[
\infer[\m{proc}_\s]
{\G_0 \potconf{E + q + w} \W, \proc{x_\s}{w, P} :: (\G, (x_\s : A_\s)
\semi \D)}
{(x_\s : A_\s) \in \G_0 \qquad
(A_\s, A_\s) \; \m{esync} \qquad
\G_0 \potconf{E} \W :: (\G \semi \D,\D_\p') \qquad
\cdot \semi \G_0 \semi \D_\p' \entailpot{q} P :: (x_\s : A_\s)}
\]

\[
\infer[\m{proc}_\l]
{\G_0 \potconf{E + q + w} \W, \proc{x_\l}{w, P} :: (\G, (x_\s : A_\s)
\semi \D, (x_\l : A_\l))}
{(x_\s : A_\s) \in \G_0 \qquad
(A_\l, A_\s) \; \m{esync}  \qquad
\G_0 \potconf{E} \W :: (\G \semi \D, \D') \qquad
\cdot \semi \G_0 \semi \D' \entailpot{q} P :: (x_\l : A_\l)}
\]

\[
\infer[\m{proc}_\c]
{\G_0 \potconf{E+q+w} \W, \proc{x_\c}{w, P} :: (\G \semi
\D, (x_\c : A_\c))}
{\G_0 \potconf{E} \W :: (\G \semi \D, \D') \qquad
\cdot \semi \G_0 \semi \D' \entailpot{q} P :: (x_\c : A_\c) }
\]

\[
\infer[\m{msg}]
{\G_0 \potconf{E+q+w} \W, \msg{x_m}{w, M} :: (\G \semi
\D, (x_m : A))}
{\G_0 \potconf{E} \W :: (\G \semi \D, \D') \qquad
\cdot \semi \cdot \semi \D' \entailpot{q} M :: (x_m : A)}
\]

%\[
%\infer[\m{unavail}]
%{\G_0 \potconf{E} \W, \unavail{x_\s} :: (\G, (x_\s : A_\s) \semi
%\D)}
%{(x_\s : A_\s) \in \G_0 \qquad
%(A_\s, A_\s) \; \m{esync} \qquad
%\G_0 \potconf{E} \W :: (\G \semi \D)}
%\]

In addition, for a well-typed configuration $\G_0 \potconf{E} \W ::
(\G \semi \D)$,
we need the following well-formedness conditions.

\begin{itemize}
\item All channels in $\G_0, \G$ and $\D$ are unique.

\item $\G \subseteq \G_0$.
\end{itemize}

\subsection{Equi-Synchronizing}
\[
\infer[\oplus]
{(\ichoice{\ell : A_{\ell}}_{\ell \in L}, C_\s) \esync}
{(A_{\ell}, C_\s) \esync \quad (\forall \ell \in L)}
\qquad
\infer[\with]
{(\echoice{\ell : A_{\ell}}_{\ell \in L}, C_\s) \esync}
{(A_{\ell}, C_\s) \esync \quad (\forall \ell \in L)}
\]

\[
\infer[\tensor]
{(A \tensor B, C_\s) \esync}
{(B, C_\s) \esync}
\qquad
\infer[\lolli]
{(A \lolli B, C_\s) \esync}
{(B, C_\s) \esync}
\]

\[
\infer[\product]
{(\tau \product B, C_\s) \esync}
{(B, C_\s) \esync}
\qquad
\infer[\arrow]
{(\tau \arrow B, C_\s) \esync}
{(B, C_\s) \esync}
\]

\[
\infer[\paypot]
{(\tpaypot{A}{r}, C_\s) \esync}
{(A, C_\s) \esync}
\qquad
\infer[\getpot]
{(\tgetpot{A}{r}, C_\s) \esync}
{(A, C_\s) \esync}
\]

\[
\infer[\up]
{(\up A_\l, \up A_\l) \esync}
{(A_\l, \up A_\l) \esync}
\qquad
\infer[\down]
{(\down A_\s, A_\s) \esync}
{(A_\s, A_\s) \esync}
\]

\subsection{Purely Linear Context}

\begin{mathpar}
\infer[\m{emp}]
{\cdot \plin}
{}
\and
\infer[\m{step}]
{x_\p : A_\p, \D \plin}
{x_\p : A_\p \qquad \D \plin}
\end{mathpar}

\section{Type Safety}\label{sec:type-safety}

\begin{lemma}[Renaming]\label{lem:renaming}
The following renamings are allowed.

\begin{itemize}
\item If $\Psi \semi \G, (x_\s : A_\s) \semi \D \entailpot{q} P_{x_\s} :: (z_k : C)$ is
well-typed, so is $\G, (c_\s : A_\s) \semi \D \entailpot{q} P_{c_\s} :: (z_k : C)$.

\item If $\Psi \semi \G \semi \D, (x_m : A) \entailpot{q} P_{x_m} :: (z_k : C)$ is
well-typed, so is $\G \semi \D, (c_m : A) \entailpot{q} P_{c_m} :: (z_k : C)$.

\item If $\Psi \semi \G \semi \D \entailpot{q} P_{z_k} :: (z_k : C)$ is well-typed, so is
$\G \semi \D \entailpot{q} P_{c_k} :: (c_k : C)$.
\end{itemize}
\end{lemma}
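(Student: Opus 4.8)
The plan is to prove the three clauses \emph{simultaneously} by induction on the structure of the given typing derivation. A simultaneous induction is needed because the clauses are interdependent: in rule $\up L$, for instance, the conclusion mentions $x_\s$ in the shared context $\G$ while the premise mentions $x_\l$ in the linear context $\D$, so renaming a shared channel in the conclusion becomes, on the premise, a renaming of a linear channel — and dually $\up R$, $\down L$, $\down R$ shift the offered channel between modes $\s$ and $\l$ while keeping the same underlying name. Throughout we identify process terms up to $\alpha$-equivalence of bound channel names and assume without loss of generality that the target name $c$ is globally fresh (distinct from every channel in $\G_0$, $\G$, $\D$, from the offered channel, and from every channel occurring in $P$). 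The functional context $\Psi$ and the potential $q$ contain no channel names and are left untouched, as are the functional subderivations $\Psi_i \exppot{p} M : \tau$ appearing in $\{\}E$, $\arrow L$, and $\product R$.

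\textbf{Routine cases.} Every process typing rule is syntax-directed and \emph{equivariant} under channel renaming: its premises and side conditions inspect only the \emph{modes} of channels and the structure of \emph{types}, never the underlying names, and the name substitution $[c/x]$ preserves both. So for each rule we push the substitution into every premise (renaming the same base name, at whichever mode it occurs there), apply the induction hypothesis, and re-apply the rule; the side conditions — $\D \plin$, $(A_\s,A_\s)\esync$, $(A_\l,A_\s)\esync$, the arithmetic constraints on potentials, and so on — are immediate, being statements about types and modes alone.

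\textbf{Cases requiring care.} Three families deserve attention. (i) $\m{fwd}$: if $x$ is one of the two channels, apply $[c/x]$ to turn $\fwd{x_m}{y_m}$ into $\fwd{c_m}{y_m}$ (clause 3, when $x$ is the provided channel) or into the analogous term with $c_m$ on the used side (clause 2); nothing else changes since the premise is just $q = 0$. (ii) The modality rules $\up R$, $\up L$, $\down R$, $\down L$: here the renamed name appears at mode $\s$ in one judgment of the rule and at mode $\l$ in the adjacent one, so we invoke the \emph{other} clause of the induction hypothesis on the premise — exactly what makes the simultaneous formulation necessary. (iii) The monadic rules $\{\}I$, $\{\}E$ and the input-binding rules ($\tensor L$, $\lolli R$, $\erecvch{x}{y} \semi P$, $\eacquire{y_\l}{x_\s} \semi P$, etc.): if $x$ lies among the argument channels $\overline{d}$, $\overline{a}$ handed to a monadic value, we restrict the substitution to the appropriate sub-context and sub-term; if the rule introduces a fresh bound channel that happens to coincide with $c$, we $\alpha$-rename it away first. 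Freshness of $c$ rules out any capture.

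\textbf{Main obstacle.} None of this is conceptually deep; the only genuine work is the bookkeeping of keeping the three clauses in step through the mode-shifting rules $\up/\down$ and of avoiding variable capture in the binder cases. The proof therefore reads most cleanly once we have fixed, up front, the $\alpha$-equivalence convention on channel binders and the global freshness of $c$, after which every case is mechanical.
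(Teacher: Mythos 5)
Your proposal is correct. The paper in fact states Lemma~\ref{lem:renaming} without giving a proof, treating it as routine, and your simultaneous structural induction is precisely the argument being implicitly appealed to: the typing rules are equivariant in channel names, the only non-trivial bookkeeping is at the mode-shifting rules $\up R$, $\up L$, $\down R$, $\down L$ (where the renamed name migrates between $\G$ and $\D$ or changes the mode of the offered channel, forcing the clauses to be proved together), and at binders, where your $\alpha$-convention and global freshness of $c$ prevent capture. The freshness assumption you add is not in the lemma statement but is needed for the renamed context to remain well-formed, and it is satisfied at every use of the lemma in the preservation proof (the target channel is always a freshly generated runtime channel such as $c^+_m$, $c_\p$, $c_\s$, or the linear counterpart $a_\l$ of an acquired shared channel), so your proof covers exactly what the paper relies on.
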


\begin{lemma}[Invariants]\label{lem:invariants}
The process typing judgment $\Psi \semi \G \semi \D \entailpot{q} P ::
(x_m : A)$
preserves the following invariants.

\begin{itemize}
\item[$(\p)$] $\Psi \semi \cdot \semi \D_\p \entailpot{q} P :: (x_\p : A_\p)$

\item[$(\s/\l)$] $\Psi \semi \G \semi \D_\p \entailpot{q} P :: (x_\s : A_\s)$ or
$\Psi \semi \G \semi \D \entailpot{q} P :: (x_\l : A_\l)$

\item[$(\c)$] $\Psi \semi \G \semi \D \entailpot{q} P :: (x_\c : A_\c)$
\end{itemize}
\end{lemma}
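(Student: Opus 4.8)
The plan is to prove this by a direct case analysis on the last typing rule applied in the derivation of $\Psi \semi \G \semi \D \entailpot{q} P :: (x_m : A)$. For each rule we assume that its conclusion satisfies the invariant of Definition~\ref{def:proc_typ} associated with the mode $m$ of the offered channel, and we check that every \emph{process-typing} premise satisfies the invariant for the mode of \emph{its} offered channel. Premises that are functional typing judgments $\Psi' \exppot{p} M : \tau$ or side conditions ($q = 0$, $\D \plin$, equi-synchronizing conditions, sharing judgments) carry no process invariant and are discharged at once; the rules $\m{fwd}$ and $\one R$ have no process-typing premises and hold vacuously.

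The bulk of the cases are ``shape-preserving'': the choice rules ($\oplus R/L$, $\with R/L$), value-communication rules ($\arrow R/L$, $\product R/L$), $\one L$, and the potential rules ($\m{work}$, $\paypot R/L$, $\getpot R/L$). In all of these the mode of the offered channel and the modes of all context channels are unchanged (in $\one L$ the linear context only shrinks), so the only thing to verify is that the offered type still lies in the relevant grammar $\lang{A_m}$; this follows because the type changes only by stepping through one grammar production — e.g.\ $\ichoice{\ell : A_\ell} \in \lang{A_\p}$ forces $A_k \in \lang{A_\p}$, and $\tpaypot{A}{r} \in \lang{A_\p}$ forces $A \in \lang{A_\p}$ — and analogously for modes $\s$, $\l$, $\c$. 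When a type change lands on a context channel (the $L$-rules), the invariant places no constraint on the \emph{types} of context channels, only on their modes, so nothing further is needed.

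The delicate cases — which I expect to be the main obstacle — are the channel-passing rules ($\tensor_n R/L$, $\lolli_n R/L$), the monadic elimination rules ($\{\}E_{\p m}$, $\{\}E_{\s m}$, $\{\}E_{\c m}$), and the adjoint shift rules ($\up R/L$, $\down R/L$), since these are the only rules that can change the offered mode or insert a channel of a ``foreign'' mode into the context. For $\up L$ and $\down L$ the rule is restricted to $m \in \{\l, \c\}$, and this is exactly what is needed: adding $(x_\l : A_\l)$ to $\D$ (resp.\ $(x_\s : A_\s)$ to $\G$) is harmless precisely because the $\l$- and $\c$-invariants impose no modal restriction on $\D$ or $\G$ — the argument would break if $m$ could be $\s$ or $\p$. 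For $\up R$ (dually $\down R$) the offered mode moves $\s \leftrightarrow \l$; here one uses that the sole production for $A_\s$ is $\up A_\l$ with $A_\l \in \lang{A_\l}$, and that the explicit premise $\D \plin$ supplies exactly the purely-linear-context condition that the $\s$-offering side of $\down R$ requires. In the channel-passing rules, when $y_n$ enters $\D$ one must argue $n$ is compatible: if the offered mode is $\l$ or $\c$ there is nothing to check, while if it is $\p$ or $\s$ one appeals to the fact that the grammar $A_\p$ (and $A_\s = \up A_\l$, which exchanges nothing) only admits exchange of mode-$\p$ channels, together with well-formedness of the context typing the exchanged channel. For the monadic eliminations the spawned-process obligation is hidden inside the monadic type of $M$ and re-established by the corresponding introduction rule, so only the continuation premise needs checking, and its context differs from the conclusion's by one channel added at the fixed spawn mode ($\p$, $\s$, or $\c$) and possibly some shared channels — again admissible for every offered mode the rule permits.

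In short, the lemma is a bookkeeping argument whose only real content is verifying that no rule ever hands a purely linear or shared process a linear channel of a non-$\p$ mode, and that the $\s \leftrightarrow \l$ transitions of the shift rules are aligned with the grammar stratification and the $\D \plin$ side conditions; these are the steps I would write out in detail, treating all remaining cases as routine.
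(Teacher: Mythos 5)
Your proposal is correct and follows essentially the same route as the paper's proof: a rule-by-rule case analysis showing that whenever the conclusion of a typing rule satisfies the mode-indexed invariant of Definition~\ref{def:proc_typ}, so does each process-typing premise, with the only substantive checks being exactly the ones you single out (the mode restrictions and $\D \plin$ side conditions of the $\up$/$\down$ rules, the modes of exchanged channels in $\tensor_n$/$\lolli_n$, and the spawn modes in the monadic eliminations). The paper's appendix proof writes out these same cases, treating the remaining rules as routine, so no further changes are needed.
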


\begin{proof}
The elimination rules preserve the invariant trivially because they can only
be applied when the invariant is maintained and the premise in each rule
maintains the same invariant.

\begin{itemize}
\item Case ($E_{\p\p}$) : This rule can only be applied when the context is
purely linear. And then adding $x_\p$ to the context will keep it purely linear.

\item Case ($E_{\p \s}, E_{\p \l}$) : This rule can only be applied if offering channel
is either in $\s$ or $\l$ mode and the context is purely linear. Hence, adding
$x_\p$ to the context is allowed.

\item Case ($E_{\p \c}$) : The context is mixed linear, hence adding a purely
linear channel is valid.

\item Case ($E_{\s\s}$, $E_{\s \l}$, $E_{\s\c}$) : The context has shared channels
in each case, hence adding another shared channel is valid.

\item Case ($E_{\c\c}$) : Adding a client linear channel to a mixed context is valid.

\item Case ($\m{fwd}$) :
\begin{itemize}
\item[$(\p)$] : $\D_\p = (y_\p : A_\p)$ which is valid since $\D_\p$ is purely
linear and there are no premises.

\item[$(\s/\l)$] : This rule cannot be applied since the $\m{fwd}$ rule applies
only when the offering mode is $\p$. Hence, there is a mode mismatch.

\item[$(\c$)] : Analogous to ($\s/ \l$).
\end{itemize}

\item Case ($\oplus R$) :
\begin{itemize}
\item[$(\p)$] : 
\[
\infer[\oplus R]
{\Psi \semi \cdot \semi \D_\p \entailpot{q}
(\esendl{x_\p}{k} \semi P) ::
(x_\p : \ichoice{\ell : A_{\ell}}_{\ell \in L})}
{\Psi \semi \cdot \semi \D_\p \entailpot{q}
P :: (x_\p : A_k) \qquad (k \in L)}
\]
The context doesn't change,
and the type of the offered channel remains purely linear.

\item[$(\s/\l)$] :
\[
\infer[\oplus R]
{\Psi \semi \G \semi \D \entailpot{q}
(\esendl{x_\l}{k} \semi P) ::
(x_\l : \ichoice{\ell : A_{\ell}}_{\ell \in L})}
{\Psi \semi \G \semi \D \entailpot{q}
P :: (x_\l : A_k) \qquad (k \in L)}
\]
The context doesn't change,
and the type of the offered channel remains shared linear.
Also, the mode of $x$
cannot be $\s$ because the type doesn't allow that.

\item[$(\c$)] :
\[
\infer[\oplus R]
{\Psi \semi \G \semi \D \entailpot{q}
(\esendl{x_\c}{k} \semi P) ::
(x_\c : \ichoice{\ell : A_{\ell}}_{\ell \in L})}
{\Psi \semi \G \semi \D \entailpot{q}
P :: (x_\c : A_k) \qquad (k \in L)}
\]
The context doesn't change, and
the type of the offered channel remains client linear.
\end{itemize}

\item Case ($\oplus L$) :
\begin{itemize}
\item[$(\p)$] : 
\[
\infer[\oplus L]
{\Psi \semi \cdot \semi \D_\p, (x_\p : \ichoice{\ell : A_{\ell}}_{\ell \in L})
\entailpot{q} \ecase{x_\p}{\ell}{Q_{\ell}}_{\ell \in L} :: (z_\p : C)}
{\Psi \semi \cdot \semi \D_\p, (x_\p : A_{\ell}) \entailpot{q}
Q_{\ell} :: (z_\p : C)
\qquad (\forall \ell \in L)}
\]
The context remains purely linear,
and the offered channel doesn't change.

\item[$(\s/\l)$] :
\[
\infer[\oplus L]
{\Psi \semi \G \semi \D, (x_m : \ichoice{\ell : A_{\ell}}_{\ell \in L})
\entailpot{q} \ecase{x_m}{\ell}{Q_{\ell}}_{\ell \in L} :: (z_k : C)}
{\Psi \semi \G \semi \D, (x_m : A_{\ell}) \entailpot{q}
Q_{\ell} :: (z_k : C)
\qquad (\forall \ell \in L)}
\]
The mode of $x_m$ doesn't change,
and the offered channel doesn't change.

\item[$(\c$)] :
\[
\infer[\oplus L]
{\Psi \semi \G \semi \D, (x_m : \ichoice{\ell : A_{\ell}}_{\ell \in L})
\entailpot{q} \ecase{x_m}{\ell}{Q_{\ell}}_{\ell \in L} :: (z_\c : C)}
{\Psi \semi \G \semi \D, (x_m : A_{\ell}) \entailpot{q}
Q_{\ell} :: (z_\c : C)
\qquad (\forall \ell \in L)}
\]
The mode of the channel $x_m$ doesn't change,
and the offered channel doesn't change.
\end{itemize}

\item Case ($\lolli_n R$) :
\begin{itemize}
\item[$(\p)$] : 
\[
\infer[\lolli_\p R]
{\Psi \semi \cdot \semi \D_\p \entailpot{q}
\erecvch{x_\p}{y_\p} \semi P :: (x_\p : A \lolli_\p B)}
{\Psi \semi \cdot \semi \D_\p, (y_\p : A) \entailpot{q}
P :: (x_\p : B)}
\]
A process offering a purely linear channel only allows exchanging
purely linear channels.
This channel gets added to the purely linear context,
and the type of the offered channel remains purely linear.

\item[$(\s/\l)$] :
\[
\infer[\lolli_n R]
{\Psi \semi \G \semi \D \entailpot{q}
\erecvch{x_\l}{y_n} \semi P :: (x_\l : A \lolli_n B)}
{\Psi \semi \G \semi \D, (y_n : A) \entailpot{q}
P :: (x_\l : B)}
\]
A linear channel gets added to the mixed linear context,
and the type of the offered channel remains shared linear.
Also, the mode of $x$
cannot be $\s$ because the type doesn't allow that.

\item[$(\c$)] :
\[
\infer[\lolli_n R]
{\Psi \semi \G \semi \D \entailpot{q}
\erecvch{x_\c}{y_n} \semi P :: (x_\c : A \lolli_n B)}
{\Psi \semi \G \semi \D, (y_n : A) \entailpot{q}
P :: (x_\c : B)}
\]
A linear channel gets added to the mixed linear context,
and the type of the offered channel remains client linear.
\end{itemize}

\item Case ($\lolli_n L$) :
\begin{itemize}
\item[$(\p)$] : 
\[
\infer[\lolli_\p L]
{\Psi \semi \cdot \semi \D_\p, (w_\p : A), (x_\p : A \lolli_\p B)
\entailpot{q} \esendch{x_\p}{w_\p} \semi Q :: (z_\p : C)}
{\Psi \semi \cdot \semi \D_\p, (x_\p : B) \entailpot{q}
Q :: (z_\p : C)}
\]
A purely linear channel is allowed in a purely linear context.
The context remains purely linear,
and the offered channel doesn't change.

\item[$(\s/\l)$] :
\[
\infer[\lolli_n L]
{\Psi \semi \G \semi \D, (w_n : A), (x_m : A \lolli_n B)
\entailpot{q} \esendch{x_m}{w_n} \semi Q :: (z_k : C)}
{\Psi \semi \G \semi \D, (x_m : B) \entailpot{q}
Q :: (z_k : C)}
\]
A linear channel is allowed in a mixed linear context.
The mode of the channel $x_m$ doesn't change,
and the offered channel doesn't change.

\item[$(\c$)] :
\[
\infer[\lolli_n L]
{\Psi \semi \G \semi \D, (w_n : A), (x_m : A \lolli_n B)
\entailpot{q} \esendch{x_m}{w_n} \semi Q :: (z_k : C)}
{\Psi \semi \G \semi \D, (x_m : B) \entailpot{q}
Q :: (z_k : C)}
\]
A linear channel is allowed in a mixed linear context.
The mode of the channel $x_m$ doesn't change,
and the offered channel doesn't change.
\end{itemize}

\item Case ($\up R$) : 
\begin{itemize}
\item[$(\p)$] : This rule cannot be applied since the offered channel in
this case should be purely linear, which is not the case for $\up R$ rule.

\item[$(\s/\l)$] :
\[
\infer[\up R]
{\Psi \semi \G \semi \D \entailpot{q}
\eaccept{x_\l}{x_\s} \semi P :: (x_\s : \up A_\l)}
{\D \plin \qquad \Psi \semi \G \semi \D \entailpot{q} P :: (x_\l : A_\l)}
\]
The context doesn't change and the offered channel switches its mode
from $\s$ to $\l$. Moreover, the rule cannot be applied if the offered channel
is in $\l$ mode, since there will be a mode mismatch.

\item[$(\c)$] : This rule cannot be applied since the offered channel
should be in $\c$ mode, which doesn't match with $\s$.
\end{itemize}

\item Case ($\down R$) : Analogous to $\up R$.

\item Case ($\up L$) :
\begin{itemize}
\item[$(\p)$] : This rule cannot be applied since the context should be
purely linear, which is not the case for $\up L$ rule.

\item[$(\s/\l)$] :
\[
\infer[\up L_\l]
{\Psi \semi \G, (x_\s : \up A_\l) \semi \D
\entailpot{q} \eacquire{x_\l}{x_\s} \semi Q :: (z_\l : C)}
{\Psi \semi \G \semi \D, (x_\l : A_\l)
\entailpot{q} Q :: (z_\l : C)}
\]
A shared linear channel is allowed in a mixed linear context. The mode of the
offering channel is unchanged. A shared channel is removed from the shared context,
but the new context is still shared.

\item[$(\c)$] :
\[
\infer[\up L]
{\Psi \semi \G, (x_\s : \up A_\l) \semi \D
\entailpot{q} \eacquire{x_\l}{x_\s} \semi Q :: (z_\c : C)}
{\Psi \semi \G \semi \D, (x_\l : A_\l)
\entailpot{q} Q :: (z_\c : C)}
\]
A shared linear channel gets added to the mixed linear context, which is
allowed. A shared channel is removed from the shared context,
but the new context is still shared. Moreover, the offered channel
remains at the same mode.
\end{itemize}

\item Case ($\down L$) : Analogous to $\up L$ rule.

\end{itemize}
\end{proof}

\begin{lemma}[Configuration Weakening]\label{lem:conf-weakening}
If we have a well-typed configuration, $\G_0 \potconf{E} \W :: (\G \semi \D)$,
then for a shared channel $c_\s : B_\s \notin \G_0$, we can weaken $\G_0$
and get $\G_0, (c_\s : B_\s) \potconf{E}\W :: (\G \semi \D)$.
\end{lemma}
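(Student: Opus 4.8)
The plan is to proceed by rule induction on the derivation of $\G_0 \potconf{E} \W :: (\G \semi \D)$, re-deriving the very same judgment with every occurrence of $\G_0$ replaced by $\G_0, (c_\s : B_\s)$. The one ingredient that is not purely bookkeeping is a \emph{shared-context weakening} property for the process typing judgment: if $\Psi \semi \G \semi \D \entailpot{q} P :: (x_m : A)$ and $c_\s : B_\s$ is a shared channel not occurring in $\G$, then $\Psi \semi \G, (c_\s : B_\s) \semi \D \entailpot{q} P :: (x_m : A)$. This is exactly the weakening that shared propositions enjoy, as noted in Section~\ref{sec:sharing}, and it follows by a routine induction on the process typing derivation: only $\up L$, $\down L$, and the shared/transaction spawn rules read $\G$, and in each of them the extra hypothesis $c_\s : B_\s$ is simply threaded unchanged into the premises.

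Granting that sub-lemma, the main induction is short. For the base case $\m{emp}$ we have $\W = (\cdot)$, $\G = \D = (\cdot)$, $E = 0$, and we re-apply $\m{emp}$ under the enlarged signature $\G_0, (c_\s : B_\s)$. For $\m{proc}_\p$ and $\m{msg}$ the process (resp.\ message) typing premise carries an empty shared context and hence does not mention $\G_0$; we invoke the induction hypothesis on the sub-configuration $\W$ to relocate it under $\G_0, (c_\s : B_\s)$ and re-apply the same rule with the unchanged premise, leaving $E$ intact. For $\m{proc}_\s$, $\m{proc}_\l$, and $\m{proc}_\c$ the process premise has the form $\cdot \semi \G_0 \semi \D' \entailpot{q} P :: (x_m : A)$; here we first apply the shared-context weakening sub-lemma to get $\cdot \semi \G_0, (c_\s : B_\s) \semi \D' \entailpot{q} P :: (x_m : A)$, apply the induction hypothesis to the sub-configuration, observe that the side conditions $(x_\s : A_\s) \in \G_0$ and the $\esync$ requirements persist verbatim in the larger signature, and re-apply the rule. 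No energy annotation changes anywhere, since no potential $q$ or work counter $w$ is touched.

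Finally the two global well-formedness conditions survive: uniqueness holds because $c_\s \notin \G_0$ by hypothesis and, since $\G \subseteq \G_0$, also $c_\s \notin \G$, while $c_\s$ is at mode $\s$ and thus disjoint from the linear channels in $\D$; and $\G \subseteq \G_0 \subseteq \G_0, (c_\s : B_\s)$. The only step with any real content is the shared-context weakening sub-lemma, and even that is a standard structural argument; I expect no genuine obstacle beyond the discipline of routing every configuration-typing rule that feeds $\G_0$ into a process-typing premise through that sub-lemma rather than asserting the result directly.
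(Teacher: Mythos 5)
Your proposal matches the paper's own argument: an induction on the configuration typing derivation, handling $\m{emp}$, $\m{proc}_\p$, and $\m{msg}$ directly (their process premises have empty shared context) and routing the $\m{proc}_\s$, $\m{proc}_\l$, and $\m{proc}_\c$ cases through exactly the process-level shared-context weakening that the paper isolates as Lemma~\ref{lem:proc-weakening}. The explicit check of the well-formedness side conditions is a harmless addition; no gap.
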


\begin{proof}
We case analyze on the configuration typing judgment.

\begin{itemize}
\item Case ($\m{emp}$) : We have $\G_0 \potconf{0} (\cdot) ::
(\cdot \semi \cdot)$. But, since there is no premise, we use the
$\m{emp}$ rule to get $\G_0, (c_\s : B_\s) \potconf{0} (\cdot) ::
(\cdot \semi \cdot)$.

\item Case ($\m{proc}_\p$) : We have $\G_0 \potconf{E + q + w}
\W, \proc{x_\p}{w, P} :: (\G \semi \D, (x_\p : A_\p))$. Inverting
the $\m{proc}_\p$ rule,
\[
\infer[\m{proc}_\p]
{\G_0 \potconf{E + q + w} \W, \proc{x_\p}{w, P} ::
(\G \semi \D, (x_\p : A_\p))}
{\G_0 \potconf{E} \W :: (\G \semi \D, \D_\p') \qquad
\cdot \semi \cdot \semi \D_\p' \entailpot{q} P :: (x_\p : A_\p)}
\]
we get $\G_0 \potconf{E} \W ::
(\G \semi \D, \D_\p')$. By the induction hypothesis,
$\G_0, (c_\s : B_\s) \potconf{E} \W :: (\G \semi \D, \D_\p')$.
Applying the $\m{proc}_\p$ rule,
\[
\infer[\m{proc}_\p]
{\G_0, (c_\s : B_\s) \potconf{E + q + w} \W, \proc{x_\p}{w, P} ::
(\G \semi \D, (x_\p : A_\p))}
{\G_0, (c_\s : B_\s) \potconf{E} \W :: (\G \semi \D, \D_\p') \qquad
\cdot \semi \cdot \semi \D_\p' \entailpot{q} P :: (x_\p : A_\p)}
\]

\item Case ($\m{proc}_\s$) : We have
$\G_0 \potconf{E + q + w} \W, \proc{x_\s}{w, P} ::
(\G, (x_\s : A_\s))$. Inverting the $\m{proc}_\s$ rule,
\[
\infer[\m{proc}_\s]
{\G_0 \potconf{E + q + w} \W, \proc{x_\s}{w, P} ::
(\G, (x_\s : A_\s)
\semi \D)}
{(x_\s : A_\s) \in \G_0 \qquad
(A_\s, A_\s) \; \m{esync} \qquad
\G_0 \potconf{E} \W :: (\G \semi \D,\D_\p') \qquad
\cdot \semi \D_\p' \entailpot{q} P :: (x_\s : A_\s)}
\]
we get $\G_0 \potconf{E} \W :: (\G \semi \D,\D_\p')$. By the
induction hypothesis,
$\G_0, (c_\s : B_\s) \potconf{E} \W :: (\G \semi \D,\D_\p')$.
Also, by Lemma~\ref{lem:proc-weakening}, we get
$\cdot \semi \G_0, (c : B_\s) \semi \D_\p' \entailpot{q} P :: (x_\s : A_\s)$.
Applying the $\m{proc}_\s$ rule back,
\[
\inferrule*[right = $\m{proc}_\s$]
{(x_\s : A_\s) \in \G_0, (c_\s : B_\s) \qquad
(A_\s, A_\s) \; \m{esync} \qquad
\G_0, (c_\s : B_\s) \potconf{E} \W :: (\G \semi \D,\D_\p') \\
\cdot \semi \G_0, (c : B_\s) \semi \D_\p' \entailpot{q} P :: (x_\s : A_\s)}
{\G_0, (c_\s : B_\s) \potconf{E + q + w} \W, \proc{x_\s}{w, P} ::
(\G, (x_\s : A_\s)
\semi \D)}
\]

\item Case ($\m{proc}_\l$) : We have
$\G_0 \potconf{E + q + w} \W, \proc{x_\l}{w, P} ::
(\G, (x_\s : A_\s) \semi \D, (x_\l : A_\l))$.
Inverting the $\m{proc}_\l$ rule,
\[
\infer[\m{proc}_\l]
{\G_0 \potconf{E + q + w} \W, \proc{x_\l}{w, P} ::
(\G, (x_\s : A_\s) \semi \D, (x_\l : A_\l))}
{(x_\s : A_\s) \in \G_0 \qquad
(A_\l, A_\s) \; \m{esync}  \qquad
\G_0 \potconf{E} \W :: (\G \semi \D, \D') \qquad
\cdot \semi \G_0 \semi \D' \entailpot{q} P :: (x_\l : A_\l)}
\]
we get $\G_0 \potconf{E} \W :: (\G \semi \D, \D')$.
Applying the induction hypothesis, we get
$\G_0, (c_\s : B_\s) \potconf{E} \W ::
(\G \semi \D, \D')$. Using Lemma~\ref{lem:proc-weakening},
we get $\cdot \semi \G_0, (c_\s : B_\s) \semi \D' \entailpot{q}
P :: (x_\l : A_\l)$. Applying the $\m{proc}_\l$ rule back,
\[
\inferrule*[right = $\m{proc}_\l$]
{(x_\s : A_\s) \in \G_0, (c_\s : B_\s) \qquad
(A_\l, A_\s) \; \m{esync}  \qquad
\G_0, (c_\s : B_\s) \potconf{E} \W :: (\G \semi \D, \D') \\
\cdot \semi \G_0, (c_\s : B_\s) \semi \D' \entailpot{q} P :: (x_\l : A_\l)}
{\G_0, (c_\s : B_\s) \potconf{E + q + w} \W, \proc{x_\l}{w, P} ::
(\G, (x_\s : A_\s) \semi \D, (x_\l : A_\l))}
\]

\item Case ($\m{proc}_\c$) : We have
$\G_0 \potconf{E+q+w} \W, \proc{x_\c}{w, P} :: (\G \semi
\D, (x_\c : A_\c))$. Inverting the $\m{proc}_\c$ rule,
\[
\infer[\m{proc}_\c]
{\G_0 \potconf{E+q+w} \W, \proc{x_\c}{w, P} :: (\G \semi
\D, (x_\c : A_\c))}
{\G_0 \potconf{E} \W :: (\G \semi \D, \D') \qquad
\cdot \semi \G_0 \semi \D' \entailpot{q} P :: (x_\c : A_\c) }
\]
we get $\G_0 \potconf{E} \W :: (\G \semi \D, \D')$. By the
induction hypothesis, we get
$\G_0, (c_\s : B_\s) \potconf{E} \W :: (\G \semi \D, \D')$.
Also, using Lemma~\ref{lem:proc-weakening}, we get
$\cdot \semi \G_0, (c_\s : B_\s) \semi \D' \entailpot{q} P :: (x_\c : A_\c)$
Applying the $\m{proc}_\c$ rule back,
\[
\infer[\m{proc}_\c]
{\G_0, (c_\s : B_\s) \potconf{E+q+w} \W, \proc{x_\c}{w, P} ::
(\G \semi \D, (x_\c : A_\c))}
{\G_0, (c_\s : B_\s) \potconf{E} \W :: (\G \semi \D, \D') \qquad
\cdot \semi \G \semi \D' \entailpot{q} P :: (x_\c : A_\c) }
\]

\item Case ($\m{msg}$) : We have 
$\G_0 \potconf{E+q+w} \W, \msg{x_m}{w, M} :: (\G \semi
\D, (x_m : A))$. Inverting the $\m{msg}$ rule,
\[
\infer[\m{msg}]
{\G_0 \potconf{E+q+w} \W, \msg{x_m}{w, M} :: (\G \semi
\D, (x_m : A))}
{\G_0 \potconf{E} \W :: (\G \semi \D, \D') \qquad
\cdot \semi \cdot \semi \D' \entailpot{q} M :: (x_m : A)}
\]
$\G_0 \potconf{E} \W :: (\G \semi \D, \D')$. By the
induction hypothesis, $\G_0, (c_\s : B_\s) \potconf{E+q+w}
\W, \msg{x_m}{w, M} :: (\G \semi \D, (x_m : A))$. Applying
the $\m{msg}$ rule back,
\[
\infer[\m{msg}]
{\G_0, (c_\s : B_\s) \potconf{E+q+w} \W, \msg{x_m}{w, M} ::
(\G \semi \D, (x_m : A))}
{\G_0, (c_\s : B_\s) \potconf{E} \W :: (\G \semi \D, \D') \qquad
\cdot \semi \cdot \semi \D' \entailpot{q} M :: (x_m : A)}
\]
\end{itemize}
\end{proof}

\begin{lemma}[Process Weakening]\label{lem:proc-weakening}
For a well-typed process $\G \semi \D \entailpot{q} P :: (x_\c : A)$ and for
a shared channel $c_\s : A_\s \notin \G$, we have $\G, (c_\s : A_\s) \semi \D
\entailpot{q} P :: (x_\c : A)$.
\end{lemma}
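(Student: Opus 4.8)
The plan is to prove this by structural induction on the derivation of $\G \semi \D \entailpot{q} P :: (x_\c : A)$ (I suppress the functional context $\Psi$, which is irrelevant to the argument). A useful first observation is that every process typing rule preserves the mode of the offered channel from conclusion to premises, so the entire derivation tree consists of judgments whose offered channel sits at mode $\c$; in particular the rules $\up R$, $\down R$ (which offer at mode $\s$) and $\m{fwd}_\p$ never occur, and I only need to inspect the rules that can conclude a mode-$\c$ judgment. Moreover, since $A_\c = A_\p$ is purely linear, the offered type is never of the form $\down A_\s$, which is also consistent with $\down R$ being absent.

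First I would dispatch the rules that thread $\G$ unchanged or only use it monotonically. For all the branching and communication rules ($\oplus R$, $\oplus L$, $\with R$, $\with L$, $\tensor_n R$, $\tensor_n L$, $\lolli_n R$, $\lolli_n L$, $\product R$, $\product L$, $\arrow R$, $\arrow L$, $\one R$, $\one L$, $\m{work}$, $\paypot R$, $\paypot L$, $\getpot R$, $\getpot L$) the shared context $\G$ is passed verbatim to each process-typing premise; there I apply the induction hypothesis (the side condition $c_\s \notin \G$ still holds because the premise's shared context is again exactly $\G$) and reapply the same rule with $\G$ replaced by $\G, (c_\s : A_\s)$. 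The expression-typing premises $\Psi_1 \exppot{p} M : \tau$ carry no shared context and are left untouched. The rule $\m{fwd}_\c$ has no premise and $\G$ is arbitrary, so its case is immediate. The monad elimination rules $\{\}E_{\p\c}$ and $\{\}E_{\c\c}$ go the same way: their process-typing premise uses the same $\G$ (only the linear context is extended), and their side conditions of the form $\G \supseteq \overline{a_\s : A_\s}$ are preserved when $\G$ grows.

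The remaining cases — $\up L$, $\down L$, and $\{\}E_{\s\c}$ — are where $\G$ itself changes between conclusion and premise, and they are the crux. In $\up L$ the premise's shared context is $\G$ with the acquired channel removed, so the induction hypothesis applies directly and reapplying $\up L$ with $c_\s$ inserted works, using $c_\s \notin \G$. In $\down L$ the premise's shared context is $\G, (x_\s : A_\s)$, and in $\{\}E_{\s\c}$ it is $\G, (x_\s : A)$; to invoke the induction hypothesis I need the weakening channel $c_\s$ to be distinct from the released (resp. spawned) channel $x_\s$. The main obstacle is precisely this freshness bookkeeping: I would first apply Lemma~\ref{lem:renaming} to rename $x$ to a channel different from $c_\s$ (equivalently, appeal to the convention that bound channel names may be chosen fresh), then apply the induction hypothesis to the renamed premise, and finally reapply $\down L$ (resp. $\{\}E_{\s\c}$) with $c_\s$ added to $\G$. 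Once the renaming is in place each case closes by a routine reapplication of the rule, with no delicate reasoning about potentials or modes beyond the invariants of Definition~\ref{def:proc_typ}.
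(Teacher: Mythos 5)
Your overall route is the one the paper intends: the paper's entire proof is ``analogous to Lemma~\ref{lem:conf-weakening}'', i.e.\ exactly the rule induction you carry out, and all the cases in which $\G$ is threaded unchanged (the communication rules, $\m{fwd}_\c$, the monad eliminations with side condition $\G \supseteq \overline{a_\s : A_\s}$) as well as $\up L$ are handled correctly. For $\{\}E_{\s\c}$ your renaming move is also fine, since the spawned channel $x_\s$ really is a binder in $\ecut{x_\s}{M}{\overline{d_\p}}{Q}$ and can be $\alpha$-freshened away from $c_\s$.

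The soft spot is the $\down L$ case, which you correctly identify as the crux but then dispatch with a fix that is not available there. In $\down L$ the regained channel $x_\s$ of the premise is \emph{not} a bindable name: the rule forces it to carry the same name as the released channel $x_\l : \down A_\s$, which occurs \emph{free} in the conclusion's linear context $\D$. Renaming $x$ therefore changes the very judgment you are trying to weaken (and renaming only the premise's $x_\s$ breaks the name agreement the rule demands), so when $c$ coincides with such an $x$ the induction hypothesis' side condition $c_\s \notin \G, (x_\s : A_\s)$ cannot be restored. This collision is not exotic: it is precisely the situation in which the paper applies the lemma in the $\up C$ case of preservation, where $a_\s$ is weakened in while $a_\l$ sits in $\D$ and is later released. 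To close the case you need a different argument, e.g.\ observe that under equi-synchronization the colliding declaration reappears at the same type, and treat the shared context as admitting contraction (in the spirit of Lemma~\ref{lem:subst}), so that re-adding $c_\s$ in the premise is a no-op; alternatively the lemma must be stated with a proviso on $c_\s$ relative to $\D$ and the types involved. (Minor slip, not affecting anything: $\down R$ offers at mode $\l$, not $\s$; it is still excluded from a mode-$\c$ derivation.)
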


\begin{proof}
Analogous to Lemma~\ref{lem:conf-weakening}.
\end{proof}

\begin{lemma}[Permutation-Message]\label{lem:perm-msg}
Consider a well-typed configuration typed by the judgment
$\G_0 \potconf{E} \W_1,$
$\msg{c_m}{w, M}, \W_2, \proc{d_k}{w', P(c_m)} ::
(\G \semi \D)$. Then, the message can be moved right such that
the configuration $\G_0 \potconf{E} \W_1, \W_2, \msg{c_m}{w, M},
\proc{d_k}{w', P(c_m)} :: (\G \semi \D)$ is well-typed.
\end{lemma}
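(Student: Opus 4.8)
The plan is to prove the statement by induction on the length of the intervening configuration $\W_2$, the crucial ingredient being a single-step \emph{transposition} lemma: two adjacent semantic objects may be swapped whenever the right one does not use the channel offered by the left one. First I would record the structural observation that the configuration typing judgment is essentially \emph{linear}: apart from $\m{emp}$, every rule ($\m{proc}_\p$, $\m{proc}_\s$, $\m{proc}_\l$, $\m{proc}_\c$, $\m{msg}$) has exactly one configuration-typing premise (typing the prefix) and one process/message-typing premise. Thus a derivation of $\G_0 \potconf{E} \W :: (\G \semi \D)$ threads the linear context from left to right: each object $O_i$ consumes a sub-context $\D_i$ of the channels made available by the objects to its left (together with the externally provided $\D$), offers a single fresh channel $x_i$, and contributes $q_i + w_i$ to the energy $E$; $\G_0$ is carried unchanged throughout, the shared component of the context only grows, and the $\m{esync}$ side conditions constrain types, not order.

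The key step I would isolate is: if $\G_0 \potconf{E} \W, O_1, O_2, \W' :: (\G \semi \D)$ is well-typed, $O_1$ offers channel $x_1$, and $O_2$ does not mention $x_1$, then $\G_0 \potconf{E} \W, O_2, O_1, \W' :: (\G \semi \D)$ is well-typed with the same $\G_0, E, \G, \D$. Writing $\Theta$ for the linear channels available after typing $\W$, inversion of the two rules introducing $O_1$ then $O_2$ gives $\D_1 \subseteq \Theta$ and context $(\Theta \minus \D_1), (x_1{:}A_1)$ after $O_1$, from which $O_2$ uses $\D_2$. Since $x_1 \notin \D_2$ we get $\D_2 \subseteq \Theta \minus \D_1$, and $\D_1, \D_2$ are disjoint because every linear channel has a unique client; hence $O_2$ can be typed directly after $\W$ (needing only $\D_2 \subseteq \Theta$) and $O_1$ after that (needing only $\D_1 \subseteq \Theta \minus \D_2$). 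The context after $O_2, O_1$ is $(\Theta \minus \D_1 \minus \D_2), (x_1{:}A_1), (x_2{:}A_2)$, identical up to reordering to the one after $O_1, O_2$, so $\W'$ type-checks unchanged; the energy $\cdots + q_1 + w_1 + q_2 + w_2 + \cdots$ is unaffected; and the well-formedness conditions (all channels distinct, $\G \subseteq \G_0$) are order-independent.

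Then I would prove the lemma by induction on $|\W_2|$. If $\W_2 = (\cdot)$ there is nothing to show. Otherwise write $\W_2 = O, \W_2''$. By the well-formedness conditions and linearity, $\proc{d_k}{w', P(c_m)}$ is the unique client of $c_m$; since it lies to the right of $\W_2$, no object of $\W_2$, in particular $O$, mentions $c_m$. Applying the transposition step to the pair $\msg{c_m}{w, M}, O$ in $\W_1, \msg{c_m}{w, M}, O, \W_2'', \proc{d_k}{w', P(c_m)}$ yields well-typed $\G_0 \potconf{E} \W_1, O, \msg{c_m}{w, M}, \W_2'', \proc{d_k}{w', P(c_m)} :: (\G \semi \D)$; since $|\W_2''| < |\W_2|$ and $\W_1, O$ serves as the prefix, the induction hypothesis moves the message past $\W_2''$, giving $\G_0 \potconf{E} \W_1, O, \W_2'', \msg{c_m}{w, M}, \proc{d_k}{w', P(c_m)} :: (\G \semi \D)$, which is the claim since $\W_1, O, \W_2'' = \W_1, \W_2$.

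I expect the main obstacle to be the transposition step itself: carefully confirming that $O_2$'s linear context really omits $x_1$ and that $\D_1$ and $\D_2$ are disjoint — both resting on linearity (a unique provider and a unique client for each linear channel) and the stated well-formedness conditions — and double-checking that no rule-specific side condition ($\m{esync}$, the purely-linear-context requirement, membership in $\G_0$) is sensitive to an object's position. Once these are nailed down, the remaining context- and energy-threading is routine bookkeeping.
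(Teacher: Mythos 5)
Your proposal is correct and rests on exactly the same key observation as the paper's proof: by linearity of the configuration threading, $c_m$ is consumed only by $\proc{d_k}{w',P(c_m)}$, so no object of $\W_2$ can use it and the message commutes past $\W_2$. The paper merely sketches this (a case analysis on the message with the move asserted outright), whereas you supply the explicit induction on $|\W_2|$ and the adjacent-transposition step that the paper leaves implicit; this is an elaboration, not a different route.
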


\begin{proof}
We case analyze on the structure of the message.

\begin{itemize}
\item Case ($\tensor_n$) : We have $\G_0 \potconf{E} \W_1,
\msg{c_m}{w, \esendch{c_m}{e_n} \semi \fwd{c_m}{c^+_m}}, \W_2,
\proc{d_k}{w', P(c_m)} :: (\G \semi \D)$. First, we type the message
\[
\cdot \semi \cdot \semi (c^+_m : B), (e_n : A) \entailpot{q}
\esendch{c_m}{e_n} \semi \fwd{c_m}{c^+_m} :: (c_m : A \tensor_n B)
\]
Next, we invert the $\m{msg}$ rule,
\[
\inferrule*[right = $\m{msg}$]
{\G_0 \potconf{E} \W_1 :: (\G \semi \D, (c^+_m : B), (e_n : A)) \\
\cdot \semi \cdot \semi (c^+_m : B), (e_n : A) \entailpot{q}
\esendch{c_m}{e_n} \semi \fwd{c_m}{c^+_m} :: (c_m : A \tensor_n B)}
{\G_0 \potconf{E+q+w} \W_1, \msg{c_m}{w, \esendch{c_m}{e_n}} ::
(\G \semi \D, (c_m : A \tensor_n B))}
\]
Since the channel $c_m$ is only used by $\proc{d_k}{w', P(c_m)}$,
we know that none of the processes or messages in $\W_2$ can use it.
Hence, we can move the message just left of the process
$\proc{d_k}{w', P(c_m)}$.
\end{itemize}
\end{proof}

\begin{lemma}[Permutation-Process]\label{lem:perm-proc}
Consider a well-typed configuration typed by the judgment
$\G_0 \potconf{E} \W_1, \proc{c_m}{w, P}, \W_2,
\msg{c^+_m}{w', M(c_m)} :: (\G \semi \D)$. Then, the
process can be moved right such that
the configuration $\G_0 \potconf{E} \W_1, \W_2, \proc{c_m}{w, P},
\msg{c^+_m}{w', M(c_m)} :: (\G \semi \D)$ is well-typed.
\end{lemma}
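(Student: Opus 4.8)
The plan is to mirror the proof of Lemma~\ref{lem:perm-msg}, but phrased as a left-to-right migration of the process $\proc{c_m}{w, P}$ through the intervening objects of $\W_2$. First I would isolate the single step as an auxiliary claim: if $\G_0 \potconf{E} \W', \proc{c_m}{w, P}, O, \W'' :: (\G \semi \D)$ is well-typed and the semantic object $O$ does not use the channel $c_m$ offered by $P$, then $\G_0 \potconf{E} \W', O, \proc{c_m}{w, P}, \W'' :: (\G \semi \D)$ is well-typed. This is proved by inverting the configuration typing judgment twice (once for the suffix starting at $O$, once for $\proc{c_m}{w, P}$), observing that by the provider-before-client ordering all channels in the linear context used by $P$, and all channels used by $O$, are provided strictly within $\W'$, and that these two context fragments are disjoint by linearity; one then re-derives the judgment in the swapped order using the appropriate $\m{proc}_\p/\m{proc}_\s/\m{proc}_\l/\m{proc}_\c/\m{msg}$ rule for each of $O$ and $\proc{c_m}{w, P}$. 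The shared context $\G_0$ is threaded unchanged through both rules, the $\esync$ side conditions mention only types (which are untouched), and the energy index $E$ is the same sum of potentials and work counters, so it is preserved. All five cases for $O$ go through uniformly.

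Next I would argue that every semantic object in $\W_2$ satisfies the side condition of the auxiliary claim, i.e.\ does not use $c_m$. Since $\msg{c^+_m}{w', M(c_m)}$ uses $c_m$ and $c_m$ is a linear channel, it is the unique client of $c_m$ in the well-typed configuration; hence no object of $\W_2$ uses $c_m$, and moreover $c_m$ is not exposed in the external interface $\D$ (again by linearity, a channel consumed internally cannot also be offered to the outside). So $c_m$ occurs only as the channel provided by $\proc{c_m}{w, P}$ and as a channel used by the message to its right.

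Finally I would conclude by induction on the length of $\W_2$. If $\W_2$ is empty there is nothing to do. Otherwise write $\W_2 = O, \W_2'$; applying the auxiliary claim (legitimate by the previous paragraph) moves $\proc{c_m}{w, P}$ one step to the right, yielding the well-typed configuration $\G_0 \potconf{E} \W_1, O, \proc{c_m}{w, P}, \W_2', \msg{c^+_m}{w', M(c_m)} :: (\G \semi \D)$, and the induction hypothesis applied with $\W_1, O$ in place of $\W_1$ finishes the job.

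The main obstacle is the auxiliary single-step swap lemma: the inversions must carefully track how the linear interface is split among $\W'$, $O$, and $\proc{c_m}{w, P}$, and handle the bookkeeping of $\G_0$ and the $\esync$ constraints that appear in the $\m{proc}_\s$ and $\m{proc}_\l$ cases. Everything there is routine, but the case analysis over the pair (rule applied for $O$) is the bulk of the work; the rest of the argument is a straightforward induction once the uniqueness-of-client observation pins down where $c_m$ can appear.
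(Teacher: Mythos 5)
Your proposal is correct and follows essentially the same route as the paper's proof: the key observation in both is that the message $\msg{c^+_m}{w', M(c_m)}$ is the sole client of the linear channel $c_m$, so no object in $\W_2$ can depend on $c_m$ and the process can be shifted rightward without disturbing the configuration typing. You merely make explicit (via the single-step swap lemma and induction on $\W_2$) the bookkeeping that the paper leaves implicit after its case analysis on the message.
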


\begin{proof}
We case analyze on the structure of the message.
\begin{itemize}
\item Case ($\lolli_n$) : We have $\G_0 \potconf{E} \W_1,
\proc{c_m}{w, P}, \W_2, \msg{c^+_m}{w', \esendch{c_m}{e_n} \semi
\fwd{c^+_m}{c_m}} :: (\G \semi \D)$. First, we type the message
\[
\cdot \semi \cdot \semi (e_n : A), (c_m : A \lolli_n B) \entailpot {q}
\esendch{c_m}{e_n} \semi \fwd{c^+_m}{c_m} :: (c^+_m : B)
\]
Since the message is the only provider of channel $c_m$ offered
by $\proc{c_m}{w, P}$, we know that none of the processes in
$\W_2$ can depend on it. Thus, the process can be moved to the
without affecting the invariant for any process in $\W_2$.
\end{itemize}
\end{proof}

\begin{lemma}[Permutation-Acquire]\label{lem:perm-acquire}
Consider a well-typed configuration typed by the judgment
$\G_0 \potconf{E} \W_1, \proc{c_m}{w', \eacquire{a_\l}{a_\s} \semi Q},
\W_2, \proc{a_\s}{w, \eaccept{a_\l}{a_\s} \semi P}, \W_3 :: (\G \semi \D)$.
Then, the acquiring process can be moved right such that
the configuration $\G_0 \potconf{E} \W_1, \W_2,
\proc{a_\s}{w, \eaccept{a_\l}{a_\s} \semi P},$
$\proc{c_m}{w', \eacquire{a_\l}{a_\s} \semi Q}, \W_3 ::
(\G \semi \D)$ is well-typed.
\end{lemma}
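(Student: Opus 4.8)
The plan is to prove this in the same style as Lemmas~\ref{lem:perm-msg} and~\ref{lem:perm-proc}: slide the predicate $\proc{c_m}{w', \eacquire{a_\l}{a_\s}\semi Q}$ rightward through $\W_2$ and then past $\proc{a_\s}{w, \eaccept{a_\l}{a_\s}\semi P}$, one $\m{proc}/\m{msg}$ predicate at a time, re-deriving the configuration typing in the reordered order. I would first record two structural observations that license the commutations. (i) The configuration typing rules impose ordering constraints only from the (unique) \emph{provider} of a \emph{linear} channel to its (unique) \emph{client} — shared channels are all drawn from the ambient $\G_0$ and create no ordering obligation — so reordering is permissible exactly when no linear client is thereby moved ahead of its provider. (ii) The energy annotation $E$ is the sum of the local potential and work counters of the predicates, hence invariant under reordering, so the conclusion keeps the same $E$. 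Then I would induct on $|\W_2|$, with the one-step claim: from $\G_0 \potconf{E} \W_1, \proc{c_m}{w',\eacquire{a_\l}{a_\s}\semi Q}, X, \W_2' :: (\G \semi \D)$ well-typed ($X$ a single predicate) derive $\G_0 \potconf{E} \W_1, X, \proc{c_m}{w',\eacquire{a_\l}{a_\s}\semi Q}, \W_2' :: (\G \semi \D)$; the final step commutes $\proc{c_m}$ past $\proc{a_\s}$ itself.

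The crux is showing the one-step commutation never strands a client before its provider. Inverting $\m{proc}_{\l}/\m{proc}_{\c}$ for the acquiring predicate and $\up L_m$ for its process term, it is typed $\cdot \semi \G_0 \semi \D_c \entailpot{q} \eacquire{a_\l}{a_\s}\semi Q :: (c_m : C)$ with $a_\s \in \G_0$ and $m \in \{\l,\c\}$; its only linear footprint is that it consumes $\D_c$ and provides $c_m$. Since in the hypothesis $X$ sits to the right of $\proc{c_m}$, $X$ provides no channel of $\D_c$, so the swap can only be obstructed by $X$ using $c_m$; by linearity there is at most one such $X$, namely $c_m$'s unique client. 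When $X = \proc{a_\s}{w,\eaccept{a_\l}{a_\s}\semi P}$ this is impossible: by Definition~\ref{def:proc_typ}(2) and rule $\up R$ the shared process $\proc{a_\s}$ has a purely linear context ($\D \plin$), i.e.\ uses only mode-$\p$ channels, whereas $c_m$ is at mode $\l$ or $\c$; more strongly, propagating this fact up the linear-dependency forest (every provider of a mode-$\p$ channel, by Definition~\ref{def:proc_typ}(1), again has an all-mode-$\p$ context, using the Invariants lemma, Lemma~\ref{lem:invariants}), the entire ancestor cone of $\proc{a_\s}$ consists of processes offering mode-$\p$ channels, so $\proc{c_m}$ is not among them. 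Hence $\proc{c_m}$ and $\proc{a_\s}$ are incomparable in the forest and can be made adjacent with $\proc{a_\s}$ immediately before $\proc{c_m}$, which is exactly the displayed configuration; the remaining inversion cases ($X$ a $\m{proc}_\p$, $\m{proc}_\s$, $\m{proc}_\l$, $\m{proc}_\c$, or $\m{msg}$) are routine re-derivations, using Process Weakening (Lemma~\ref{lem:proc-weakening}) to reconcile intermediate shared contexts.

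The step I expect to be the real obstacle is the commutation of $\proc{c_m}$ past a predicate of $\W_2$ that happens to use $c_m$ — i.e.\ the case where $c_m$'s unique client lies strictly between $\proc{c_m}$ and $\proc{a_\s}$. The modal argument above cleanly handles $\proc{a_\s}$ and the purely-linear parts of the forest but does not by itself forbid $c_m$'s client from being inside $\W_2$, so one must either (a) observe that in the intended use (the $\up C$ case of preservation, Theorem~\ref{thm:pres}) the decomposition $\W_1,\W_2,\W_3$ is chosen from a build order witnessing the incomparability of $\proc{c_m}$ and $\proc{a_\s}$, which forces no $c_m$-client into $\W_2$, or (b) strengthen the slide so that $c_m$'s client (and its descendants) travel together with $\proc{c_m}$. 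I would pursue route (a): prove the forest-level incomparability statement, conclude that the underlying multiset admits a build order of the displayed shape, and then note the build context accumulated just before $\W_3$ depends only on the multiset of preceding predicates, so $\W_3$ retypes unchanged and the energy is preserved.
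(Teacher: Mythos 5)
Your key step coincides with the paper's entire proof: by Definition~\ref{def:proc_typ}(2) and the $\D \plin$ premise of $\up R$, the accepting process can only depend on channels at modes $\s$ and $\p$, while $c_m$ is at mode $\l$ or $\c$, so the acceptor cannot use $c_m$ and the two predicates can be interchanged. The paper's proof consists of exactly this two-sentence mode argument — there is no predicate-by-predicate induction over $\W_2$, no bookkeeping of the energy $E$, and, importantly, no treatment of the obstacle you flag, namely a client of $c_m$ lying inside $\W_2$. Unlike Lemmas~\ref{lem:perm-msg} and~\ref{lem:perm-proc}, whose statements pin the unique client (or provider) of $c_m$ at the right end of the configuration so that linearity rules $\W_2$ out, the statement of this lemma names no client of $c_m$, so the analogous "only used by" argument is unavailable; if some predicate of $\W_2$ did use $c_m$, the displayed reordering would place that client before its provider and would not type. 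Your proposal is therefore at least as complete as the official argument and more honest about where the generality breaks down.

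Two corrections to your repair, though. The lemma is invoked in the progress theorem (case $\up L$ of Theorem~\ref{thm:proc_prog}), not in the $\up C$ case of preservation, which proceeds by direct inversion. And in that use the acquiring process is the rightmost predicate of the entire configuration, so there is nothing — in particular no client of $c_m$ — to its right at all; equivalently, the permutation actually needed there is to slide the shared \emph{acceptor} rightward, which is unconstrained: its offered channel $a_\s$ is reached by clients through $\G_0$ rather than through the positional linear interface, so it is subject to no provider-before-client ordering, and its own purely linear (mode-$\p$) uses remain to its left. Either observation discharges the application without your forest-incomparability development, whereas your route (a) as written proves only a restricted form of the stated lemma and anchors it to the wrong theorem.
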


\begin{proof}
Due to independence, we know that $\proc{a_\s}{w,
\eaccept{a_\l}{a_\s} \semi P}$ can only depend on any channels
at mode $\s$ or $\p$. On the other hand, $m$ can only be $\c$
or $\l$. In particular, the shared process cannot depend on channel
$c_m$, thus the acquiring process can be moved to the right of the
shared process.
\end{proof}

\begin{lemma}[Permutation-Release]\label{lem:perm-release}
Consider a well-typed configuration typed by the judgment
$\G_0 \potconf{E} \W_1, \proc{c_m}{w', \erelease{a_\s}{a_\l} \semi Q},
\W_2, \proc{a_\l}{w, \edetach{a_\s}{a_\l} \semi P}, \W_3 :: (\G \semi \D)$.
Then, the releasing process can be moved right such that
the configuration $\G_0 \potconf{E} \W_1, \W_2,
\proc{a_\l}{w, \edetach{a_\s}{a_\l} \semi P},$
$\proc{c_m}{w', \erelease{a_\s}{a_\l} \semi Q}, \W_3 ::
(\G \semi \D)$ is well-typed.
\end{lemma}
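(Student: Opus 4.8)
The plan is to proceed exactly as in the proof of Lemma~\ref{lem:perm-acquire}, using the mode discipline to establish that the detaching process, and everything lying strictly between it and the releasing process, are independent of the releasing process. First I would note that $\proc{a_\l}{w, \edetach{a_\s}{a_\l} \semi P}$ is a contract in its linear phase that is about to return to its shared phase: by the $\down R$ rule its continuation offers along $a_\s$ at mode $\s$ and its linear context satisfies $\D \plin$. Combined with the invariants of Lemma~\ref{lem:invariants}, this means the detaching process can depend only on channels at mode $\p$ together with shared channels, and shared channels reside in $\G_0$ and therefore impose no ordering constraint. Dually, by the side condition on the $\down L$ rule the releasing process $\proc{c_m}{w', \erelease{a_\s}{a_\l} \semi Q}$ offers along $c_m$ with $m \in \{\l, \c\}$; since $m \neq \p$, the detaching process cannot mention $c_m$, and as $c_m$ is linear its unique client does not lie on the path being reordered (by the same reasoning used in Lemmas~\ref{lem:perm-msg} and~\ref{lem:perm-proc}).

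With this independence in hand, the second step is to rebuild the configuration typing derivation in the permuted order. I would invert the configuration typing down to the introduction of the releasing process (via $\m{proc}_\l$ or $\m{proc}_\c$), strip off the intervening $\W_2$ and the detaching process (the latter introduced by $\m{proc}_\l$, which also records its shared counterpart $a_\s$ in $\G_0$ together with the equi-synchronizing obligation), and then re-apply those rules in the new order: $\W_2$ and $\proc{a_\l}{w, \edetach{a_\s}{a_\l} \semi P}$ on top of $\W_1$, then $\proc{c_m}{w', \erelease{a_\s}{a_\l} \semi Q}$. The energy annotation $E$ and the shared context $\G_0$ are unchanged because a permutation only reorders the derivation over a fixed multiset, and no cyclic dependency is introduced precisely because neither the detaching process nor the reordered part of $\W_2$ refers to $c_m$.

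The delicate point — the step I expect to cost the most care — is the bookkeeping around the fact that the detaching process is a contract that simultaneously appears in the linear interface as $a_\l : \down A_\s$ and in $\G_0$ as $a_\s : A_\s$ via $\m{proc}_\l$: I must check that after the reordering $a_\l$ still flows out of that process's linear interface into the releasing process's linear context, and that the equi-synchronizing side conditions attached to $a_\s$ in the $\m{proc}_\l$ premise remain satisfied. Once the mode-independence observation above is established, this bookkeeping is routine, mirroring the companion permutation lemmas and the proof of Lemma~\ref{lem:perm-acquire}.
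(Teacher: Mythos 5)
Your proposal is essentially the paper's own proof: the paper likewise argues that, by the $\down R$ discipline ($\D \plin$) and the typing invariants, the detaching process can only depend on channels at modes $\s$ and $\p$, while the releasing process offers at mode $m \in \{\l, \c\}$, so the detaching process cannot depend on $c_m$ and the releasing process may be permuted past it. Your extra bookkeeping (re-deriving the configuration typing, noting that $E$ and $\G_0$ are unchanged, and the remark about the unique client of $c_m$ and the equi-synchronizing side conditions) elaborates but does not alter this mode-based independence argument.
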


\begin{proof}
Due to independence, we know that $\proc{a_\l}{w,
\edetach{a_\s}{a_\l} \semi P}$ can only depend on any channels
at mode $\s$ or $\p$. On the other hand, $m$ can only be $\c$
or $\l$. In particular, the shared process cannot depend on channel
$c_m$, thus the releasing process can be moved to the right of the
detaching process.
\end{proof}

\begin{lemma}[Shared-Substitution]\label{lem:subst}
If the process $\G, (b_\s : B_\s), (x_\s : B_\s) \semi \D
\entailpot{q} P_{x_\s} :: (z_m : C)$ is well-typed, then
$\G, (b_\s : B_\s) \semi \D \entailpot{q} P_{b_\s} :: (z_m : C)$
is also well-typed.
\end{lemma}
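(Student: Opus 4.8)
The statement is exactly a \emph{contraction} principle for the shared context, realized by the syntactic renaming $[b_\s/x_\s]$; since shared propositions admit weakening, contraction and exchange, this should be admissible. The plan is to induct on the derivation $\dc$ of $\Psi \semi \G, (b_\s : B_\s), (x_\s : B_\s) \semi \D \entailpot{q} P_{x_\s} :: (z_m : C)$, pushing the substitution through the last rule and appealing to the induction hypothesis on the (smaller) derivations of the premises. For the monadic introduction rules I would fold the process derivation hidden inside the monadic value into the induction measure, so that the IH is also available to that sub-derivation. Lemma~\ref{lem:renaming} is used freely to $\alpha$-rename bound channels so they stay disjoint from $b_\s$ and $x_\s$, and Lemma~\ref{lem:proc-weakening} (and its inverse for the shared layer) records the structural facts about $\G$ that I rely on.

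For the bulk of the rules — the propositional connectives ($\oplus R/L$, $\with R/L$, $\tensor R/L$, $\lolli R/L$, $\one R/L$), value exchange ($\arrow R/L$, $\product R/L$), the potential rules ($\m{work}$, $\paypot R/L$, $\getpot R/L$), forwarding, and the bind rule $\{\}E_{\p m}$ — the shared context is passed unchanged to every premise and is never split; only $\D$ and the functional potential are split, by operations ($\D,\D'$ and $\Psi \share (\Psi_1,\Psi_2)$) that do not mention shared-channel names. Hence the substitution commutes with the rule: apply $[b_\s/x_\s]$ to the conclusion and premises, invoke the IH on each process premise, and re-apply the same rule; the side conditions ($q=0$, $\D \plin$, $r = p+q$, $k \in L$, $(A_\s,A_\s)\esync$, etc.) are insensitive to shared names. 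For $\{\}I_\s$, $\{\}I_\c$, $\{\}E_{\s m}$, $\{\}E_{\c m}$ the shared context additionally reappears as the $\overline{A_\s}$ component of the monadic type and is handed to the spawned process; here the substitution may produce a second copy of $b_\s$ inside the monadic body, which is exactly why the body derivation must be part of the induction measure. The rules $\up R$ and $\down R$ only change the \emph{mode of the offered channel} while carrying $\G$ along verbatim, and $\down L$ extends $\G$ on the right with a fresh shared channel (kept distinct from $b_\s,x_\s$ by $\alpha$-renaming), so all of these reduce to the same commute-then-IH pattern.

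The case that needs genuine care is $\up L$, and specifically the sub-case where the channel being acquired is the one being contracted, i.e. $P_{x_\s} = \eacquire{w_\l}{x_\s} \semi Q$ (symmetrically, $P_{x_\s}$ acquiring $b_\s$). The rule \emph{consumes} the acquired shared hypothesis in its premise, so $Q$ is typed under $\G, (b_\s : B_\s) \semi \D, (w_\l : A_\l)$, and after applying $[b_\s/x_\s]$ we want $\G, (b_\s : B_\s) \semi \D \entailpot{q} \eacquire{w_\l}{b_\s} \semi Q$; that requires recognizing the continuation's shared context as already being in the shape demanded by a single application of $\up L$ on $b_\s$ (after $\alpha$-renaming $w_\l$ to stay fresh, and using that $Q$ cannot mention the now-absent $x_\s$, so the substitution on $Q$ is vacuous there). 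I expect this to be the main obstacle: it is the one place where the two duplicated hypotheses cannot both persist into the continuation, so the argument has to exploit precisely how the acquire rule threads the shared channel out and back and lean on the structural behavior of shared contexts rather than a purely mechanical commute. All remaining base cases ($\m{fwd}$, $\one R$) involve no shared channel beyond the carried-along $\G$ and are immediate.
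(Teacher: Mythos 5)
Your proposal is correct and follows essentially the same route as the paper: induction on the typing derivation, commuting the renaming $[b_\s/x_\s]$ through all rules that carry $\G$ along unchanged, and handling $\up L$ (acquire of the contracted channel) by observing that the continuation can no longer mention $x_\s$, so the substitution is vacuous and the rule can be re-applied on $b_\s$. The only divergence is your worry about recursing into monadic bodies in the cut rules; the paper avoids this because the spawned body is typed under bound channel variables, so the substitution only touches the list of shared channel arguments, which the $\supseteq$ side condition absorbs directly.
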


\begin{proof}
We apply induction on the process typing judgment.
\begin{itemize}
\item Case ($\{\}E_{\c\c}$) : 
\[
\inferrule*[right = $\{\}E_{\c\c}$]
{r = p+q \qquad
\G, (b_\s : B_\s), (x_\s : B_\s) \supseteq \overline{a_\s : A} \qquad
\D = \overline{d : D} \\\\
\Psi \exppot{p} M : \tproc{A}{\overline{A} \semi \overline{D}}_\c
\qquad \Psi \semi \G, (b_\s : B_\s), (x_\s : B_\s) \semi \D', (y_\c : A)
\entailpot{q} Q_{x_\s} :: (z_\c : C)}
{\Psi \semi \G, (b_\s : B_\s), (x_\s : B_\s) \semi \D, \D' \entailpot{r}
\ecut{y_\c}{M}{a_\s \semi d}{Q_{x_\s}} :: (z_\c : C)}
\]
By the induction hypothesis,
$\Psi \semi \G, (b_\s : B_\s) \semi \D', (y_\c : A)
\entailpot{q} Q_{b_\s} :: (z_\c : C)$. We simply substitute $b_\s$
for $x_\s$ in $\overline{a_\s : A}$. Hence, $\G, (b_\s : B_\s)
\supseteq [b_\s/x_\s]\overline{a_\s : A}$. Applying the $\{\}E_{\c\c}$
rule back
\[
\inferrule*[right = $\{\}E_{\c\c}$]
{r = p+q \qquad
\G, (b_\s : B_\s) \supseteq [b_\s/x_\s]\overline{a_\s : A} \qquad
\D = \overline{d : D} \\\\
\Psi \exppot{p} M : \tproc{A}{\overline{A} \semi \overline{D}}_\c
\qquad \Psi \semi \G, (b_\s : B_\s) \semi \D', (y_\c : A)
\entailpot{q} Q_{b_\s} :: (z_\c : C)}
{\Psi \semi \G, (b_\s : B_\s) \semi \D, \D' \entailpot{r}
\ecut{y_\c}{M}{[b_\s/x_\s]a_\s \semi d}{Q_{x_\s}} :: (z_\c : C)}
\]

\item Case ($\m{fwd}$) : 
\[
\Psi \semi \G, (b_\s : B_\s), (x_\s : B_\s) \semi (y_k : A)
\entailpot{q} \fwd{z_m}{y_k} :: (z_m : A)
\]
Here, the lemma holds trivially since $x_\s$ doesn't occur
in $P_{x_\s}$. Therefore, $P_{x_\s} = P_{b_\s}$ and
\[
\Psi \semi \G, (b_\s : B_\s) \semi (y_k : A)
\entailpot{q} \fwd{z_m}{y_k} :: (z_m : A)
\]

\item Case ($\lolli_n R$) : 
\[
\infer[\lolli_n R]
{\Psi \semi \G, (b_\s : B_\s), (x_\s : B_\s) \semi \D \entailpot{q}
\erecvch{z_m}{y_n} \semi P_{x_\s} :: (z_m : A \lolli_n B)}
{\Psi \semi \G, (b_\s : B_\s), (x_\s : B_\s) \semi \D, (y_n : A)
\entailpot{q} P_{x_\s} :: (z_m : B)}
\]
By the induction hypothesis,
$\Psi \semi \G, (b_\s : B_\s) \semi \D, (y_n : A)
\entailpot{q} P_{b_\s} :: (z_m : B)$. Applying the $\lolli R$ rule,
\[
\infer[\lolli_n R]
{\Psi \semi \G, (b_\s : B_\s) \semi \D \entailpot{q}
\erecvch{z_m}{y_n} \semi P_{b_\s} :: (z_m : A \lolli_n B)}
{\Psi \semi \G, (b_\s : B_\s) \semi \D, (y_n : A)
\entailpot{q} P_{b_\s} :: (z_m : B)}
\]

\item Case ($\lolli_n L$) : 
\[
\infer[\lolli L]
{\Psi \semi \G, (b_\s : B_\s), (x_\s : B_\s) \semi
\D, (w_n : A), (y_k : A \lolli B)
\entailpot{q} \esendch{y_k}{w_n} \semi Q_{x_\s} :: (z_m : C)}
{\Psi \semi \G, (b_\s : B_\s), (x_\s : B_\s) \semi
\D, (y_k : B) \entailpot{q} Q_{x_\s} :: (z_m : C)}
\]
By the induction hypothesis,
$\Psi \semi \G, (b_\s : B_\s) \semi
\D, (y_k : B) \entailpot{q} Q_{b_\s} :: (z_m : C)$.
Applying the $\lolli_n L$ rule,
\[
\infer[\lolli_n L]
{\Psi \semi \G, (b_\s : B_\s) \semi
\D, (w_n : A), (y_k : A \lolli_n B)
\entailpot{q} \esendch{y_k}{w_n} \semi Q_{b_\s} :: (z_m : C)}
{\Psi \semi \G, (b_\s : B_\s) \semi
\D, (y_k : B) \entailpot{q} Q_{b_\s} :: (z_m : C)}
\]

\item Case ($\up L$) :
\[
\infer[\up L]
{\Psi \semi \G, (b_\s : \up A_\l), (x_\s : \up A_\l) \semi \D
\entailpot{q} \eacquire{x_\l}{x_\s} \semi Q :: (z_m : C)}
{\Psi \semi \G, (b_\s : \up A_\l) \semi \D, (x_\l : A_\l)
\entailpot{q} Q :: (z_m : C)}
\]
The lemma holds trivially since $x_\s$ doesn't occur in $Q$.
Hence, $[b_\s/x_\s]Q = Q$. Applying the $\up L$ rule,
\[
\infer[\up L]
{\Psi \semi \G, (b_\s : \up A_\l) \semi \D
\entailpot{q} \eacquire{x_\l}{b_\s} \semi Q :: (z_m : C)}
{\Psi \semi \G, (b_\s : \up A_\l) \semi \D, (x_\l : A_\l)
\entailpot{q} Q :: (z_m : C)}
\]

\item Case ($\down L$) :
\[
\infer[\down L]
{\Psi \semi \G, (b_\s : B_\s), (x_\s : B_\s) \semi \D,
(y_\l : \down A_\s) \entailpot{q} \erelease{y_\s}{y_\l}
\semi Q_{x_\s} :: (z_m : C)}
{\Psi \semi \G, (b_\s : B_\s), (x_\s : B_\s), (y_\s : A_\s) \semi \D
\entailpot{q} Q_{x_\s} :: (z_m : C)}
\]
By the induction hypothesis,
$\Psi \semi \G, (b_\s : A_\s), (y_\s : A_\s) \semi \D
\entailpot{q} Q_{b_\s} :: (z_m : C)$. Applying the $\down L$ rule,
\[
\infer[\down L]
{\Psi \semi \G, (b_\s : B_\s) \semi \D,
(y_\l : \down A_\s) \entailpot{q} \erelease{y_\s}{y_\l}
\semi Q_{b_\s} :: (z_m : C)}
{\Psi \semi \G, (b_\s : A_\s), (y_\s : A_\s) \semi \D
\entailpot{q} Q_{b_\s} :: (z_m : C)}
\]
\end{itemize}
\end{proof}

\begin{lemma}[Variable Substitution]\label{lem:var_subst}
To substitute value for a variable from the functional context,
we need the following two lemmas.
\begin{itemize}
\item If $V \val$ and $\cdot \exppot{p} V : \tau$ and
$\Psi, (x : \tau) \exppot{q} M : \sigma$, then
$\Psi \exppot{p+q} [V/x]M : \sigma$.

\item If $V \val$ and $\cdot \exppot{p} V : \tau$ and
$\Psi, (x : \tau) \semi \G \semi \D \exppot{q} P :: (c : A)$, then
$\Psi \semi \G \semi \D \exppot{p+q} [V/x]P :: (c : A)$
\end{itemize}
\end{lemma}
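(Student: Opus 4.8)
The plan is to prove the two statements simultaneously by mutual induction on the typing derivations $\mathcal{D} :: \Psi, (x:\tau) \exppot{q} M : \sigma$ and $\mathcal{E} :: \Psi, (x:\tau) \semi \G \semi \D \entailpot{q} P :: (c:A)$, with a case analysis on the last rule applied. The two parts are genuinely mutually recursive because the monadic value constructor $\eproc{c}{P}{\overline d}$ embeds a process into an expression (rule $\{\}I$), while the value-communication rules embed an expression into a process. Since $V$ is closed ($\cdot \exppot{p} V : \tau$), there is no variable capture: $[V/x]M$ and $[V/x]P$ are defined structurally, renaming bound and pattern variables away from $x$, and $[V/x]$ commutes with every term former. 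Throughout we use a routine functional-context weakening lemma (if $\cdot \exppot{p} V : \tau$ then $\Psi \exppot{p} V : \tau$, obtained by ignoring the nonnegative potential stored in $\Psi$).

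The only delicate point is the potential accounting in the presence of the sharing judgment $\Psi \share (\Psi_1,\Psi_2)$, which appears in $\{\}E$, $\arrow L$, $\product R$, and their analogues. When the last rule splits $\Psi, (x:\tau)$, the variable $x$ is duplicated into both $\Psi_1$ and $\Psi_2$ at types $\tau_1, \tau_2$ with $\tau \share (\tau_1,\tau_2)$. To substitute in both premises I first establish a \emph{value-sharing lemma}: if $\cdot \exppot{p} V : \tau$ and $\tau \share (\tau_1,\tau_2)$, then there are $p_1,p_2$ with $p_1+p_2 = p$ and $\cdot \exppot{p_1} V : \tau_1$ and $\cdot \exppot{p_2} V : \tau_2$; this is standard for AARA and is proved by induction on $V$ (e.g.\ for a list value, distribute the per-element potential between the two copies). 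Given this, in a split rule with $r = p' + q'$, premises $\Psi_1, (x:\tau_1) \exppot{p'} M_0 : \tau'$ and $\Psi_2,(x:\tau_2)\semi \G \semi \D' \entailpot{q'} Q :: (c:A)$, I apply part~1 of the induction hypothesis to $M_0$ using $V:\tau_1$ at potential $p_1$ and part~2 to $Q$ using $V:\tau_2$ at potential $p_2$, obtaining premises at potentials $p_1 + p'$ and $p_2 + q'$; reassembling with the same rule yields a derivation at potential $(p_1+p') + (p_2+q') = p + r$, as required.

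The remaining cases are mechanical. For $M = x$, $[V/x]x = V$; the variable rule gives $\Psi,(x:\tau)\exppot{q} x:\tau$ with nothing consumed on $x$, so weakening $\cdot \exppot{p} V:\tau$ to $\Psi$ and adding $q$ gives $\Psi \exppot{p+q} V:\tau$. For $M = y$ with $y\neq x$, and for all term formers that do not touch $x$ or a context split (abstraction, single-subterm application, injections, $\mathsf{case}$, $\mathsf{match}$, with pattern variables renamed away from $x$), the induction hypothesis applies componentwise. The monadic value constructor hands $\Psi,(x:\tau)$ unchanged to the process judgment, so part~1 appeals to part~2 there. Symmetrically, in part~2: $\arrow R$ and $\product L$ add a fresh functional variable $y\neq x$ and the substitution commutes past it; $\arrow L$ and $\product R$ are the split rules treated above; $\ecut$ (monadic bind) is a split rule invoking part~1 on the monadic term and part~2 on the continuation; and every purely session-typed rule ($\oplus$, $\with$, $\tensor$, $\lolli$, $\one$, forwarding, $\m{work}$, $\paypot$, $\getpot$, acquire/release) leaves $\Psi$ either untouched or passed unchanged, so the induction hypothesis applies directly. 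The main obstacle is getting the value-sharing lemma and the $\share$-bookkeeping exactly right, so that the decomposition $p = p_1 + p_2$ matches the decomposition $\tau \share (\tau_1,\tau_2)$ used by the split rule and the potentials recombine to precisely $p + r$; everything else is a routine traversal of the typing rules.
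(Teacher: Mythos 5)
The paper never proves this lemma: it is stated in the appendix without proof, implicitly deferred to the standard AARA literature (the functional layer is adopted wholesale from \cite{Jost03,HoffmannW15}, and this is the usual substitution lemma there). So there is no in-paper argument to compare against; judged on its own, your proposal is essentially the standard proof and is sound. The two load-bearing ingredients you identify are exactly the right ones: (i) the mutual induction between the expression and process judgments, forced by $\{\}I$ on one side and $\arrow L$/$\product R$/the monadic bind on the other, and (ii) the value-sharing lemma $\cdot \exppot{p} V : \tau$ and $\tau \share (\tau_1,\tau_2)$ implies $\cdot \exppot{p_1} V : \tau_1$, $\cdot \exppot{p_2} V : \tau_2$ with $p_1+p_2=p$, which is the closed-value form of the usual AARA sharing lemma ($\Phi(V:\tau)=\Phi(V:\tau_1)+\Phi(V:\tau_2)$) and makes the potential bookkeeping $(p_1+p')+(p_2+q') = p+r$ come out exactly. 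Two small points to make explicit if you write this out in full: in the case $M=x$ you need either a relaxation/subsumption rule or a variable rule with slack to absorb the leftover constant $q$ on top of the weakened typing of $V$, and the weakening and value-sharing lemmas rely on closed values (in particular functions) being typeable with zero constant potential — both are standard features of the AARA systems the paper builds on, but they are assumptions about the unspecified functional fragment rather than consequences of anything shown in the paper.
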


\begin{theorem}[Expression Preservation]\label{thm:exp_pres}
If a well-typed expression $\cdot \exppot{q} N : \tau$ takes a step, i.e.,
$\bigeval{N}{V}{\mu}$, then $V \val$ and $q \geq \mu$ and
$\cdot \exppot{q-\mu} V : \tau$.
\end{theorem}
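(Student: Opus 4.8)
The plan is to prove Theorem~\ref{thm:exp_pres} by induction on the derivation of $\bigeval{N}{V}{\mu}$, performing inversion on the typing derivation $\cdot \exppot{q} N : \tau$ at each step. This is the standard soundness argument for automatic amortized resource analysis~\cite{HoffmannW15, Jost03}: the invariant being preserved is the potential method, namely that the potential $q$ available before evaluating $N$ suffices to pay the evaluation cost $\mu$ and to re-establish the potential that the return type $\tau$ assigns to $V$. Since the functional term formers in Nomos are the standard ones (function abstraction/application, sum and product introduction/elimination, booleans, integers, lists, and the monadic value constructors), it suffices to treat the representative cases; the remaining ones are identical to those in RaML.

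For the base cases, a value $V$ evaluates to itself with $\mu = 0$, so $q \geq \mu$ holds trivially and $q - \mu = q$, giving the claim directly; this covers integers, booleans, lambda-abstractions, pairs of values, and the monadic value constructors $\eproc{c_\p}{P}{\overline{d_\p}}$ (and its shared and transaction analogues), whose well-typedness is exactly the unchanged process-typing premise of the $\{\}I$ rules. The only case that changes the potential locally is the cost-incurring construct (the $\mi{tick}$ annotation): inversion gives $q = p + r$ where $r$ is the charged cost and the continuation evaluates with $p$ units, and the induction hypothesis plus a short arithmetic step close it.

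The inductive cases all follow the same pattern: thread the potential through the subevaluations using the sharing judgment $\Psi \share (\Psi_1, \Psi_2)$ and the Variable Substitution lemma (Lemma~\ref{lem:var_subst}). Concretely, for a composite term such as $\m{let}\ x = M\ \m{in}\ N$ (and likewise for application, $\m{case}$, and $\m{match}$), inversion yields $q = p + q'$ with $\cdot \exppot{p} M : \sigma$ and $x : \sigma \exppot{q'} N : \tau$, while evaluation gives $\bigeval{M}{V_1}{\mu_1}$ and $\bigeval{[V_1/x]N}{V}{\mu_2}$ with $\mu = \mu_1 + \mu_2$. The induction hypothesis on the first premise gives $p \geq \mu_1$ and $\cdot \exppot{p - \mu_1} V_1 : \sigma$; Lemma~\ref{lem:var_subst} then yields $\cdot \exppot{(p - \mu_1) + q'} [V_1/x]N : \tau$, and the induction hypothesis on the second premise gives $(p - \mu_1) + q' \geq \mu_2$ together with $\cdot \exppot{(p - \mu_1) + q' - \mu_2} V : \tau$. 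Hence $q = p + q' \geq \mu_1 + \mu_2 = \mu$, and the residual potential is $q - \mu$, as required. The list cases are where the annotations do real work: evaluating $M :: N$ against type $\plist{\tau}{k}$ requires the extra $k$ units attached to the head, while $\m{match}$ on a non-empty list of type $\plist{\tau}{k}$ releases exactly those $k$ units into the context used to type the $\m{cons}$-branch — precisely what makes the bound for functions like $\mi{applyInterest}$ tight.

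The main obstacle, such as it is, is the potential bookkeeping rather than anything conceptual: one must be careful that the \emph{residual} potential left over after a subevaluation is exactly the amount available to the next subevaluation (so that both the inequality $q \geq \mu$ and the residual typing $\cdot \exppot{q - \mu} V : \tau$ come out), and that the sharing judgment correctly distributes potential across premises — here this is light, since the functional context is closed and $\cdot \share (\cdot, \cdot)$. For the multivariate polynomial potential functions over list lengths the additivity reasoning is more delicate, but it is handled exactly as in prior AARA soundness proofs~\cite{HoffmannW15}, so Theorem~\ref{thm:exp_pres} is ultimately a transcription of that argument into Nomos' syntax.
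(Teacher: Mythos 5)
Your proposal is correct and follows exactly the route the paper itself takes: the paper gives no separate proof of Theorem~\ref{thm:exp_pres}, stating only that ``the preservation theorem is standard for expressions'' and deferring to the AARA soundness argument of \cite{Jost03,HoffmannW15}, which is precisely the induction on the evaluation derivation with inversion on typing, potential sharing, and the substitution lemma (Lemma~\ref{lem:var_subst}) that you sketch. Your case analysis (values, tick, sequential composition, and the list constructors carrying the annotated potential) matches that standard argument, so nothing further is needed.
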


\begin{theorem}[Process Preservation]\label{thm:proc_pres}
Consider a closed well-formed and well-typed configuration $\W$
such that $\G_0 \potconf{E} \W :: (\G \semi \D)$. If the
configuration takes a step, i.e. $\W \step \W'$, then there exist
$\G_0', \G'$ such that $\G_0' \potconf{E} \W' :: (\G' \semi \D)$, i.e., the
resulting configuration is well-typed.
\end{theorem}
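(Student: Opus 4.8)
The plan is to prove Theorem~\ref{thm:proc_pres} (together with the auxiliary Theorem~\ref{thm:exp_pres}, the standard AARA expression-preservation statement) by case analysis on the rule of the operational cost semantics (Appendix~\ref{sec:semantics}) that justifies $\W \step \W'$. Every such rule names one or two distinguished semantic objects --- a process $\proc{c_m}{w,P}$ and possibly a message $\msg{c_m}{w,N}$ --- that take part in the step, leaving the remainder of the configuration, say $\W_0$, untouched. The uniform first move in each case is to apply the permutation lemmas (Lemma~\ref{lem:perm-msg}, Lemma~\ref{lem:perm-proc}, Lemma~\ref{lem:perm-acquire}, Lemma~\ref{lem:perm-release}) to bring the participating objects to the right end of the subconfiguration that provides them, so that the derivation of $\G_0 \potconf{E} \W :: (\G \semi \D)$ can be inverted through $\m{proc}_m$ / $\m{msg}$ to peel off their individual typings $\cdot \semi \G_0 \semi \D_i \entailpot{q_i} (\cdot)$ together with a typing $\G_0 \potconf{E_0} \W_0 :: (\G \semi \D')$ of the remainder, where $E = E_0 + \sum_i (q_i + w_i)$.

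Having isolated these typings, in each case I invert the corresponding process typing rule (Appendix~\ref{sec:proc-typing}) to expose the continuation and its typing, then reassemble a configuration-typing derivation for $\W'$ carrying the same energy $E$. For the pure connectives ($\oplus, \with, \tensor, \lolli, \one, \product, \arrow, \paypot, \getpot$): a sender step replaces the process by a continuation on a fresh channel $c^+_m$ plus a message holding the forwarding redex $\fwd{c_m}{c^+_m}$, which I type by a trivial derivation (one send rule above $\m{fwd}$, the latter needing potential $0$) after invoking Lemma~\ref{lem:renaming} for the freshness of $c^+_m$; a receiver step fuses that message with the process, where I match the message typing against the premise of the receiver rule, again use Lemma~\ref{lem:renaming}, and for value receives apply Lemma~\ref{lem:var_subst}. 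The $\m{internal}$ rule additionally uses Theorem~\ref{thm:exp_pres}, which moves $\mu$ units out of the functional potential $p$ (where $q = p+q'$) while the work counter gains $\mu$, so $q + w = (q-\mu) + (w+\mu)$ is preserved. For the bind/spawn rules $\{\}E$ I invert the matching $\{\}I$ rule to recover the body typing, allocate a globally fresh offered channel, and install two fresh $\m{proc}$ nodes (the child with $w = 0$, the parent continuing); the $r = p+q$ split in the rule, together with the zero work of the new child, keeps $E$ fixed. When the spawned process is at mode $\s$ this is where $\G_0$ must grow: I extend it with the fresh shared channel (Lemma~\ref{lem:conf-weakening}, Lemma~\ref{lem:proc-weakening}) and plant it into the child body via Lemma~\ref{lem:subst}, which yields $\G_0 \subseteq \G_0'$; in all other cases $\G_0' = \G_0$. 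Forwarding steps $\m{fwd}^{\pm}$ are dispatched by Lemma~\ref{lem:renaming} and the $q = 0$ side condition of the $\m{fwd}$ rule. Throughout, Lemma~\ref{lem:invariants} guarantees that the mode-indexed well-formedness conditions of Definition~\ref{def:proc_typ} survive every reconstruction.

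The main obstacle is the acquire--accept ($\up C$) and release--detach ($\down C$) cases, which are the only steps that alter $\G_0$ and $\G$ and the only ones that invoke the equi-synchronizing condition. Here the participants are a contract process offered at mode $\s$ and a client at mode $m \in \{\l,\c\}$ that may sit arbitrarily far apart in $\W$; after juxtaposing them with Lemma~\ref{lem:perm-acquire} (resp.\ Lemma~\ref{lem:perm-release}) I invert $\up R$/$\up L$ (resp.\ $\down R$/$\down L$) alongside $\m{proc}_\s$ and $\m{proc}_m$. For accept I must delete $(a_\s : A_\s)$ from $\G$ and record in $\G_0'$ the pairing of $a_\s$ with the freshly chosen linear channel $a_\l$, re-typing the contract body at $(a_\l : A_\l)$ through $\m{proc}_\l$ and discharging its side condition $(A_\l, A_\s) \esync$ --- which holds because the $\esync$ derivation of $(\up A_\l, \up A_\l)$ stored with $\G_0$ has $(A_\l, \up A_\l) \esync$ as its premise --- while also honoring the $\D \plin$ restriction that $\up R$ imposes on contracts (this matches the premise of $\m{proc}_\s$, so the detached $\m{proc}_\l$ node inherits a purely linear, hence admissible, context). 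Detach runs this in reverse, returning $(a_\s : A_\s)$ to $\G'$; the equi-synchronizing invariant is precisely what guarantees the contract is released back at the same $A_\s$ at which it was acquired, so later clients still locate it at the type stored in $\G_0$, giving $\G \subseteq \G'$. Threading the $\G_0$/$\G$ bookkeeping, the $\esync$ side conditions, and the $\plin$ constraint together is the delicate part of the argument; once these are set up, the remaining cases go through mechanically from the permutation, renaming, weakening, and substitution lemmas.
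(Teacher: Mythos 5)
Your proposal is correct and follows essentially the same route as the paper's proof: case analysis on the cost-semantics rules, inversion of the configuration typing ($\m{proc}_m$/$\m{msg}$) and of the corresponding process-typing rule, and reassembly via the renaming, weakening, and shared-substitution lemmas, with the energy $E$ conserved, $\G_0$ growing only at shared spawns, and the $\esync$ and $\plin$ side conditions discharged exactly as you describe in the acquire/release cases (the paper reserves the permutation lemmas for the progress proof and simply writes the interacting objects adjacently, so your explicit use of them is if anything more careful). One small bookkeeping discrepancy: in the $\up C$ case the paper does not delete $(a_\s : A_\s)$ from $\G$ nor record an $a_\s$/$a_\l$ pairing in $\G_0$ --- the $\m{proc}_\l$ configuration rule keeps $(a_\s : A_\s)$ in $\G$ (with $(a_\s : A_\s) \in \G_0$ unchanged) while adding $(a_\l : A_\l)$ to $\D$, which is what makes $\G \subseteq \G'$ hold.
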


\begin{proof}
We case analyze on the semantics.

\begin{itemize}
\item Case ($\m{internal}$) : $\W = \dc, \proc{c_m}{w, P[N]}$
and $\W' = \dc, \proc{c_m}{w + \mu, P[V]}$. We case analyze
on $P[N]$.

\begin{itemize}

\item Case ($\arrow \m{send}$) :  $P[N] = \esendch{d_k}{N} \semi P$ and
$P[V] = \esendch{d_k}{V} \semi P$, where $\bigeval{N}{V}{\mu}$.
Suppose, $\G_0 \potconf{E+r+w} \dc, \proc{c_m}{w, \esendch{d_k}{N}
\semi P} :: (\G \semi \D, (c_m : C))$. Inverting the $\m{proc}_m$ rule,
\[
\inferrule*[right = $\m{proc}_m$]
{\G_0 \potconf{E} \dc :: (\G \semi \D_1, (d_k : \tau \arrow A), \D) \\
\infer[\arrow L]
{\cdot \semi \G_0 \semi \D_1, (d_k : \tau \arrow A) \entailpot{r}
\esendch{d_k}{N} \semi P :: (c_m : C)}
{r = p+q \quad
\cdot \exppot{p} N : \tau \quad
\cdot \semi \G_0 \semi \D, (d_k : A) \entailpot{q} P :: (c_m : C)}}
{\G_0 \potconf{E+r+w} \dc,\proc{c_m}{w, \esendch{d_k}{N} \semi P}
:: (\G \semi \D, (c_m : C))}
\]
By Theorem~\ref{thm:exp_pres}, we get that
$\cdot \exppot{p - \mu} V : \tau$.
Finally, we apply the same derivation again to get
\[
\inferrule*[right = $\m{proc}_m$]
{\G_0 \potconf{E} \dc :: (\G \semi \D_1, (d_k : \tau \arrow A), \D) \\
\inferrule*[right = $\arrow L$]
{r' = p-\mu+q \\
\cdot \exppot{p-\mu} V : \tau \\\\
\cdot \semi \G_0 \semi \D, (d_k : A) \entailpot{q} P :: (c_m : C)}
{\cdot \semi \G_0 \semi \D_1, (d_k : \tau \arrow A) \entailpot{r'}
\esendch{d_k}{V} \semi P :: (c_m : C)}
}
{\G_0 \potconf{E+r'+w+\mu} \proc{c_m}{w+\mu, \esendch{d_k}{N} \semi P},
\dc :: (\G \semi \D, (c_m : C))}
\]
and the proof succeeds since $r'+w+\mu = p-\mu+q+w+\mu=p+q+w=r+w$.

\item Case ($\product \m{send}$) : Analogous to $\arrow \m{send}$.

\item Case ($E_{\s m}$) : $\W = \dc, \dc, \proc{c_m}{w, \ecut{d_\s}{N}
{\overline{a_\s} \semi \overline{a_\p}}{Q}}$
and $\W' = \dc, \proc{c_m}{w+\mu, \ecut{d_\s}{V}{\overline{a_\s} \semi
\overline{a_\p}}{Q}}$
where $\bigeval{N}{V}{\mu}$. Inverting the $\m{proc_m}$ rule,
\[
\inferrule*[right=$\m{proc}_m$]
{\G_0 \potconf{E} \dc :: (\G \semi \D, \D_1, \D_2) \\
\inferrule*[right = $E_{\s m}$]
{r = p+q \quad
\G_0 \supseteq \overline{a_\s : A} \quad
\D_1 = \overline{a_\p : D} \\\\
\cdot \exppot{p} N : \tproc{A_\s}{\overline{A} \semi \overline{D}}_\s \quad
\cdot \semi \G_0, (d_\s : A_\s) \semi \D_2 \entailpot{q} Q :: (c_m : C)}
{\cdot \semi \G_0 \semi \D_1, \D_2 \entailpot{r}
\ecut{d_\s}{N}{\overline{a_\s} \semi \overline{a_\p}}{Q} :: (c_m : C)}
}
{\G_0 \potconf{E+r+w}
\dc, \proc{c_m}{w, \ecut{d_\s}{N}{\overline{a_\s} \semi \overline{a_\p}}{Q}}
:: (\G \semi \D, c_m : C)}
\]
By Theorem~\ref{thm:exp_pres}, $\cdot \exppot{p-\mu} V :
\tproc{A_\s}{\overline{D}}_\s$. Applying the same derivation back,
\[
\inferrule*[right=$\m{proc}_m$]
{\G_0 \potconf{E} \dc :: (\G \semi \D, \D_1, \D_2) \\
\inferrule*[right = $E_{\s m}$]
{r' = p-\mu+q \quad
\G_0 \supseteq \overline{a_\s : A} \quad
\D_1 = \overline{a_\p : D} \\\\
\cdot \exppot{p-\mu} V : \tproc{A_\s}{\overline{A} \semi
\overline{D}}_\s \quad
\cdot \semi \G_0, (d_\s : A_\s) \semi \D_2 \entailpot{q} Q :: (c_m : C)}
{\cdot \semi \G_0 \semi \D_1, \D_2 \entailpot{r'}
\ecut{d_\s}{V}{\overline{a_\s} \semi \overline{a_\p}}{Q} :: (c_m : C)}
}
{\G_0 \potconf{E+r'+w+\mu}
\dc, \proc{c_m}{w+\mu, \ecut{d_\s}{V}{\overline{a_\s} \semi \overline{a_\p}}
{Q}} :: (\G \semi \D, c_m : C)}
\]
and the proof succeeds since $r'+w+\mu = p-\mu+q+w+\mu =
p+q+w = r+w$.

\item Case ($E_{\p m}, E_{\c\c}$) : Analogous to $E_{\s m}$.
\end{itemize}

\item Case ($\{\}E_{\s\c}$) : $\W = \dc, \proc{d_\c}{w, \ecut{x_\s}{\eproc{x_\s'}
{P_{x_\s',\overline{y},\overline{z}}}{\overline{y} \semi \overline{z}}}
{\overline{a} \semi \overline{b}}{Q}}$ and $\W' = \dc,
\proc{c_\s}{0, P_{c_\s,\overline{a},\overline{b}}},$ 
$\proc{d_\c}{w, [c_\s/x_\s]Q}$.
Inverting the $\m{proc}_\c$ rule,
\[
\inferrule*[right = $\m{proc}_\c$]
{\G_0 \potconf{E} \dc :: (\G \semi \D, \D_1, \D_2) \\
\inferrule*[right = $\{\}E_{\s C}$]
{\infer[\{\}I_\s]
{\cdot \exppot{p} \eproc{x_\s'}
{P_{x_\s',\overline{y},\overline{z}}}{\overline{y} \semi \overline{z}} :
\tproc{A_\s}{\overline{A} \semi \overline{D}}_\s}
{\G_y = \overline{y : A} \qquad
\D_z = \overline{z : D} \qquad \cdot \semi \G_y \semi \D_z \entailpot{p}
P_{x_\s',\overline{y},\overline{z}} :: (x'_\s : A_\s)} \\\\
r=p+q \qquad
\G_0 \supseteq \overline{a : A} \qquad
\D_1 = \overline{b : D} \qquad (A_\s, A_\s) \esync \\\\
\cdot \semi \G_0, (x_\s : A_\s) \semi \D_2 \entailpot{q} Q :: (d_\c : A_\c)}
{\cdot \semi \G_0 \semi \D_1, \D_2 \entailpot{r} \ecut{x_\s}{\eproc{x_\s'}
{P_{x_\s',\overline{y},\overline{z}}}{\overline{y} \semi \overline{z}}}
{\overline{a} \semi \overline{b}}{Q} :: (d_\c : A_\c)}
}
{\G_0 \potconf{E+r+w} \dc, \proc{d_\c}{w, \ecut{x_\s}{\eproc{x_\s'}
{P_{x_\s',\overline{y},\overline{z}}}{\overline{y} \semi \overline{z}}}
{\overline{a} \semi \overline{b}}{Q}} :: (\G \semi
\D, (d_\c : A_\c))}
\]
The premise for $\{\}I_\s$ gives us $\cdot \semi \G_y \semi \D_z \entailpot{p}
P_{x_\s',\overline{y},\overline{z}} :: (x'_\s : A_\s)$, which by Lemma~\ref{lem:renaming},
gives us $\cdot \semi \G_0 \semi \D_1 \entailpot{p}
P_{c_\s,\overline{a},\overline{b}} :: (c_\s : A_\s)$. Then, by Lemma
\ref{lem:proc-weakening}, we get $\cdot \semi \G_0, (c_\s : A_\s) \semi \D_1 \entailpot{p}
P_{c_\s,\overline{a},\overline{b}} :: (c_\s : A_\s)$
Similarly, we get $\cdot \semi \G_0, (c_\s : A_\s) \semi \D_2 \entailpot{q}
[c_\s/x_\s] Q :: (d_\c : A_\c)$.
First, using Lemma~\ref{lem:conf-weakening}, we get $\G_0, (c_\s : A_\s)
\potconf{E} \dc :: (\G \semi \D, \D_1, \D_2)$. Next, apply the
$\m{proc}_\s$ rule,
\[
\infer[\m{proc}_\s]
{\G_0, (c_\s : A_\s) \potconf{E+p+0} \dc, \proc{c_\s}{0,
P_{c_\s,\overline{a},\overline{b}}} ::
(\G, (c_\s : A_\s) \semi \D, \D_2)}
{\G_0, (c_\s : A_\s) \potconf{E} \dc :: (\G \semi \D, \D_1, \D_2) \qquad
\cdot \semi \G_0, (c_\s : A_\s) \semi \D_1 \entailpot{p}
P_{c_\s,\overline{a},\overline{b}} :: (c_\s : A_\s)
}
\]
Call this new configuration $\dc'$. Now, apply the $\m{proc}_\c$ rule.
\[
\infer[\m{proc}_\c]
{\G_0, (c_\s : A_\s) \potconf{E+p+q+w} \dc', \proc{d_\c}{w, [c_\s/x_\s]Q} ::
(\G, (c_\s : A_\s) \semi \D, (d_\c : A_\c))}
{\G_0, (c_\s : A_\s) \potconf{E+p+0} \dc' :: (\G, (c_\s : A_\s) \semi \D, \D_2) \qquad
\cdot \semi \G, (c_\s : A_\s) \semi \D_2 \entailpot{q}
[c_\s/x_\s] Q :: (d_\c : A_\c)}
\]
where $E+p+q+w = E+r+w$ since $r = p+q$. Hence, in this case
$\G_0' = \G_0, (c_\s : A_\s)$ and $\G' = \G, (c_\s : A_\s)$.

\item Case ($\{\}E_{\c\c}$) : $\W = \dc, \proc{d_\c}{w, \ecut{x_\c}{\eproc{x_\c'}
{P_{x_\c',\overline{y}, \overline{z}}}{\overline{y} \semi \overline{z}}}{a_\s \semi d}{Q}}$
and $\W' = \dc, \proc{c_\c}{0, P_{c_\c,\overline{a_\s}, \overline{d}}},$
$\proc{d_\c}{w, [c_\c/x_\c]Q}$. Inverting the $\m{proc}_\c$ rule
\[
\inferrule*[right = $\m{proc}_\c$]
{\G_0 \potconf{E} \dc :: (\G \semi \D, \D_1, \D_2) \\
\inferrule*[right = $\{\}E_{\c\c}$]
{
\inferrule*[right = $\{\}I_\c$]
{\G_y = \overline{y : A} \\
\D_z = \overline{z : D} \\
\cdot \semi \G_y \semi \D_z \entailpot{p} P_{x'_\c, \overline{y},
\overline{z}} :: (x'_\c : A)}
{\cdot \exppot{p} \eproc{x_\c'}
{P_{x_\c',\overline{y}, \overline{z}}}{\overline{y} \semi
\overline{z}} : \tproc{A}{\overline{A} \semi \overline{D}}_\c}
\\\\
r = p+q \\
\G_0 \supseteq \overline{a_\s : A} \\
\D_1 = \overline{d : D} \\
\cdot \semi \G_0 \semi \D_2, (x_\c : A)
\entailpot{q} Q :: (d_\c : C)}
{\cdot \semi \G_0 \semi \D_1, \D_2 \entailpot{r}
\ecut{x_\c}{\eproc{x_\c'}
{P_{x_\c',\overline{y}, \overline{z}}}{\overline{y} \semi
\overline{z}}}{\overline{a_\s} \semi \overline{d}}{Q} :: (d_\c : C)}}
{\G_0 \potconf{E+r+w} \dc, \proc{d_\c}{w, \ecut{x_\c}{\eproc{x_\c'}
{P_{x_\c',\overline{y}, \overline{z}}}
{\overline{y} \semi \overline{z}}}{\overline{a_\s} \semi \overline{d}}{Q}} ::
(\G \semi \D, (d_\c : C))}
\]
We contract all multiple occurrences of the same channel in
$\overline{a_\s : A}$. Let the resulting vector be $\G' = 
\overline{a'_\s : A'}$. We know,
by Lemma~\ref{lem:subst} that $\cdot \semi \G' \semi \D'
\entailpot{p} P_{x'_\c, \overline{a'_\s}, \overline{z}} :: (x'_\c : A)$
is well-typed. Next, by Lemma~\ref{lem:renaming}, we get
$\G' \semi \D_1 \entailpot{p} P_{c_\c, \overline{a'_\s}, \overline{d}}
:: (c_\c : A)$. Finally, we weaken $\G'$ using Lemma
\ref{lem:proc-weakening} to get $\cdot \semi \G_0 \semi \D_1
\entailpot{p} P_{c_\c, \overline{a'_\s}, \overline{d}} :: (c_\c : A)$.
Also, note that since $\overline{a'_\s}$ is a refinement of
$\overline{a_\s}$ by eliminating duplicates, 
$P_{c_\c, \overline{a'_\s}, \overline{d}} = 
P_{c_\c, \overline{a_\s}, \overline{d}}$.
Hence, we apply the $\m{proc}_\c$ rule,
\[
\inferrule*[right = $\m{proc}_\c$]
{\G_0 \potconf{E} \dc :: (\G \semi \D, \D_1, \D_2) \\
\cdot \semi \G_0 \semi \D_1 \entailpot{p}
P_{c_\c, \overline{a_\s}, \overline{d}} :: (c_\c : A)}
{\G_0 \potconf{E+p+0} \dc, \proc{c_\c}{0,
P_{c_\c, \overline{a_\s}, \overline{d}}} :: (\G \semi \D, \D_2,
(c_\c : A))}
\]
Call this new configuration $\dc'$. Applying renaming using
Lemma~\ref{lem:renaming}, we get $\cdot \semi \G_0 \semi
\D_2, (c_\c : A) \entailpot{q} [c_\c/x_\c]Q :: (d_\c : C)$.
Again, applying the $\m{proc}_\c$ rule, we get
\[
\inferrule*[right = $\m{proc}_\c$]
{\G_0 \potconf{E+p+0} \dc' :: (\G \semi \D, \D_2, (c_\c : A)) \\
\cdot \semi \G_0 \semi \D_2, (c_\c : A) \entailpot{q} [c_\c/x_\c]Q ::
(d_\c : C)}
{\G_0 \potconf{E+p+q+w} \dc', \proc{d_\c}{w, [c_\c/x_\c]Q} ::
(\G \semi \D, (d_\c : C))}
\]
where $E+p+q+w = E+r+w$ since $r = p+q$.

\item Case ($\m{fwd}^+$) : $\W = \dc, \msg{d_k}{w', M},
\proc{c_m}{w, \fwd{c_m}{d_k}}$ and \\
$\W' = \msg{c_m}{w + w',
[c_m/d_k]M}$. First, inverting the $\m{msg}$ rule,
\[
\inferrule*[right = $\m{msg}$]
{\G_0 \potconf{E} \dc :: (\W \semi \D, \D_1) \\
\cdot \semi \cdot \semi \D_1 \entailpot{q} M :: (d_k : A)}
{\G_0 \potconf{E+q+w'} \dc, \msg{d_k}{w', M} :: (\G \semi
\D, (d_k : A))}
\]
Call this new configuration $\dc'$. Next, inverting the
$\m{proc}_m$ rule
\[
\inferrule*[right = $\m{proc}_m$]
{\G_0 \potconf{E+q+w'} \dc' :: (\G \semi \D, (d_k : A)) \\
\cdot \semi \G_0 \semi (d_k : A) \entailpot{0} \fwd{c_m}{d_k} :: (c_m : A)}
{\G_0 \potconf{E+q+w'+0+w} \dc', \proc{c_m}{w, \fwd{c_m}{d_k}}
:: (\G \semi \D, (c_m : A))}
\]
Using Lemma~\ref{lem:renaming}, we get $\cdot \semi \cdot \semi
\D_1 \entailpot{q} [c_m/d_k]M :: (c_m : A)$. Applying the
$\m{msg}$ rule,
\[
\inferrule*[right = $\m{msg}$]
{\G_0 \potconf{E} \dc :: (\W \semi \D, \D_1) \\
\cdot \semi \cdot \semi \D_1 \entailpot{q} [c_m/d_k]M ::
(c_m : A)}
{\G_0 \potconf{E+q+w'+w} \dc, \msg{c_m}{w', [c_m/d_k]M} ::
(\G \semi \D, (c_m : A))}
\]

\item Case ($\m{fwd}^-$) : $\W = \dc, \proc{c_m}{w, \fwd{c_m}
{d_k}}, \msg{e_l}{w', M(c_m)}$ and $\W' = \msg{e_l}{w + w',
M(d_k)}$. First, inverting on the $\m{proc}_m$ rule

\[
\inferrule*[right = $\m{proc}_m$]
{\G_0 \potconf{E} \dc :: (\G \semi \D, \D_1, (d_k : A)) \\
\cdot \semi \G_0 \semi (d_k : A) \entailpot{0} \fwd{c_m}{d_k} :: (c_m : A)}
{\G_0 \potconf{E+0+w} \dc, \proc{c_m}{w, \fwd{c_m}{d_k}}
:: (\G \semi \D, \D_1, (c_m : A))}
\]
Call this new configuration $\dc'$. Next, inverting on the $\m{msg}$
rule,
\[
\inferrule*[right = $\m{msg}$]
{\G_0 \potconf{E+w} \dc' :: (\G \semi \D, \D_1, (c_m : A)) \\
\cdot \semi \cdot \semi \D_1, (c_m : A) \entailpot{q} M(c_m)
:: (e_l : C)}
{\G_0 \potconf{E+w+q+w'} \dc', \msg{e_l}{w', M(c_m)} :: (\G \semi
\D, (e_l : C))}
\]
Using Lemma~\ref{lem:renaming}, we get
$\cdot \semi \cdot \semi \D_1, (d_k : A) \entailpot{q} M(d_k) :: (e_l : C)$.
Reapplying the $\m{msg}$ rule,
\[
\inferrule*[right = $\m{msg}$]
{\G_0 \potconf{E} \dc :: (\G \semi \D, \D_1, (d_k : A)) \\
\cdot \semi \cdot \semi \D_1, (d_k : A) \entailpot{q} M(d_k)
:: (e_l : C)}
{\G_0 \potconf{E+q+w+w'} \dc, \msg{e_l}{w', M(d_k)} :: (\G \semi
\D, (e_l : C))}
\]

\item Case ($\oplus C_s$) : $\W = \dc, \proc{c_m}{w, \esendl{c_m}
{\ell} \semi P}$ and \\
$\W' = \dc, \proc{c^+_m}{w, [c^+_m/c_m]P},
\msg{c_m}{0, \esendl{c_m}{\ell} \semi \fwd{c_m}{c^+_m}}$.
First, inverting on the $\m{proc}_m$ rule,
\[
\inferrule*[right = $\m{proc}_m$]
{\G_0 \potconf{E} \dc :: (\G \semi \D, \D_1) \\
\inferrule*[right = $\oplus R$]
{\cdot \semi \G_0 \semi \D_1 \entailpot{q} P :: (c_m : A_{\ell})}
{\cdot \semi \G_0 \semi \D_1 \entailpot{q} \esendl{c_m}{\ell} \semi P 
:: (c_m : \ichoice{l : A_l}_{l \in L})}}
{\G_0 \potconf{E+q+w} \dc, \proc{c_m}{w, \esendl{c_m}{\ell}
\semi P} :: (\G \semi \D, (c_m : \ichoice{l : A_l}_{l \in L}))}
\]
Using Lemma~\ref{lem:renaming}, we get $\cdot \semi \G_0 \semi
\D_1 \entailpot{q} [c^+_m/c_m]P :: (c^+_m : A_{\ell})$.
Now, applying the $\m{proc}_m$ rule,
\[
\inferrule*[right = $\m{proc}_m$]
{\G_0 \potconf{E} \dc :: (\G \semi \D, \D_1) \\
\cdot \semi \G_0 \semi \D_1 \entailpot{q} [c^+_m/c_m]P ::
(c^+_m : A_{\ell})}
{\G_0 \potconf{E+q+w} \dc, \proc{c_m}{w, \esendl{c_m}{\ell}
\semi P} :: (\G \semi \D, (c^+_m : A_{\ell}))}
\]
Next, typing the message
\[
\cdot \semi \cdot \semi (c^+_m : A_{\ell}) \entailpot{0}
\esendl{c_m}{\ell} \semi \fwd{c_m}{c^+_m} :: (c_m : \ichoice{l :
A_l}_{l \in L})
\]
Call this new configuration $\dc'$. Applying the $\m{msg}$ rule next
\[
\inferrule*[right = $\m{msg}$]
{\G_0 \potconf{E+q+w} \dc' :: (\G \semi \D, (c_m : A_{\ell})) \\
\cdot \semi \cdot \semi (c^+_m : A_{\ell}) \entailpot{0}
\esendl{c_m}{\ell} \semi \fwd{c_m}{c^+_m} :: (c_m : \ichoice{l :
A_l}_{l \in L})}
{\G_0 \potconf{E+q+w} \dc', \msg{c_m}{0, \esendl{c_m}{\ell} \semi
\fwd{c_m}{c^+_m}} :: (\G \semi \D, (c_m : \ichoice{l : A_l}_{l \in
L}))}
\]

\item Case ($\oplus C_r$) : $\W = \dc, \msg{c_m}{w, \esendl{c_m}
{\ell} \semi \fwd{c_m}{c^+_m}}, \proc{d_k}{w', \ecase{c_m}{l}
{Q_l}_{l \in L}}$ and $\W' = \dc, \proc{d_k}{w+w',
[c^+_m/c_m]Q_{\ell}}$. First, inverting the $\m{msg}$ rule,
\[
\inferrule*[right = $\m{msg}$]
{\G_0 \potconf{E} \dc :: (\G \semi \D, \D_1, (c^+_m : A_{\ell})) \\
\cdot \semi \cdot \semi (c^+_m : A_{\ell}) \entailpot{0}
\esendl{c_m}{\ell} \semi \fwd{c_m}{c^+_m} :: (c_m : \ichoice{l :
A_l}_{l \in L})}
{\G_0 \potconf{E+0+w} \dc, \msg{c_m}{w, \esendl{c_m}{\ell} \semi
\fwd{c_m}{c^+_m}} :: (\G \semi \D, \D_1, (c_m : \ichoice{l :
A_l}_{l \in L}))}
\]
Call this new configuration $\dc'$. Next, inverting the $\m{proc}_m$
rule,
\[
\inferrule*[right = $\m{proc}_m$]
{\G_0 \potconf{E+0+w} \dc' :: (\G \semi \D, \D_1, (c_m : \ichoice{l :
A_l}_{l \in L})) \\
\inferrule*[right = $\oplus R$]
{\cdot \semi \G_0 \semi \D_1, (c_m : A_l) \entailpot{q}
Q_l :: (d_k : C)}
{\cdot \semi \G_0 \semi \D_1, (c_m : \ichoice{l : A_l}_{l \in L})
\entailpot{q} \ecase{c_m}{l}{Q_l}_{l \in L} :: (d_k : C)}}
{\G_0 \potconf{E+0+w+q+w'} \dc', \proc{d_k}{w', \ecase{c_m}{l}
{Q_l}_{l \in L}} :: (\G \semi \D, (d_k : C))}
\]
Renaming using Lemma~\ref{lem:renaming}, we get
$\cdot \semi \G_0 \semi \D_1, (c^+_m : A_{\ell}) \entailpot{q}
[c^+_m/c_m]Q_{\ell} :: (d_k : C)$. Next, we apply the
$\m{proc}_m$ rule
\[
\inferrule*[right = $\m{proc}_m$]
{\G_0 \potconf{E} \dc :: (\G \semi \D, \D_1, (c^+_m : A_{\ell})) \\
\cdot \semi \G_0 \semi \D_1, (c^+_m : A_{\ell}) \entailpot{q}
[c^+_m/c_m]Q_{\ell} :: (d_k : C)}
{\G_0 \potconf{E+q+w+w'} \dc', \proc{d_k}{w+w', [c^+_m/c_m]Q_{\ell}} ::
(\G \semi \D, (d_k : C))}
\]

\item Case ($\lolli_n C_s$) : $\W = \dc, \proc{d_k}{w, \esendch{c_m}
{e_n} \semi P}$ and \\
$\W' = \dc, \msg{c^+_m}{0, \esendch{c_m}
{e_n} \semi \fwd{c^+_m}{c_m}}, \proc{d_k}{w, [c^+_m/c_m]P}$.
First, we invert the $\m{proc}_m$ rule,
\[
\inferrule*[right = $\m{proc}_m$]
{\G_0 \potconf{E} \dc :: (\G \semi \D, \D_1, (e_\p : A),
(c_m : A \lolli B)) \\
\inferrule*[right = $\lolli L$]
{\cdot \semi \G \semi \D_1, (c_m : B) \entailpot{q} P :: (d_k : C)}
{\cdot \semi \G \semi \D_1, (e_\p : A), (c_m : A \lolli B)
\entailpot{q} \esendch{c_m}{e_\p} \semi P :: (d_k : C)}}
{\G_0 \potconf{E+q+w} \dc, \proc{d_k}{w, \esendch{c_m}
{e_\p} \semi P} :: (\G \semi \D, (d_k : C))}
\]
Using renaming (Lemma~\ref{lem:renaming}), we get
$\G \semi \D_1, (c^+_m : B) \entailpot{q} [c^+_m/c_m]P ::
(d_k : C)$. Next, we type the message
\[
\cdot \semi \G \semi (e_\p : A), (c_m : A \lolli B) \entailpot{0}
\esendch{c_m}{e_\p} \semi \fwd{c^+_m}{c_m} :: (c^+_m : B)
\]
Next, we apply the $\m{msg}$ rule,
\[
\inferrule*[right = $\m{msg}$]
{\G_0 \potconf{E} \dc :: (\G \semi \D, \D_1, (e_\p : A),
(c_m : A \lolli B)) \\
\cdot \semi \G \semi (e_\p : A), (c_m : A \lolli B) \entailpot{0}
\esendch{c_m}{e_\p} \semi \fwd{c^+_m}{c_m} :: (c^+_m : B)}
{\G_0 \potconf{E} \dc, \msg{c^+_m}{0, \esendch{c_m}
{e_\p} \semi \fwd{c^+_m}{c_m}} :: (\G \semi \D, \D_1,
(c^+_m : B))}
\]
Call this new configuration $\dc'$. Next, we apply the $\m{proc}_m$
rule
\[
\inferrule*[right = $\m{proc}_m$]
{\G_0 \potconf{E} \dc' :: (\G \semi \D, \D_1, (c^+_m : B)) \\
\cdot \semi \G \semi \D_1, (c^+_m : B) \entailpot{q}
[c^+_m/c_m]P :: (d_k : C)}
{\G_0 \potconf{E+q+w} \dc', \proc{d_k}{w, [c^+_m/c_m]P} ::
(\G \semi \D, (d_k : C))}
\]

\item Case ($\lolli C_r$) : $\W = \dc, \proc{c_m}{w', \erecvch{c_m}
{x_\p} \semi Q}, \msg{c^+_m}{w, \esendch{c_m}{e_\p} \semi
\fwd{c^+_m}{c_m}}$ and $\W' = \dc, \proc{c^+_m}{w+w',
[c^+_m/c_m][e_\p/x_\p]Q}$. First, inverting the $\m{proc}_m$
rule,
\[
\inferrule*[right = $\m{proc}_m$]
{\G_0 \potconf{E} \dc :: (\G \semi \D, \D_1, (e_\p : A)) \\
\inferrule*[right = $\lolli R$]
{\cdot \semi \G \semi \D_1, (x_\p : A) \entailpot{q} Q :: (c_m : B)}
{\cdot \semi \G \semi \D_1 \entailpot{q} \erecvch{c_m}{x_\p} \semi
Q :: (c_m : A \lolli B)}}
{\G_0 \potconf{E+q+w'} \dc, \proc{c_m}{w', \erecvch{c_m}{x_\p}
\semi Q} :: (\G \semi \D, (e_\p : A), (c_m : A \lolli B))}
\]
Call this new configuration $\dc'$. Next, we type the message.
\[
\cdot \semi \G \semi (e_\p : A), (c_m : A \lolli B) \entailpot{0}
\esendch{c_m}{e_\p} \semi \fwd{c^+_m}{c_m} :: (c^+_m : B)
\]
Inverting the $\m{msg}$ rule,
\[
\inferrule*[right = $\m{msg}$]
{\G_0 \potconf{E+q+w'} \dc' :: (\G \semi \D, (e_\p : A),
(c_m : A \lolli B)) \\
\cdot \semi \G \semi (e_\p : A), (c_m : A \lolli B) \entailpot{0}
\esendch{c_m}{e_\p} \semi \fwd{c^+_m}{c_m} :: (c^+_m : B)}
{\G_0 \potconf{E+q+w'+0+w'} \dc', \msg{c^+_m}{w, \esendch{c_m}
{e_\p} \semi \fwd{c^+_m}{c_m}} :: (\G \semi \D, (c^+_m : B))}
\]
By renaming using Lemma~\ref{lem:renaming},
$\cdot \semi \G \semi \D_1, (e_\p : A) \entailpot{q}
[c^+_m/c_m][e_\p/x_\p]Q :: (c^+_m : B)$.
Now, applying the $\m{proc}_m$ rule,
\[
\inferrule*[right = $\m{proc}_m$]
{\G_0 \potconf{E} \dc :: (\G \semi \D, \D_1, (e_\p : A)) \\
\cdot \semi \G \semi \D_1, (e_\p : A) \entailpot{q}
[c^+_m/c_m][e_\p/x_\p]Q :: (c^+_m : B)}
{\G_0 \potconf{E+q+w'} \dc, \proc{c^+_m}{w+w',
[c^+_m/c_m][e_\p/x_\p]Q} :: (\G \semi \D, (c^+_m : B))}
\]

\item Case ($\up C$) : $\W = \dc_1,
\proc{a_\s}{w', \eaccept{x_\l}{a_\s} \semi
P_{x_\l}}, \proc{c_m}{w, \eacquire{x_\l}{a_\s} \semi Q_{x_\l}}$ and
$\W' = \dc_1, \proc{a_\l}{w', P_{a_\l}}, \proc{c_m}{w, Q_{a_\l}}$.
Applying the $\m{proc}_\s$ rule first,

\[
\inferrule*[right = $\m{proc}_\s$]
{(a_\s : \up A_\l) \in \G_0, (a_\s : \up A_\l) \\
(\up A_\l, \up A_\l) \esync \quad
\G_0, (a_\s : \up A_\l) \potconf{E} \dc_1 :: (\G \semi \D, \D_1, \D_2) \quad
\mathcal{E}}
{\G_0, (a_\s : \up A_\l) \potconf{E+p+w'} \dc_1,
\proc{a_\s}{w', \eaccept{x_\l}{a_\s} \semi
P_{x_\l}} :: (\G, (a_\s : \up A_\l) \semi \D, \D_2)}
\]
where $\mathcal{E}$ is
\[
\inferrule*[right = $\up R$]
{\cdot \semi \G_0, (a_\s : \up A_\l) \semi \D_1 \entailpot{p}
P_{x_\l} :: (x_\l : A_\l)}
{\cdot \semi \G_0, (a_\s : \up A_\l) \semi \D_1 \entailpot{p}
\eaccept{x_\l}{a_\s} \semi
P_{x_\l} :: (a_\s : \up A_\l)}
\]
Call this new configuration $\dc_1'$.
Applying the $\m{proc}_m$ rule next,
\[
\inferrule*[right = $\m{proc}_m$]
{\G_0, (a_\s : \up A_\l) \potconf{E'} \dc_1' ::
(\G, (a_\s : \up A_\l) \semi \D, \D_2) \\
\inferrule*[right = $\up L$]
{\cdot \semi \G_0 \semi \D_2, (x_\l : A_\l) \entailpot{q} Q_{x_\l} :: (c_m : C)}
{\cdot \semi \G_0, (a_\s : \up A_\l) \semi \D_2 \entailpot{q} \eacquire{x_\l}{a_\s}
\semi Q_{x_\l} :: (c_m : C)}}
{\G_0 \potconf{E'+q+w} \dc_1', \proc{c_m}{w, \eacquire{x_\l}{a_\s}
\semi Q_{x_\l}} :: (\G, (a_\s : \up A_\l) \semi \D, (c_m : C))}
\]

From the first premise, we get by Lemma~\ref{lem:renaming},
$\cdot \semi \G_0, (a_\s : \up A_\l) \semi \D_1 \entailpot{p}
P_{a_\l} :: (a_\l : A_\l)$
while from the second premise, we get by Lemma~\ref{lem:renaming}
and Lemma~\ref{lem:proc-weakening}, $\cdot \semi \G_0, (a_\s : \up A_\l)
\semi \D_2, (a_\l : A_\l) \entailpot{q} Q_{a_\l} :: (c_m : C)$.
Reapplying the $\m{proc}_\l$ rule,

\[
\inferrule*[right = $\m{proc}_\l$]
{(a_\s : \up A_\l) \in \G_0, (a_\s : \up A_\l) \\
(A_\l, \up A_\l) \esync \\
\G_0, (a_\s : \up A_\l) \potconf{E} \dc_1 :: (\G \semi \D, \D_1, \D_2) \quad
\cdot \semi \G_0, (a_\s : \up A_\l) \semi \D_1 \entailpot{p}
P_{a_\l} :: (a_\l : A_\l)}
{\G_0, (a_\s : \up A_\l) \potconf{E+p+w'} \dc_1, \proc{a_\l}{w', P_{a_\l}} ::
(\G, (a_\s : \up A_\l) \semi \D, \D_2, (a_\l : A_\l))}
\]
Call this new configuration $\dc_1''$. Reapplying the $\m{proc}_m$ rule,
\[
\inferrule*[right = $\m{proc}_m$]
{\G_0, (a_\s : \up A_\l) \potconf{E'} \dc_1'' ::
(\G, (a_\s : \up A_\l) \semi \D, \D_2, (a_\l : A_\l)) \\
\cdot \semi \G_0, (a_\s : \up A_\l) \semi \D_2, (a_\l : A_\l) \entailpot{q}
Q_{a_\l} :: (c_m : C)}
{\G_0, (a_\s : \up A_\l) \potconf{E'+q+w} \dc_1'', \proc{c_m}{w,
Q_{a_\l}} :: (\G', (a_\s : \up A_\l) \semi \D', (c_m : C))}
\]

\item Case ($\down C$) : $\W = \dc_1, \proc{a_\l}{w', \edetach{x_\s}{a_\l}
\semi P_{x_\s}}, \proc{c_\c}{w, \erelease{x_\s}{a_\l} \semi Q_{x_\s}}$
and $\W' = \dc_1, \proc{a_\s}{w', P_{a_\s}},
\proc{c_\l}{w, Q_{a_\s}}$.
Applying the $\m{proc}_\l$ rule first,
\[
\inferrule*[right = $\m{proc}_\l$]
{(a_\s : A_\s) \in \G_0 \qquad
(\down A_\s, A_\s) \esync \qquad
\G_0 \potconf{E} \dc_1 :: (\G \semi \D, \D_1, \D_2) \qquad
\mathcal{E}}
{\G_0 \potconf{E+p+w'} \dc_1, \proc{a_\l}{w', \edetach{x_\s}{a_\l}
\semi P_{x_\s}} :: (\G, (a_\s : A_\s) \semi \D, \D_2, (a_\l : \down A_\s))}
\]
where $\mathcal{E}$ is
\[
\inferrule*[right = $\down R$]
{\cdot \semi \G_0 \semi \D_1 \entailpot{p} P_{x_\s} :: (x_\s : A_\s)}
{\cdot \semi \G_0 \semi \D_1 \entailpot{p} \edetach{x_\s}{a_\l}
\semi P_{x_\s} :: (a_\l : \down A_\s)}
\]
Call this configuration $\dc_1'$. Applying the $\m{proc}_m$ rule,
\[
\inferrule*[right = $\m{proc}_m$]
{
\inferrule*[right = $\down L$]
{\cdot \semi \G_0, (x_\s : A_\s) \semi \D_2 \entailpot{q}
Q_{x_\s} :: (c_m : C)}
{\cdot \semi \G_0 \semi \D_2, (a_\l : \down A_\s) \entailpot{q}
\erelease{x_\s}{a_\l} \semi Q_{x_\s} :: (c_m : C)} \\
\G_0 \potconf{E'} \dc_1' :: (\G, (a_\s : A_\s) \semi
\D, \D_2, (a_\l : \down A_\s))
}
{\G_0 \potconf{E'+q+w} \dc_1', \proc{c_\c}{w, \erelease{x_\s}{a_\l}
\semi Q_{x_\s}} :: (\G, (a_\s : A_\s) \semi \D, (c_m : C))}
\]
From the first premise, we get by Lemma~\ref{lem:renaming},
$\cdot \semi \G_0 \semi \D_1 \entailpot{p} P_{a_\s} :: (a_\s : A_\s)$.
From the second premise, by Lemma~\ref{lem:subst}
(contracting $a_\s : A_\s$ and $x_\s : A_\s$), we get
$\cdot \semi \G_0 \semi \D_2 \entailpot{q}
Q_{a_\s} :: (c_m : C)$. Finally, applying the $\m{proc}_\s$ rule,
\[
\inferrule*[right = $\m{proc}_\s$]
{(a_\s : A_\s) \in \G_0 \\
(A_\s, A_\s) \esync \\
\G_0 \potconf{E} \dc_1 :: (\G \semi \D, \D_1, \D_2) \\
\cdot \semi \G_0 \semi \D_1 \entailpot{p} P_{a_\s} :: (a_\s : A_\s)}
{\G_0 \potconf{E+p+w'} \dc_1, \proc{a_\s}{w', P_{a_\s}} :: (\G, (a_\s : A_\s)
\semi \D, \D_2)}
\]
Call this new configuration $\dc_1''$. Applying the $\m{proc}_m$ rule,
\[
\inferrule*[right = $\m{proc}_\c$]
{\G_0 \potconf{E'} \dc_1'' :: (\G, (a_\s : A_\s) \semi \D, \D_2) \\
\cdot \semi \G_0 \semi \D_2 \entailpot{q}
Q_{a_\s} :: (c_m : C)}
{\G_0 \potconf{E'+q+w} \dc_1'', \proc{c_m}{w, Q_{a_\s}} ::
(\G, (a_\s : A_\s) \semi \D, (c_m : C))}
\]

\end{itemize}

\end{proof}

\begin{definition}
A process $\proc{c_m}{w, P}$ is said to be poised if it is trying to
receive a message on $c_m$. A message $\msg{c_m}{w, M}$ is said
to be poised if it is trying to send a message along $c_m$.
A configuration $\W$ is said to be poised if all the
processes and messages in $\W$ are poised. Concretely,
the following processes are poised.
\begin{itemize}
\item $\proc{c_m}{w, \fwdn{c_m}{d_m}}$
\item $\proc{c_m}{w, \ecase{c_m}{l_i}{P_i}_{i \in I}}$
\item $\proc{c_m}{w, \erecvch{c_m}{x_\p} \semi P}$
\item $\proc{c_m}{w, \erecvch{c_m}{x} \semi P}$
\item $\proc{c_\s}{w, \eaccept{c_\l}{c_\s} \semi P}$
\item $\proc{c_\l}{w, \edetach{c_\s}{c_\l} \semi P}$
\item $\proc{c_m}{w, \eget{c_m}{r} \semi P}$
\end{itemize}
Similarly, the following messages are poised.
\begin{itemize}
\item $\msg{c_m}{w, \esendl{c_m}{l_k} \semi P}$
\item $\msg{c_m}{w, \esendch{c_m}{e_n} \semi P}$
\item $\msg{c_m}{w, \esendch{c_m}{N} \semi P}$
\item $\msg{c_m}{w, \eclose{c_m}}$
\item $\msg{c_m}{w, \epay{c_m}{r} \semi P}$
\end{itemize}
\end{definition}

\begin{theorem}[Process Progress]\label{thm:proc_prog}
Consider a closed well-formed and well-typed configuration $\W$
such that $\G_0 \potconf{E} \W :: (\G \semi \D)$. Either $\W$ is
poised, or it can take a step, i.e., $\W \step \W'$, or some
process in $\W$ is blocked along $a_\s$ for some shared channel
$a_\s$ and there is a process $\proc{a_\l}{w, P} \in \W$.
\end{theorem}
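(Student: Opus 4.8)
The plan is to prove Theorem~\ref{thm:proc_prog} by induction on the typing derivation $\G_0 \potconf{E} \W :: (\G \semi \D)$, following the standard structure for session-typed progress proofs~\cite{PfenningFOSSACS2015, BalzerICFP2017}. The base case is the empty configuration, which is trivially poised. For the inductive step, we have $\W = \W', \mathcal{O}$ where $\mathcal{O}$ is the rightmost semantic object ($\proc{c_m}{w, P}$ or $\msg{c_m}{w, N}$) and, by inversion on the configuration typing rule used, the prefix $\W'$ is itself well-typed by a shorter derivation. We apply the induction hypothesis to $\W'$, yielding a three-way case split: either $\W'$ can take a step, or some process in $\W'$ is blocked on a shared channel whose linear counterpart is present, or $\W'$ is poised.

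First I would dispatch the easy cases. If $\W'$ can step, then so can $\W$ (semantics is closed under adding objects to the right). If some process in $\W'$ is blocked on $a_\s$ with $\proc{a_\l}{w, P} \in \W'$, then the same holds in $\W$. So the interesting case is when $\W'$ is poised, and I proceed by case analysis on the structure of $P$ (or $N$). If $\mathcal{O}$ is itself poised — i.e., $P$ is a receiving/accepting construct or $N$ is a sending construct, matching the enumerated lists in the definition above — then $\W = \W', \mathcal{O}$ is poised and we are done. Otherwise $P$ is an \emph{active} construct: a spawn ($\ecut{}{}{}{}$), a $\m{tick}$/$\m{work}$, a sending/paying process, a forwarding process, a $\m{close}$, or an $\m{acquire}$/$\m{release}$. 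For spawn, $\m{tick}$, $\m{work}$, the sending constructs ($\oplus\,C_s$, $\with\,C_s$, $\tensor/\lolli\,C_s$, $\product/\arrow\,C_s$, $\paypot/\getpot\,C_s$), and $\m{close}$, the corresponding semantic rule fires unconditionally, so $\W$ steps. For forwarding $\proc{c_m}{w, \fwd{c_m}{d_m}}$: $c_m$ is the rightmost-offered channel, and by inversion on configuration typing, the provider of $d_m$ (a $\m{msg}$ or $\m{proc}$) lies to the left in $\W'$; using the permutation lemmas (Lemmas~\ref{lem:perm-msg}, \ref{lem:perm-proc}) we bring a message adjacent to the forward and apply $\m{fwd}^+$, or apply $\m{fwd}^-$ against the unique client to the right — but since $c_m$ is rightmost-provided it has no client in $\W$, so we must reason about its client being in $\D$ (external), meaning the forward is effectively poised in that sub-case, or we argue that since $d_m$'s provider is a process in $\W'$ which by IH is poised, it is a message, enabling $\m{fwd}^+$.

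The genuinely delicate case — and the main obstacle — is $\m{acquire}$: $P = \eacquire{x_\l}{a_\s} \semi Q$. Here $\mathcal{O}$ is not poised, yet whether it steps depends on \emph{another} process, namely the one offering $a_\s$. Two things can happen: either there is a process $\proc{a_\s}{w', \eaccept{x_\l}{a_\s} \semi P'} \in \W'$ — by the permutation lemma for acquire (Lemma~\ref{lem:perm-acquire}) we can move our acquiring process adjacent to it and fire rule $\up C$, so $\W$ steps — or else the shared channel $a_\s$ has already been acquired, meaning the provider is currently running in its linear phase as $\proc{a_\l}{w'', P''} \in \W$ (the invariant that a contract's offering channel exists at mode $\s$ or $\l$, never both simultaneously, is what makes this dichotomy exhaustive); in that second situation our process is precisely \emph{blocked along $a_\s$} with $\proc{a_\l}{w'', P''}$ present, which is the third disjunct of the theorem. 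Establishing that these are the only two possibilities requires the configuration-typing invariant that $a_\s \in \G_0$ implies exactly one object provides either $a_\s$ or its linear image $a_\l$, which follows from well-formedness (unique channels) together with the $\m{proc}_\s$/$\m{proc}_\l$ rules. A symmetric but easier argument handles $\m{release}$ via Lemma~\ref{lem:perm-release}, where the detaching partner $\proc{a_\l}{\ldots, \edetach{x_\s}{a_\l} \semi P'}$ is guaranteed to exist (a released channel was necessarily acquired), so $\down C$ always fires and $\W$ steps. I expect the bookkeeping around which channel is "rightmost" and the repeated appeals to the permutation lemmas to be the most tedious part, but the conceptual crux is the acquire case and its reliance on the mode-consistency invariant from Definition~\ref{def:proc_typ} and Lemma~\ref{lem:invariants}.
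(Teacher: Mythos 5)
Your proposal follows the same route as the paper's proof: induction peeling off the rightmost semantic object, the three-way induction hypothesis on the prefix, a case analysis on the structure of the last object when the prefix is poised, the permutation lemmas (Lemmas~\ref{lem:perm-msg}--\ref{lem:perm-release}) to juxtapose communicating partners, and, at the crux, the acquire case resolved by the dichotomy between a provider still in its shared phase (step by $\up C$) and a provider already in its linear phase (the blocked disjunct). This matches the paper's argument, including the release case, so on the conceptual level the proposal is sound.

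Two places in your case analysis would not go through as written, however. First, your split of the last process into ``poised'' versus ``active'' constructs misses the client-side receives: a process such as $\ecase{x_m}{\ell}{Q_\ell}$ with $x_m : \ichoice{\ell : A_\ell}$ in its context, or a client $\m{recv}$, $\m{wait}$, or $\m{get}$ on a used channel, is not poised (poised means receiving on the \emph{offered} channel) and is not in your list of unconditionally stepping constructs. These cases need the argument the paper gives for $\oplus L$: since the prefix is poised and no poised process can offer a channel whose type requires the provider to send, the provider of $x_m$ must be a poised \emph{message}, which Lemma~\ref{lem:perm-msg} moves adjacent so the corresponding receive rule (e.g.\ $\oplus C_r$) fires. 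Second, your forwarding argument contains a non sequitur: from the prefix being poised you conclude that the provider of $d_m$ ``is a message,'' but a poised provider may equally be a process waiting to receive on $d_m$ when $d_m$ has negative polarity, in which case the forward can neither step nor be dismissed by appeal to an external client. The paper resolves this by polarizing forwards: $\fwdn{c_m}{d_m}$ is poised by definition, and only for $\fwdp{c_m}{d_k}$ does one argue that a poised provider of a positively-typed channel must be a message, enabling rule $\m{fwd}^+$. With these two repairs---both instances of machinery you already invoke elsewhere---your proof coincides with the paper's.
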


\begin{proof}
Either $\W = \W_1, \proc{c_m}{w, P}$ or $\W = \W_1, \msg{c_m}
{w, M}$. In either case, either $\W_1 \step \W_1'$, in which case
we are done. Or there is a process in $\W_1$ blocked along $a_\s$
in which case, we are also done. Hence, in the final case, we get
$\W_1$ is poised and there is no process in $\W_1$ blocked along
$a_\s$. Now, we case analyze on the structure of the process or
message. We start with processes.

\begin{itemize}
\item Case ($\{\}E_{mn}$) : In each case, the process spontaneously
steps by spawning another process.

\item Case ($\m{fwd}^+ : \proc{c_m}{w, \fwdp{c_m}{d_k}}$) :
\[
\cdot \semi \G \semi (d_k : A) \entailpot{0} \fwdp{c_m}{d_k} ::
(c_m : A)
\]
Since $\W_1$ is poised, there must be a message in $\W_1$ offering
along $d_m : A$. We use Lemma~\ref{lem:perm-msg} to move the
message just left of the process, and then apply the $\m{fwd}^+$
rule. Hence, $\W$ can step.

\item Case ($\m{fwd}^- : \proc{c_m}{w, \fwdn{c_m}{d_m}}$) :
This process is poised, hence $\W$ is poised.

\item Case ($\oplus R : \proc{c_m}{w, \esendl{c_m}{k}
\semi P}$) : $\W$ steps using $\oplus C_s$ rule.

\item Case ($\oplus L : \proc{d_k}{w, \ecase{c_m}{l}{Q_l}_{l \in
L}}$) :
\[
\cdot \semi \G \semi (c_m : \ichoice{l : A_l}_{l \in L})
\entailpot{q} \ecase{c_m}{l}{Q_l}_{l \in L} :: (d_k : C)
\]
Since $\W_1$ is poised, there must be a message in $\W_1$ offering
along $c_m : \ichoice{l : A_l}_{l \in L}$. We use Lemma
\ref{lem:perm-msg} to move the message just left of the process,
and then apply the $\oplus C_r$ rule. Hence, $\W$ can step.

\item Case ($\lolli R : \proc{c_m}{w, \erecvch{c_m}{x_n} \semi
P}$) : This process is poised, hence $\W$ is poised.

\item Case ($\lolli L : \proc{c_m}{w, \esendch{c_m}{e_n} \semi
Q}$) : $\W$ steps using $\lolli C_s$ rule.

\item Case ($\up R : \proc{c_\s}{\eaccept{c_\l}{c_\s} \semi P}$) :
This process is poised, hence $\W$ is poised.

\item Case ($\up L : \proc{c_m}{w, \eacquire{a_\l}{a_\s}
\semi Q}$) :
\[
\cdot \semi \G, (a_\s : \up A_\l) \semi \D \entailpot{q}
\eacquire{a_\l}{a_\s} \semi Q :: (c_m : C)
\]
There must be some process in $\W_1$ that offers on $a_\s$.
Either this process is in shared mode or linear mode. If the process
is in shared mode, and since $\W_1$ is poised, the process must
be $\proc{a_\s}{w', \eaccept{a_\l}{a_\s} \semi P}$ in which case,
we can use Lemma~\ref{lem:perm-acquire} to move the two processes
next to each other and
$\W$ can step using $\up C$ rule. Or the process is in linear mode
in which case the acquiring process is blocked and there is some
$\proc{a_\l}{w', P}$ in $\W$.

\item Case ($\down R : \proc{c_\s}{\edetach{c_\l}{c_\s} \semi P}$ :
This process is poised, hence $\W$ is poised.

\item Case ($\down L : \proc{c_\c}{w, \erelease{a_\l}{a_\s}
\semi Q}$) :
\[
\cdot \semi \G \semi \D, (a_\l : \down A_\s) \entailpot{q}
\erelease{a_\l}{a_\s} \semi Q :: (c_m : C)
\]
There must be some process in $\W_1$ that offers along $a_\l$.
Since $\W_1$ is poised, this process must be $\proc{a_\l}{w',
\edetach{a_\s}{a_\l} \semi P}$ in which case we use Lemma
\ref{lem:perm-release} to move the releasing process next to
the detaching process and $\W$ can step using
$\down C$ rule.

\end{itemize}

That completes the cases where the last predicate is a process. Now,
we consider the cases where the last predicate is a message.

\begin{itemize}

\item Case ($\m{fwd}^- : \msg{e_k}{w, M(c_m)}$) : There
must be some process in $\W_1$ that offers along $d_m$. Since
$\W_1$ is poised, if there is a forwarding process $\proc{c_m}
{w', \fwdn{c_m}{d_m}}$ in $\W_1$, then $\W$ steps using
$\m{fwd}^-$ rule. Hence, in the following cases, we assume that
the offering process used by the message will not be a forwarding
process.

\item Case ($\oplus : \msg{c_m}{\esendl{c_m}{k} \semi M}$) :
This message is poised, hence $\W$ is poised.

\item Case ($\lolli : \msg{c^+_m}{\esendch{c_m}{e_\p} \semi
\fwd{c^+_m}{c_m}}$) : There must be a process in $\W_1$
that offers along $c_m$. Since $\W_1$ is poised, this process must
be $\proc{c_m}{\erecvch{c_m}{x_n} \semi P}$. We move the
process to the left of this message using Lemma~\ref{lem:perm-proc}.
And then, $\W$ can step using $\lolli C_r$ rule.
\end{itemize}
\end{proof}

%% Bibliography
\bibliography{refs,lit,publications}

\end{document}